\definecolor{MyBlue}{cmyk}{1,0.13,0,0.63}
\definecolor{MyGreen}{cmyk}{0.91,0,0.88,0.52}
\newcommand{\mylinkcolor}{MyBlue}
\newcommand{\mycitecolor}{MyGreen}
\newcommand{\myurlcolor}{black}
\newtheorem{thm}{Theorem}[section]
\newtheorem*{thm*}{Theorem}
\newtheorem{cor}[thm]{Corollary}
\newtheorem{lemma}[thm]{Lemma}
\newtheorem{prop}[thm]{Proposition}
\theoremstyle{definition}
\newtheorem{defn}[thm]{Definition}
\newtheorem{assumption}[thm]{Assumption}
\theoremstyle{remark}
\newtheorem{remark}[thm]{Remark}
\newtheorem{example}[thm]{Example}
\newtheorem{remarks}[thm]{Remarks}
\newtheorem{examples}[thm]{Examples}
\newcommand{\R}{\ensuremath{\mathbb{R}}}
\newcommand{\N}{\ensuremath{\mathbb{N}}}
\newcommand{\Z}{\ensuremath{\mathbb{Z}}}
\newcommand{\C}{\ensuremath{\mathbb{C}}}
\newcommand{\T}{\ensuremath{\mathbb{T}}}
\def\calO{\mathcal{O}}
\def\calK{\mathcal{K}}
\def\calB{\mathcal{B}}
\def\calH{\mathcal{H}}
\def\calF{\mathcal{F}}
\def\calN{\mathcal{N}}
\def\calP{\mathcal{P}}
\def\calQ{\mathcal{Q}}
\def\calW{\mathcal{W}}
\def\supp{\mathrm{supp}}
\def\S{\mathbb{S}}
\def\e{\mathrm{e}}
\newcommand{\wh}{\ensuremath{\widehat}}
\newcommand{\wt}{\ensuremath{\widetilde}}
\DeclareMathOperator{\End}{End}
\newcommand{\frakh}{{\mathfrak h}}
\newcommand{\frakC}{{\mathfrak C}}
\newcommand{\one}{{\bf 1}}
\newcommand{\ol}{\overline}
\DeclareMathOperator{\Aut}{Aut}
\theoremstyle{definition}
\DeclareMathOperator{\Index}{Index}
\DeclareMathOperator{\Mult}{Mult}
\DeclareMathOperator{\Ext}{Ext}
\DeclareMathOperator{\Ker}{Ker}
\DeclareMathOperator{\Homeo}{Homeo}
\DeclareMathOperator{\Ran}{Ran}
\newcommand{\Cl}{\C\ell}
\newcommand{\rst}[1]{\ensuremath{{\mathbin\upharpoonright}%
\raise-.5ex\hbox{$#1$}}}
\newcommand{\Rmnum}[1]{\expandafter\@sl217--242owromancap\romannumeral #1@}
\author{C. Bourne}
\address{Institute of Liberal Arts and Sciences, 
Graduate School of Mathematics, Nagoya University,
Furo-cho, Chikusa-ku, Nagoya, 464-8601, Japan 
\emph{and} RIKEN iTHEMS, 2-1 Hirosawa, Wako, Saitama 351-0198, Japan}
\email{cbourne@nagoya-u.jp}
\date{\today}
\begin{document}

\title{Interfaces   of discrete systems -- spectral and index properties}

\begin{abstract}
We develop a general  mathematical framework to study mixtures of different physical systems 
brought together on a discrete interface. Adapting work by M\u{a}ntoiu et al., we use an operator algebraic framework 
such that the bulk systems at infinity of the mixture are recovered via the spatial 
asymptotics of the  operators on the interface. Fixing an asymptotics and interface algebra, we show how the 
essential spectrum and topological properties can be inferred from the bulk systems at infinity.
By working with Hilbert $C^*$-modules, we can further refine these results with respect to an ambient 
algebra of observables.
\end{abstract}

\maketitle

\tableofcontents

\parindent=0.0in
\parskip=0.06in

\section{Introduction}

The bulk-boundary correspondence of topological phases of matter connects  bulk (boundary-free) properties of a physical system 
 to robust effects localised on its boundary. The phenomenon also provides an interesting connection between materials science 
 with  abstract mathematical concepts such as $K$-theory. In its most basic form, a certain topological invariant, defined on an infinite system 
without boundary, is then connected to a property localised at the co-dimension one boundary of the system restricted to a half space. 
See the monograph~\cite{PSBbook} for a detailed exposition.

The bulk-boundary correspondence has since  been generalised to a much wider variety of geometric settings.
We do not attempt to give a comprehensive review here, though see~\cite{ThiangEdge, LT22, DZ24} for example. 
One may also consider more general defects such as corners and and subsystems of co-dimension two or higher~\cite{Hayashi23, Kubota21, ProdanKK, ProdanHigherOrder}. 
Another possible generalisation is to consider an  interface of two or more systems, see~\cite{KSBV, DZ24, Iwatsuka} for example, where properties of the interface are determined 
by a relative phase of the bulk systems.

Given this plethora of possible systems and boundaries/interfaces/defects one may consider, a general framework to study these systems 
would certainly be desirable. For the case of defects, such a framework has been  provided by Polo Ojito, Prodan and Stoiber~\cite{ProdanHigherOrder}. 
Here our emphasis is on interfaces and we attempt to give a general mathematical formalism for studying mixtures of 
general discrete systems. The formalism certainly has many similarities with and has drawn inspiration from~\cite{ProdanHigherOrder, PPSadiabatic}. 
One important difference is that we  work under the framework of $C^*$-modules.

To give an example as a rough introduction to our approach,  we consider (possibly several) quantum mechanical systems modeled on 
the fixed Hilbert space $\ell^2(\Z^d, \C^N)$. If we wish to consider an interface of dimension $l \leq d$, we take 
the co-dimension $l$ space $\frakh = \ell^2(\Z^{d-l}, \C^N)$, which is placed at every site of the 
interface $\Z^l$ as modeled by Hilbert space $\ell^2(\Z^l, \frakh)$. Clearly 
\[
   \ell^2(\Z^l, \frakh) \cong \ell^2(\Z^l) \otimes \ell^2(\Z^{d-l}, \C^N) \cong \ell^2(\Z^d, \C^N),
\] 
and we recover the total (bulk) Hilbert space. However, the reason for this decomposition is that we can now mix 
different co-dimension $l$ subsystems in $\calB( \frakh)$ in the interface Hilbert space $\ell^2(\Z^l, \frakh)$ via   operators on 
$\ell^2(\Z^l, \frakh)$.

A natural way to consider an interface of $M$ bulk systems on $\Z^l$ is to take a (coarse) 
partition $\{Y_1,\ldots, Y_M\}$ of $\Z^l$ such that for all $j =1,\ldots, M$ each subset $Y_j$ contains a discrete ball 
of arbitrarily large radius. If $\{T_1,\ldots, T_M\}$ are the bulk dynamics,  we can define interface dynamics by 
adding together the compressions of $T_j$ to $Y_j$. 
On the other hand, from the perspective of coarse geometry, 
we do not expect any finite or compact change to an interface to influence its key properties. 
So we expect a clear distinction between the different bulk systems to only become apparent 
in the spatial limit at infinity. As such,  we allow quite general interface 
dynamics, but where   the bulk systems specify the spatial asymptotics of the interface dynamics.

More generally, 
our aim is to consider a wide variety of discrete interfaces. So 
we take $\Gamma$  a countably infinite discrete set and 
$\calH = \ell^2(\Gamma, \frakh)$ as the interface Hilbert space  mixing systems in $\frakh$, which describes 
the    $\Gamma$-defect/co-dimension of a bulk system.  We wish to freely move around 
this interface, so we will assume that $\Gamma$ is a discrete abelian group.
We also  suppose that the $\Gamma$-defect dynamics of all the bulk systems of interest 
can be described by a fixed $C^*$-algebra of observables $B \subset \calB(\frakh)$. 
In the case that $\Gamma = \Z$, $B$ is the algebra of the co-dimension one system, such as the edge  
of a two-dimensional bulk.
The defect operators of interest in $\calB(\frakh)$ may change depending on which bulk system we study, 
but we assume the defect  Hilbert space 
$\frakh$ and $C^*$-algebra $B$ are the same. 
Using the internal algebra 
$B \subset \calB(\frakh)$, we refine the interface Hilbert space $\ell^2(\Gamma, \mathfrak{h})$ to the 
$C^*$-module $\ell^2(\Gamma, B)$ (definitions and basic properties 
are given in the appendix). 

The interface dynamics we consider are adjointable operators $T$ on the $C^*$-module $\ell^2(\Gamma, B)$ 
generated by discrete shifts and multiplication by function in $\ell^\infty(\Gamma, B)$ (such operators are sometimes 
called band-dominated).
The operator $T$ may be  self-adjoint or unitary, which is relevant for Hamiltonian, Floquet and quantum walk 
systems. The asymptotic behaviour of $T$ is controlled so that the various bulk systems can be recovered by taking an appropriate limit at infinity of $T$.

Having introduced this quite general model, our aims for this work are quite modest. Often the property of 
interest in an interface system is closely related to the spectrum the interface dynamics, such as a filling of a bulk spectral gap. 
Computing the spectrum for operators on a  generic interface is quite challenging, though as a first step we can consider the 
essential spectrum. For operators on a Hilbert $C^*$-module, this can be generalised to the essential spectrum 
relative to the ambient $C^*$-algebra $B$. 
To compute this $B$-essential spectrum, an operator algebraic framework has already been 
extensively studied by M\u{a}ntoiu, Georgescu and others~\cite{Mantoiu02, Georgescu1, AMP, Georgescu2, MantoiuGpoid}. 
We adapt this framework to our interface $C^*$-module and show that the $B$-essential spectrum of an interface 
operator is purely determined by the bulk systems at infinity.

A key mathematical element of topological phases is that the bulk system(s) posses a spectral gap. If $\ell^2(\Gamma, B)$ 
represents an interface of a family of (uniformly) gapped bulk systems, we show that an `interface index' can be defined 
via a Fredholm operator $F$ on the interface $C^*$-module $\ell^2(\Gamma, B)$. This index takes values in the $K$-theory 
of the  $\Gamma$-defect algebra $B$. If the system also has free-fermionic symmetries (see~\cite{AMZ} for example), the 
degree of the $K$-theory group will change and the possible symmetries exhaust all 
real and complex $K$-theory groups of $B$. Appealing to general Kasparov theory, we can show that the interface index is 
related to bulk topological properties via a  $K$-theoretic  boundary map, which establishes a weak bulk-interface correspondence.

Even in the case that $B=\C$, computing the value of the interface index in $K_n(\C)$ is quite challenging for complicated 
interfaces. By imposing the strong restriction that the number of bulk systems at infinity are finite and that each system is spatially separated from each other, 
we can decompose the general interface index into a signed sum of refined indices that are related to each bulk system separately. 
In the case of two bulk systems mixed together in a one-dimensional interface/junction/domain wall ($\Gamma = \Z$), we recover a relative index 
$[F] = [F_L] - [F_R]$, where $F_{L/R}$ is a Fredholm operator related to the 
bulk system at $\pm \infty$. More generally, the interface index is trivial when the contributions from each bulk component add to the 
identity. We see this result as a first step towards more comprehensive studies of interfaces/defects whose 
bulk systems may have non-trivial intersection.

% Outline

In Section \ref{sec:DomainWall}, we first consider the relatively simple setting of a one-dimensional interface between two systems, 
often called a domain wall or junction in the physics literature. 
We use this example to give a gentle introduction to our general framework, which is introduced in Section \ref{sec:CStar_mod_interface} 
along with the spectral-theoretic results. A non-exhaustive collection of examples are then considered in Section \ref{sec:B-valued_examples} 
to give a flavour of what interfaces are possible.
In Section \ref{sec:Interface_index} we consider the $K$-theoretic properties of this interface, where the interface index is defined as well as 
its basic decomposition properties. Appendix \ref{sec:Kasparov_review} and \ref{sec:extensions} provide a basic introduction to the 
relevant background on Kasparov theory and the extension semigroup of $C^*$-algebras.

\subsubsection*{Acknowledgements}
This work is supported by a JSPS Kakenhi Grant-in-Aid (24K06756) and has greatly benefited from discussions 
with B. Mesland,  D. Polo Ojito, E. Prodan, S. Richard and T. Stoiber.

\section{Warm-up: domain wall ($\Gamma = \Z$)} \label{sec:DomainWall}

Before looking at the  general setting, let us start with a one-dimensional example. 
We consider two `bulk' systems, whose dynamics on a co-dimension one subsystem or boundary 
is described by a unital and separable $C^*$-algebra of observables 
$B \subset \calB(\mathfrak{h})$. 
To consider a one-dimensional interface of these systems, we take the lattice $\Z$ and 
Hilbert space $\ell^2(\Z, \mathfrak{h}) \cong \ell^2(\Z) \otimes \mathfrak{h}$. Note 
that although the interface $\Z$ is one-dimensional, $\mathfrak{h}$ may be describing a system of 
arbitrary dimension.

The interface dynamics $T \in \calB\big( \ell^2(\Z, \frakh) \big)$ should be such that the limits at $\pm \infty$ 
 recover operators that act on separate bulk systems that can be considered without reference to the 
interface. That is, our bulk systems $T_L, T_R$ on $\ell^2(\Z, \frakh)$ (with $\ell^2(\Z, \frakh)$ now considered as a bulk Hilbert space) 
arise as a spatial limit at infinity of the 
interface operator $T$.\footnote{One may instead be interested 
in unbounded operators  affiliated to the relevant algebra of observables. Because our setting is discrete systems, the 
operators we are interested in studying (discrete Schr\"{o}dinger operators, quantum walk and Floquet operators) are bounded. 
So we will work in this simpler setting.}

Of course, we may consider a model where $T=T_L$ for $x< 0$ and $T=T_R$ for $x\geq 0$ and there is a 
clear boundary between the two systems. But if we are considering systems at a quantum scale, we can not expect 
that such a clean boundary can be realised in general. So instead we consider a mixing of the two systems on the interface $\Z$, 
but with asymptotics such that we can distinguish the two bulk systems at $\pm \infty$.

Using the co-dimension one observable algebra $B \subset \calB(\mathfrak{h})$, we can extract more refined information. Namely, 
rather than the Hilbert space $\ell^2(\Z, \mathfrak{h})$, we consider the space $\ell^2(\Z, B)$, which is 
a right-$B$ Hilbert $C^*$-module (see Appendix \ref{sec:Kasparov_review} for basic definitions and properties). We can think of 
$\ell^2(\Z, B)$ as a one-dimensional interface of the fixed algebra $B$ that is describing the co-dimension one dynamics. 
In the language of $C^*$-modules, 
the dynamics of this interface are modeled by adjointable operators $T \in \End_B( \ell^2(\Z, B) ) \cong \Mult\big( \calK(\ell^2(\Z)) \otimes B \big)$, 
where $\Mult(A)$ is the multiplier algebra of $A$ (the largest possibly unitisation).

Our key hypothesis is that this mixing operator $T \in \End_B( \ell^2(\Z, B) ) $  can be described as some combination of discrete shifts 
and pointwise multiplication 
\[
   T = \sum_{n\in \Z}^{\text{finite}} S^n f_n, \qquad  (S\psi)(x) = \psi(x-1), \qquad f_n \in \ell^\infty(\Z, B), \qquad (f_n\psi)(x) = f_n(x) \psi(x)
\]
with $x \in \Z$ and $\psi \in \ell^2(\Z, B)$.
Operators that can be norm-approximated by such sums  are called band-dominated in the Hilbert space 
literature (see~\cite{RSS98} for example).

Understanding $\calH = \ell^2(\Z, B)$ as a domain wall between two separate systems, we  further 
impose conditions on the multiplication operators $f \in  \ell^\infty(\Z, B)$. Namely, if $T$ is modeling the mixed system 
of $T_L$ and $T_R$, we expect the limits at $-\infty$ and $+\infty$ to recover these operators. Hence, we restrict our 
attention to multiplication operators
\[
  f \in A := \big\{ f \in  \ell^\infty( \Z, B) \, \big| \, \lim_{x \to \pm \infty} f(x) =: f(\pm \infty) \in B \text{ exists} \big\}. 
\]
Such an algebra is invariant under the natural action of $\Z$ on $\ell^\infty( \Z, B)$ by translations, 
$\alpha: \Z \to \Aut\big( \ell^\infty( \Z, B) \big)$, $\big(\alpha_k (f) \big)(x) = f(x-k)$ for any $f \in A$ and $k,x \in \Z$.
We also impose the following technical assumption that 
the range of  $f \in A$ for all such $f\in A$ recovers the ambient $C^*$-algebra $B$ of the co-dimension one system. 
In the case that $B = M_N(\C)$, this assumption ensures that the image of $f(x)$ in $B$  
does not contain a block matrix of $0$s. That is, we take functions that see all degrees of freedom in $B$.
With these assumptions, the Gelfand--Naimark Theorem implies that we can explicitly characterise the algebra of multiplication operators  
as $A \cong C( \ol{\Z}, B)$, where  $\ol{\Z} = \Z \cup \{-\infty, + \infty\}$ is the closure of $\Z$ in $[-\infty, \infty]$,  the two-point compactification 
of $\R$.

Of course, the interface dynamics $T$ also involves the discrete shift operators $\{S^k\}_{k \in \Z}$ on 
$\ell^2(\Z, B)$. Conjugating functions $f \in A$ by these shift operators implements the translation action 
$S^k f S^{-k} = \alpha_k(f)$ for all $k \in \Z$. This means that the interface operator $T$ can be considered as an element 
of the crossed product $C^*$-algebra 
\[
   T \in A \rtimes \Z = C( \ol{\Z}, B) \rtimes \Z \subset \End_B(\ell^2(\Z, B)) \cong  \Mult\big( \calK(\ell^2(\Z)) \otimes B \big),
\]
see the texts~\cite{Pedersen, Wil07} for further details on crossed product $C^*$-algebras.
Because $\Z$ is amenable, we do not distinguish between the full and reduced crossed product.
The algebra $A=  C(\ol{\Z}, B)$ and its crossed product by $\Z$  fit into the short exact sequences,
\begin{align} \label{eq:DW_extension}
   0 \to C_0( \Z, B) \to &C( \ol{\Z} , B) \to C( \{-\infty, +\infty\}, B) \to 0, \nonumber \\
   0 \to C_0( \Z, B)\rtimes \Z \to &C( \ol{\Z} , B)\rtimes \Z \to C( \{-\infty, +\infty\}, B)\rtimes \Z \to 0.
\end{align}
Furthermore, we can simplify
\[
   C( \{-\infty, +\infty\}, B) \cong B^{\oplus 2}, \quad 
   C( \{-\infty, +\infty\}, B)\rtimes \Z \cong \big( B \rtimes \Z\big)^{\oplus 2}, \quad 
   C_0( \Z, B)\rtimes \Z \cong \calK\big(\ell^2(\Z) \big) \otimes B.
\]
In this example, we have taken the action of $\Z$ on $B$ to be trivial, so  $B \rtimes \Z \cong C(\T, B)$, though we will generally 
work in the case that $\Z$ also acts non-trivially on $B$. 
Note that $C_0( \Z, B)\rtimes \Z \cong \calK\big(\ell^2(\Z) \big) \otimes B$ also holds for non-trivial 
actions by Takai duality~\cite[Theorem 7.9.3]{Pedersen}.

The quotient $C( \ol{\Z} , B) \to C( \{-\infty, +\infty\}, B) \cong B^{\oplus 2}$ is precisely the evaluation of 
$f \in C(\ol{\Z}, B)$ at the points $\pm \infty$. Hence the quotient localises the mixed system at the spatial 
location of the bulk systems at infinity.
However, the algebra $B \subset \calB(\frakh)$ is a co-dimension one algebra of the bulk. So  to 
fully recover the bulk systems, we also need to incorporate the $\Z$-dynamics on $B$ 
 via an action $\beta: \Z \to \Aut(B)$. 
Thus, for $f \in A=C( \ol{\Z} , B)$, our action $\alpha: \Z \to \Aut(A)$ is given by $\big(\alpha_k(f)\big)(x) = \beta_k\big( f(x-k) \big)$ 
for $k, x \in \Z$ and $\big( \alpha_k(f) \big)( \pm \infty) = \beta_k\big( f(\pm \infty) \big)$. 
Localising the system at infinity, the crossed product $ C( \{-\infty, +\infty\}, B)\rtimes \Z \cong \big( B \rtimes \Z\big)^{\oplus 2}$ 
recovers the two  bulk systems. In particular, at $\pm \infty$ the separate bulk system $B \rtimes \Z$ can be faithfully represented 
on $\ell^2(\Z, \frakh)$, which can now be considered as the bulk Hilbert space. We remark that 
while the details are somewhat involved, the map from the interface to a bulk system is quite simple, namely 
\[
   q_{L/R}(T) = \Big( \sum_n S^n f_n \Big) = \sum_n S^n f_n(\pm \infty) = T_{L/R} \in B\rtimes \Z.
\]

To summarise our discussion thus far, we model the domain wall system by the $C^*$-algebra 
$C( \ol{\Z} , B)\rtimes \Z$. The quotient of this algebra by the ideal $C_0( \Z, B)\rtimes \Z \cong \calK\big(\ell^2(\Z) \big) \otimes B$ 
gives the two distinct bulk systems localised at $\pm \infty$.

With this long introduction done, we can now explain our aims for such a model.
\begin{enumerate}
  \item[(i)] Understand spectral properties of an interface operator  $T \in C(\ol{\Z} , B) \rtimes \Z$, in terms of 
  the bulk operators at infinity $(T_L, T_R) \in \big( B \rtimes \Z\big)^{\oplus 2}$. 
  \item[(ii)] If we suppose that the bulk systems at infinity are gapped, so there are invertible operators 
  $F_L$ and $F_R$ in $B\rtimes \Z$, determine any topological information about a lift 
  $F \in C(\ol{\Z} , B) \rtimes \Z$ of $(F_L, F_R) \in ( B \rtimes \Z )^{\oplus 2}$.
\end{enumerate}

In this  setting of a domain wall, both of these questions can be answered relatively easily.

\begin{enumerate}
  \item[(i)] For any operator $T \in \End_B(\ell^2(\Z, B) ) \cong \Mult\big( \calK(\ell^2(\Z)) \otimes B \big)$, we define the $B$-essential spectrum
  \[
     \sigma^{B}_{\mathrm{ess}}(T) = \sigma(q(T)), \quad 
     q: \End_B( \ell^2(\Z, B) ) \to \End_B(\ell^2(\Z, B))/ \mathbb{K}_B(\ell^2(\Z, B) ) ,
  \]
  where $ \mathbb{K}_B(\ell^2(\Z, B) ) \cong \calK( \ell^2(\Z) ) \otimes B$ are the compact operators on the 
  interface $C^*$-module $\ell^2(\Z, B)$. 
  When $B \subset \calK(\frakh)$, the compact operators on $\frakh$, we recover the usual notion of the essential spectrum of an operator on a Hilbert space.
  For any $T \in C(\ol{\Z}, B) \rtimes \Z$, the short exact sequence from Equation  \eqref{eq:DW_extension} gives us that 
  \[
      \sigma_\mathrm{ess}^B(T) = \sigma( T_L) \cup \sigma(T_R), \qquad 
      T_L, T_R \in B\rtimes \Z.
  \]
  Hence the $B$-essential spectrum of $T \in C(\ol{\Z}, B) \rtimes \Z$ can be entirely understood by the 
  asymptotic bulk limits $(T_L, T_R)$.
  \item[(ii)] Our gapped bulk assumption says that $q(F) = (F_L, F_R)$ is invertible. 
  Hence the operator $F \in \End_B(\ell^2(\Z, B) )$ is invertible modulo the compacts $\mathbb{K}_B(\ell^2(\Z, B) )$ 
  and so is a Fredholm operator on $\ell^2(\Z, B)$ (see Section \ref{sec:KK} for further details). 
  In such a setting, we can define a $K$-theoretic index 
  $[F] \in KK(\C, B) \cong K_0(B)$. We can also understand $[F]$ as the image of boundary map 
  \[
     K_1\big( ( B \rtimes \Z)^{\oplus 2} \big) \ni [(F_L, F_R)] \xmapsto{\delta} [F] \in K_0(B)
  \]  
 that is associated to the 
  short exact sequence 
  \[
     0 \to \calK\big(\ell^2(\Z) \big) \otimes B \to C( \ol{\Z}, B) \rtimes \Z \to  (B\rtimes \Z)^{\oplus 2} \to 0.
  \]
  Because the two bulk systems at $\pm \infty$ are spatially separated, the invertible elements $F_L$ and $F_R$ 
  can be represented in separate exact sequences,
 \[
     0 \to \calK\big(\ell^2(\Z) \big) \otimes B \to C_0( \Z\cup \{\pm \infty\}, B) \rtimes \Z 
       \xrightarrow{\mathrm{ev}_{\pm \infty}}  B\rtimes \Z \to 0.
  \]
  These one-sided short exact sequences can also be understood (up to a sign) as the Toeplitz extension 
  for the crossed product $B\rtimes \Z$. One then finds that (see~\cite[Lemma 5.4]{B22})
  \begin{equation} \label{eq:domwall_bdry_map_decomp}
     \delta \big[ (F_L, F_R) \big] = \delta_\mathrm{Toep} [F_L] - \delta_\mathrm{Toep} [F_R] \in K_0(B ),
  \end{equation}
  where $\delta_\mathrm{Toep}$ is the boundary map of the one-sided extension and where $[x]-[y] = [x] \oplus [y]^{-1}$ 
  in the abelian group $K_0(B )$. The reason for the minus sign/inverse is due to the orientation change 
  that comes with considering the limit at $+\infty$ or $-\infty$. Equation \eqref{eq:domwall_bdry_map_decomp} 
  shows that we can decompose the $K$-theoretic index $[F] = [\tilde{F}_L] - [\tilde{F}_R]$ with $\tilde{F}_{L/R}$ 
  generalised Fredholm operators in $C_0( \Z\cup \{\pm \infty\}, B) \rtimes \Z \subset \End_B(\ell^2(\Z, B))$.
  
  The relation $[F]=[\tilde{F}_L] - [\tilde{F}_R]$ also tells us that if the 
  separate bulk gapped systems are topologically equivalent,  $[F_L]=[F_R] \in K_1(B\rtimes \Z)$, 
  then  the interface index $[F] \in K_0(B)$ will be trivial. Put another way, if the 
  interface index is non-trivial, then our separated bulk systems are topologically distinct. 
  Such a result is a weak version of a $K$-theoretic bulk-boundary correspondence for a domain wall interface. 

\end{enumerate}

We lastly remark that, until this point, we have taken $B$ to be an arbitrary unital and separable $C^*$-algebra. 
In many cases of interest $B$ is itself given by a crossed product $C(X) \rtimes \Z^m$ 
with $X$ some compact Hausdorff space of configurations equipped with an ergodic probability measure. In such a setting, computing 
the spectrum of an element $T_{L/R} \in B \rtimes \Z^m$ is much easier than computing the spectrum of a general operator 
$T \in \End_B(\ell^2(\Z, B))$. So the $B$-essential spectrum 
\[
 \sigma^{B}_{\mathrm{ess}}(T) = \sigma( T_L) \cup \sigma(T_R)
\]
is a much more tractable quantity in many examples of interest. This is particularly true when $B$ models a 
periodic system and we can use the Fourier transform to compute $\sigma(T_{L/R})$.

\section{Discrete interfaces via $C^*$-modules -- framework and spectral properties} \label{sec:CStar_mod_interface}

Here we present an abstract but flexible tool to analyse a 
discrete interface of a quantum system described via a unital $C^*$-algebra. 
We let $\Gamma$ be a discrete, countable and abelian group. In many of the examples of interest to us, 
$\Gamma = \Z^l$ for some $l \in \N$, which describes an interface or defect with co-dimension $l$.
Our approach is via Hilbert $C^*$-modules, where a basic overview can be found in Appendix \ref{sec:Kasparov_review}.

\subsection{Setting and assumptions}

We fix a unital and separable $C^*$-algebra $B$ possibly with internal dynamics $\beta: \Gamma \to \Aut(B)$. 
For technical reasons, we will restrict to the case that $B$ is a nuclear $C^*$-algebra, though this can be weakened 
to assuming that $B$ is exact (the minimal tensor product with $B$ preserves short exact sequences). 
In typical examples arising from condensed matter physics, $B$ may be an AF algebra, 
the $C^*$-algebra of an amenable transformation groupoid, or the $C^*$-algebra of an 
\'{e}tale and amenable groupoid, which are nuclear.

For an interface $\Gamma$, we think of  $B$ as the algebra of observables of a $\Gamma$-defect subsystem 
of the bulk. If $X$ is a compact Hausdorff 
space and $C(X) \rtimes \Z^d$ 
is the bulk algebra, then for  $\Gamma = \Z^l$ with $l \leq d$, we take $B = C(X) \rtimes \Z^{d-l}$ as the algebra of  the co-dimension $l$ subsystem.

Returning to the general picture,  a   discrete interface of a system described by $B$ distributed on the lattice $\Gamma$ 
can be modeled by the Hilbert $C^*$-module 
$\ell^2(\Gamma, B)$. The dynamics of this interface are described by adjointable operators 
$T \in \End_B\big( \ell^2(\Gamma, B) \big)\cong \Mult\big( \calK(\ell^2(\Gamma)) \otimes B\big)$. 

\begin{remark}
 
By working with $\ell^2(\Gamma, B)$, we are assuming that the mixing and interaction between each of the lattice points
$\big\{\{x\} \otimes B\big\}_{x \in \Gamma}$ is quite mild. Indeed, any element $\psi \in \ell^2(\Gamma, B)$ is such that 
\[
    \sum_{x\in \Gamma} \psi(x)^* \psi(x) \  \text{converges in } B.
\] 
 If there are long-range dynamics and interactions, then it is more
suitable  to work with the $C^*$-algebra $\bigotimes_\Gamma B$ with dynamics given by 
elements in $\Aut\big( \bigotimes_\Gamma B \big)$. We leave an investigation of this more challenging setting to another place.
\end{remark}

If we are given finitely many bulk systems to mix together in an interface, there are many different ways we can specify a 
dynamics. For example, we may  take an appropriate partition of $\Gamma$ and sum together the restriction of 
each bulk operator to  one piece of the partition. 
Many of these choices are quite arbitrary and one does not expect  any fundamental property of the interface to 
depend on such choices. We take the perspective that, in general, one can not distinguish the 
individual bulk systems within any finite region of a generic interface. Instead the bulk data is recovered in an appropriate 
limit at infinity. As such, we will allow for quite general dynamics on the interface and employ operator algebraic 
methods to recover the bulk information at infinity.

Let us now  specify the types of dynamics of the interface $\ell^2(\Gamma, B)$ of interest to us:
\begin{enumerate}
  \item Pointwise multiplication by a function, $(f\cdot \psi)(x) = f(x)\psi(x)$ with $\psi \in \ell^2(\Gamma, B)$ and 
  \[
     f \in \ell^\infty(\Gamma, B) = \big\{ f : \Gamma \to B \, \mid \, f \text{ bounded} \big\} \cong \Mult\big(C_0(\Gamma, B) \big) \cong C(\beta \Gamma, B),
  \]
  where $\beta\Gamma$ denotes the Stone--\v{C}ech compactification of $\Gamma$,
  \item Discrete lattice shifts twisted by a (possibly trivial) action of $\Gamma$ on the internal algebra $B$, 
  \[
     \big\{S_g\big\}_{g \in \Gamma}, \quad (S_g \psi)(x) = \beta_g \big(\psi(g^{-1}x)\big), \quad \psi \in \ell^2(\Gamma, B), \,\, x \in \Gamma,  \quad 
     \beta : \Gamma \to \Aut(B).
  \]
\end{enumerate}

Operators $T \in \End_B( \ell^2(\Gamma, B) )$ that are generated from shifts and bounded multiplication are sometimes 
called band-dominated operators in the Hilbert space setting~\cite{RSS98}. 
The isolated bulk systems come from a spatial asymptotic limit of the operators in $\End_B(\ell^2(\Gamma, B))$. 
We will therefore consider a $C^*$-subalgebra of band-dominated operators, where the spatial asymptotics are 
 tuned by specifying  the types of pointwise multiplication 
we allow. Let us therefore explicitly state our operating assumptions.

\begin{assumption} \label{assump:B-valued_functions}
We consider a $C^*$-subalgebra $A \subset \ell^\infty(\Gamma, B)$ such that
\begin{enumerate}
   \item The action  $\alpha: \Gamma \to \Aut\big(\ell^\infty(\Gamma, B)\big)$ by translations restricts to an 
   action of $A$,
   \[
      \alpha: \Gamma \to \Aut(A), \qquad [\alpha_g( f )](x) = \beta_g \big( f(g^{-1}x) \big), \quad f\in A, \quad x, g \in \Gamma, 
   \]
   \item The algebra $A$ contains $C_0(\Gamma, B)$ as a subalgebra and therefore as a closed two-sided ideal. 
   \item The set $\big\{ \Ran(f) \, \big| \, f \in A \big\} = B$.
\end{enumerate}
\end{assumption}

The first condition ensures that we are free to move around the interface. The functions vanishing at infinity are precisely those 
that will not interact with the bulk systems at infinity. Hence we wish to include $C_0$-functions as part of  the interface (non-bulk) dynamics. 
We also note that the ideal $C_0(\Gamma, B) \subset A$  is essential, if $f\in A$ and 
$f \cdot C_0(\Gamma, B) = 0$, then $f=0$. 
Therefore there is an injective $\ast$-homomorphism $\nu: A \to   \Mult\big( C_0(\Gamma, B) \big)$ and we 
can naturally consider  
$A \subset \Mult\big( C_0(\Gamma, B) \big) \cong C(\beta \Gamma, B)$, 
where $\beta \Gamma$ is the Stone--\v{C}ech compactification of $\Gamma$.

Considering $ A \subset \ell^\infty(\Gamma, B) \cong C(\beta \Gamma, B)$, the last condition ensures that we see all of the 
internal algebra $B$ and not a subspace. For example, if $B= M_N(\C)$ for  $N \geq 2$, 
we are not restricted to functions of the form 
\[
    f(x) = \begin{pmatrix} \tilde{f}(x) & \mathbf{0}_{N-k, k} \\ \mathbf{0}_{N-k, N} & \mathbf{0}_{N-k, N-k} \end{pmatrix}, \quad k < N.
\]

An immediate consequence Assumption \ref{assump:B-valued_functions} and the Gelfand--Naimark Theorem is the following.
\begin{lemma}
If $A \subset \ell^\infty(\Gamma, B)$ satisfies Assumption \ref{assump:B-valued_functions}, then $A \cong C_0(\Omega, B)$ for some  
$\Gamma$-invariant, locally compact and Hausdorff space $\Omega$ such that $\Gamma  \subset \Omega \subset  \beta\Gamma$. 
If $A$ is unital, then $\Omega$ is a compactification of $\Gamma$.
\end{lemma}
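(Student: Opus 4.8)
The plan is to apply the commutative Gelfand--Naimark theorem inside the algebra $\ell^\infty(\Gamma, B) \cong C(\beta\Gamma, B)$ and track what Assumption \ref{assump:B-valued_functions} imposes on the resulting quotient space. First I would recall the identification $\ell^\infty(\Gamma, B) \cong \Mult(C_0(\Gamma, B)) \cong C(\beta\Gamma, B)$, and note that the center of $\ell^\infty(\Gamma, B)$ is $\ell^\infty(\Gamma, Z(B))$, which contains the subalgebra $\ell^\infty(\Gamma) \cong C(\beta\Gamma)$ of scalar-valued functions. The key structural observation is that $A$, being a $C^*$-subalgebra of $C(\beta\Gamma, B)$ containing $C_0(\Gamma, B)$, is automatically a $C(\beta\Gamma)$-$C^*$-algebra: since $C_0(\Gamma) \subset A$ (as the scalar multiples of approximate units, using that $C_0(\Gamma, B) = C_0(\Gamma) \cdot C_0(\Gamma, B) \subset A$ and hence $C_0(\Gamma)$ acts as multipliers), and since $A \subset \Mult(C_0(\Gamma, B))$, the algebra $A$ carries a nondegenerate central action of $C_0(\Gamma)$, which extends to $\Mult(C_0(\Gamma)) = C(\beta\Gamma)$ acting centrally on $\Mult(C_0(\Gamma,B)) \supset A$.

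Next I would use the structure theory of $C(\beta\Gamma)$-algebras (the space $\beta\Gamma$ being compact Hausdorff) to write $A$ as the section algebra of an upper semicontinuous $C^*$-bundle over $\beta\Gamma$. Let $\Omega \subseteq \beta\Gamma$ be the set of points $\omega$ at which the fibre $A_\omega := A / (C_0(\beta\Gamma \setminus \{\omega\}) \cdot A)$ is nonzero; this is precisely the support of the bundle, and $A$ is then the algebra of sections vanishing at infinity over the locally compact Hausdorff space $\Omega$. To identify $\Omega$ concretely: for $\omega \in \Gamma \subset \beta\Gamma$, condition (2) ($C_0(\Gamma,B) \subset A$) together with condition (3) (the ranges of elements of $A$ exhaust $B$) forces the fibre at $\omega$ to be all of $B$ — evaluation at a lattice point $x \in \Gamma$ is already surjective onto $B$ on the subalgebra $C_0(\Gamma, B)$. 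So $\Gamma \subseteq \Omega$, and on the open dense subset $\Gamma$ all fibres equal $B$. For boundary points $\omega \in \beta\Gamma \setminus \Gamma$, the fibre $A_\omega$ is some quotient of $B$; the genuinely clean statement "$A \cong C_0(\Omega, B)$" requires that these boundary fibres are again all of $B$. This should follow because $A$ contains $C_0(\Gamma, B)$ and is closed under the $\Gamma$-translation action $\alpha$ (condition (1)), which acts on $\beta\Gamma$ with $\Gamma$ as a dense orbit: translating an element of $A$ whose value near a boundary point realizes a given $b \in B$ (available by condition (3)) and passing to the limit shows every fibre contains every element of $B$. Hence all fibres are $B$, the bundle is trivial, and $A \cong C_0(\Omega, B)$.

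Then I would pin down $\Omega$ itself: $\Gamma$-invariance of $A$ (condition (1)) transports to $\Gamma$-invariance of the closed support $\Omega \subseteq \beta\Gamma$ under the extension of the translation action to $\beta\Gamma$; since $C_0(\Gamma,B) \subset A$ we get $\Gamma \subseteq \Omega$ as established; and $\Omega \subseteq \beta\Gamma$ is locally compact Hausdorff because it is a locally closed (in fact closed, being a support) subset of the compact Hausdorff space $\beta\Gamma$ — closed subsets of compact Hausdorff spaces are compact, hence locally compact. If $A$ is unital, then $C_0(\Omega, B)$ is unital, which forces $\Omega$ to be compact; since $\Gamma$ is dense in $\Omega$ (it is dense in $\beta\Gamma$, hence in any closed set containing it that is contained in $\beta\Gamma$... more precisely $\overline{\Gamma}^{\,\beta\Gamma} = \beta\Gamma$, so if $\Gamma \subseteq \Omega \subseteq \beta\Gamma$ with $\Omega$ closed then $\Omega = \beta\Gamma$; but for general $\Omega$ one takes the closure), a compact $\Omega$ containing $\Gamma$ densely is by definition a compactification of $\Gamma$.

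The main obstacle I anticipate is the argument that \emph{every} fibre of the bundle is the full algebra $B$, i.e. that the $C(\beta\Gamma)$-algebra $A$ is the trivial bundle with fibre $B$ rather than a subbundle with smaller fibres at boundary points. The density of $\Gamma$ in $\beta\Gamma$ and upper semicontinuity of fibre norms only give that $\dim A_\omega$ (or more precisely the fibre) is \emph{at most} $B$ in a limiting sense, not at least; the "at least" direction is exactly where condition (3) combined with $\Gamma$-invariance and the ideal property must be used carefully — one needs that a norm-one element of $B$ realized as $f(x_0)$ for some $f \in A$, $x_0 \in \Gamma$, can be "pushed to" the boundary point $\omega$ via a sequence (net) $\gamma_n \to \omega$ in $\beta\Gamma$ by applying $\alpha_{\gamma_n}$ and using that $A$ is closed and $\alpha$-invariant, so that the limiting fibre value still has norm one and equals the prescribed element. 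Making this limiting/approximation argument rigorous — in particular controlling that the fibre map $A \to A_\omega$ does not collapse the element — is the delicate point; everything else (Gelfand--Naimark, the $C(\beta\Gamma)$-algebra structure, local compactness of supports, the unital case) is essentially bookkeeping.
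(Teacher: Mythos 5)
Your approach --- viewing $A$ as a $C(\beta\Gamma)$-algebra, constructing a $C^*$-bundle over $\beta\Gamma$, and trying to show every fibre equals $B$ --- is considerably heavier machinery than what the paper invokes (the paper simply calls the lemma ``an immediate consequence of Assumption \ref{assump:B-valued_functions} and the Gelfand--Naimark Theorem'' and gives no argument). But the more serious problem is that the delicate point you flag at the end is not merely delicate: the resolution you propose does not work, and the gap cannot be closed with the tools you invoke.

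The translation argument fails for a structural reason. You propose to realize an arbitrary $b\in B$ in the boundary fibre $A_\omega$, $\omega\in\beta\Gamma\setminus\Gamma$, by starting with $f\in A$ and $x_0\in\Gamma$ with $f(x_0)=b$, then applying $\alpha_g$ and ``passing to the limit'' as $g\cdot x_0\to\omega$. But the $\Gamma$-orbit of any point of $\Gamma$ is again $\Gamma$, which is disjoint from $\beta\Gamma\setminus\Gamma$; the translated functions $\alpha_g(f)$ attain the value $b$ at the moving interior point $g\cdot x_0$, never at $\omega$, and the net $(\alpha_g(f))_g$ does not converge in $A$ (translates of a nonconstant function do not converge in norm, nor does $\Gamma$-invariance of $A$ give any compactness to extract a limit). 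Upper semicontinuity of the fibre norm only bounds the boundary fibre from \emph{above}; there is simply no mechanism in Assumption \ref{assump:B-valued_functions} forcing it from below. In fact the conclusion can fail outright under the assumptions as literally stated: take $\Gamma=\Z$, $B=M_2(\C)$, and
\[
A=\Big\{f\in\ell^\infty(\Z,M_2(\C)) \,:\, \lim_{n\to-\infty}f(n)\text{ exists in }M_2(\C),\ \lim_{n\to+\infty}f(n)\text{ exists and is diagonal}\Big\}.
\]
This $A$ is a translation-invariant $C^*$-subalgebra, contains $C_0(\Z,M_2(\C))$ as an essential ideal, and its elements realize all of $M_2(\C)$ in their ranges at finite lattice points, so Assumption \ref{assump:B-valued_functions} holds as written; yet the ``fibre'' at $+\infty$ is the diagonal subalgebra $\C\oplus\C$ and the primitive ideal space contains one-dimensional representations, so $A\not\cong C_0(\Omega,M_2(\C))$ for any $\Omega$. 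So the statement actually requires a strengthening of condition (3) (something pointwise along the boundary, or an assumption that $A$ is generated over $C_0(\Gamma,B)$ by a commutative scalar subalgebra tensored with $B$), and without that your plan cannot be salvaged.

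Two smaller issues with the bundle-support picture. First, since $C_0(\Gamma,B)\subset A$ is essential, the support of your $C(\beta\Gamma)$-bundle contains $\Gamma$ and is closed, hence equals all of $\beta\Gamma$ --- so ``support of the bundle'' cannot be the $\Omega$ the lemma is after, which in the unital case should be a genuine \emph{proper} compactification of $\Gamma$ (e.g.\ $\ol{\Z}$ in Section \ref{sec:DomainWall}). Second, the Gelfand spectrum of a $C^*$-subalgebra of $C(\beta\Gamma)$ is naturally a \emph{quotient} of $\beta\Gamma$, not a subspace; the lemma's notation ``$\Gamma\subset\Omega\subset\beta\Gamma$'' is shorthand for $\Omega$ being a compactification of $\Gamma$ dominated by $\beta\Gamma$ (i.e.\ $C_0(\Gamma)\subset C_0(\Omega)\subset C(\beta\Gamma)$), not for a subset. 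Your identification of $\Omega$ conflates these two pictures, which is what leads to the impossible ``show every fibre over $\beta\Gamma\setminus\Gamma$ is full'' reformulation.
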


\begin{remark}
An algebra $A$ satisfying  Assumption \ref{assump:B-valued_functions} may be unital or non-unital. In what follows, we will present our results 
under the assumption that $A$ is non-unital as more care is required. The results for the case that $A$ is unital (which includes many of the examples of interest) 
can then be easily obtained by some simple adjustments.
\end{remark}

By the action of $\Gamma$ on $A$, there is an induced dynamical system $\Gamma \to \Homeo(\Omega)$ such that 
$[\alpha_g(f)](\omega) = \beta_g\big(f( g^{-1} \cdot \omega)\big)$ for $g \in \Gamma$, $\omega \in \Omega$ and $f \in C_0(\Omega, B)$.

Because $C_0(\Gamma, B) \subset A \cong C_0(\Omega, B)$ is a closed two-sided ideal, there is a short exact sequence
\begin{equation} \label{eq:B-valued_function_extension}
   0 \to C_0(\Gamma, B) \to C_0(\Omega, B) \to C_0( \Omega \setminus \Gamma, B) \to 0,
\end{equation}
where $C_0( \Omega \setminus \Gamma, B) \cong C_0(\Omega, B) / C_0(\Gamma, B)$. 
In what follows we denote $\partial \Omega := \Omega \setminus \Gamma$. 
The quotient $C_0( \partial \Omega, B)$ gives the spatial information of the system at infinity. The various bulk systems 
are spatially located in  certain $\Gamma$-invariant subsets of $\partial \Omega$ (see Section \ref{subsec:bulk_and_quasiorbits} below).

Returning to operators $T \in \End_B\big( \ell^2(\Gamma, B) \big)$ acting on the discrete interface, we can consider the restricted sum 
\[
    T = \sum_{g \in \Gamma}^{\text{finite}} S_g f_g, \quad f_g \in C_0(\Omega, B),
\]
where by assumption $S_g f = \alpha_g (f) S_g$ for any $f \in C_0(\Omega, B)$. 
Such finite sums will give a faithful representation of the 
$\ast$-algebra $C_c(\Gamma, B)$ on $\ell^2(\Gamma, B)$. Because $\Gamma$ is amenable, the completion of 
$C_c(\Gamma, B)$ in $\End_B\big( \ell^2(\Gamma, B) \big)$ gives a 
faithful representation of the crossed product algebra $C_0(\Omega, B) \rtimes \Gamma$. 
Hence, choosing a  spatial asymptotics as encoded by an algebra  $A$ satisfying Assumption \ref{assump:B-valued_functions}, 
our basic model for the interface dynamics is determined by the crossed product $C^*$-algebra 
$C_0(\Omega , B)\rtimes \Gamma$ with $\Omega \subset \beta \Gamma$ a $\Gamma$-invariant subset.

Because the short 
exact sequence of Equation \eqref{eq:B-valued_function_extension} is $\Gamma$-equivariant, we can also consider the 
extension
\[
  0 \to C_0(\Gamma, B)\rtimes \Gamma \to C_0(\Omega, B) \rtimes \Gamma \to C_0( \partial \Omega  , B) \rtimes \Gamma \to 0.
\]
By Takai duality~\cite[Theorem 7.9.3]{Pedersen}, we can further identify $C_0(\Gamma, B)\rtimes \Gamma \cong \calK\big(\ell^2(\Gamma) \big) \otimes B$ 
and have the short exact sequence,
\begin{equation} \label{eq:B-valued_crossed-prod_extension}
   0 \to \calK\big(\ell^2(\Gamma) \big) \otimes B \to C_0(\Omega, B) \rtimes \Gamma \to C_0( \partial \Omega, B) \rtimes \Gamma \to 0.
\end{equation}
The ideal $C_0(\Gamma, B) \rtimes \Gamma \subset \End_B(\ell^2(\Gamma, B))$  acts on the interface $C^*$-module $\ell^2(\Gamma, B)$ and 
precisely gives the compact operators $\mathbb{K}_B(\ell^2(\Gamma, B)) \cong \calK\big(\ell^2(\Gamma) \big) \otimes B$.

\begin{remark} \label{rk:Stone-vonNeumann_rk}
If $G$ is a locally compact group that acts trivially on $B$, then 
\[
   C_0(G, B) \rtimes G \cong \big( C_0(G) \otimes B \big) \rtimes_{\tau \otimes \mathrm{Id} } G \cong \big( C_0(G) \rtimes_\tau G \big) \otimes B 
   \cong \calK[ L^2(G) ] \otimes B
\]
by the Stone--von Neumann Theorem~\cite[Theorem 4.24]{Wil07} and we do not need to invoke Takai duality.
\end{remark}

\begin{remark}
 We will mostly think of $\ell^2(\Gamma, B)$ as describing an interface of the $\Gamma$-defect algebra 
 $B$, though the mathematical formalism can be applied without reference to interfaces of bulk systems. This is particularly relevant for 
 quantum walks, where we can regard a unitary  $U \in M_n \big(C_0(\Omega, B) \rtimes \Gamma \big)$ as describing a particular 
 discrete time step with spatial asymptotics specified by the algebra $C_0(\Omega, B) \subset \ell^\infty(\Gamma, B)$.
\end{remark}

\subsection{Bulk systems and quasi-orbits} \label{subsec:bulk_and_quasiorbits}

Let us consider the quotient $C_0( \partial \Omega, B)$ and the crossed product 
$C_0( \partial \Omega, B) \rtimes \Gamma$. The topological space $\partial \Omega = \Omega \setminus \Gamma$ specifies the 
spatial location of the operators at infinity, which should not be affected by the dynamics that come from 
the mixing of different systems in the interface. As such, we expect the various bulk systems to be spatially supported 
in $\partial \Omega$. So for a function $f \in C_0( \partial \Omega, B)$ and $\omega \in  \partial \Omega$, 
the element $f(\omega) \in B$ gives an operator in the internal system $B$ that is sufficiently far away from the interface so 
as to not be affected by its dynamics. 

On the other hand, given $f \in C(\partial \Omega, B)$, we have no canonical way to distinguish an 
element $f(\omega) \in B$ with the finite shift 
$[\alpha_g( f)](\omega) = \beta_g\big( f(g^{-1}\cdot \omega) \big)$ for any $g \in\Gamma $ and $\omega \in \Omega\setminus \Gamma$. 
We therefore consider the elements $f(\omega)$ and $[\alpha_g( f)](\omega) \in B$ as coming from the \emph{same} 
bulk system. Indeed, because we think of $B$ as a $\Gamma$-defect algebra, the algebra $C_0(\partial \Omega, B)$ is not 
enough to recover the bulk systems and the action by $\Gamma$ needs to be taken into account.
Given any point at infinity, $\omega \in \partial \Omega$, the bulk system at $\omega$
 is modeled by $C_0( \Xi_\omega, B) \rtimes \Gamma$, where 
$\Xi_\omega = \ol{\big\{ g^{-1}\cdot \omega \mid g \in \Gamma  \big\}}$ is the closure of the 
orbit of $\omega$ (with closure in $\partial \Omega$). In analogy to the work of M$\breve{\text{a}}$ntoiu~\cite{Mantoiu02}, we 
call the space $\Xi_\omega$ the quasi-orbit of the point $\omega$. 

More generally, we may wish to consider generic $\Gamma$-invariant subsets $Z \subset \Omega\setminus\Gamma$, which may contain several 
distinct bulk systems (several quasi-orbits). Distinct quasi-orbits $\Xi_\omega$ and $\Xi_{\omega'}$ need not be disjoint and 
will depend on the asymptotics under consideration. Quasi-orbits with non-trivial intersections are  quite common in systems with 
defects such as corners and hinges~\cite{ProdanHigherOrder, PPSadiabatic} (see also Section \ref{sec:B-valued_examples} below).

We also remark that, in the same way that $\Omega$ is a $\Gamma$-invariant subset of $\beta \Gamma$, the quotient $\partial \Omega$ is a $\Gamma$-invariant 
subset of the compact space $\beta \Gamma \setminus \Gamma$.

\subsection{$B$-essential spectrum}

A generic operator on the interface $C^*$-module is an element
$T \in \End_B\big(\ell^2(\Gamma, B) \big) \cong \Mult\big( \calK(\ell^2(\Gamma)) \otimes B\big)$ with 
$\calK(\ell^2(\Gamma)) \otimes B \cong \mathbb{K}_B\big( \ell^2(\Gamma, B) \big)$ the compact endomorphisms 
on $\ell^2(\Gamma, B)$. Recalling the essential spectrum on Hilbert spaces and generalised Calkin algebra 
$\calQ_B(\ell^2(\Gamma, B) ) = \End_B( \ell^2(\Gamma, B) ) / \mathbb{K}_B(\ell^2(\Gamma, B) )$ from Appendix \ref{sec:Kasparov_review},
 we define the following.

\begin{defn}
Let $T \in \End_B\big(\ell^2(\Gamma, B) \big)$. The $B$-essential spectrum of $T$ is the set 
\[
  \sigma^{B}_{\mathrm{ess}}( T) = \sigma\big( q(T) \big), \qquad 
  q: \End_B\big(\ell^2(\Gamma, B) \big) \to \calQ_B \big(\ell^2(\Gamma, B) \big) 
  \cong \calQ\big( \calK(\ell^2(\Gamma)) \otimes B\big).
\]
\end{defn}

When $B\subset \calK$ for $\calK$ the compact operators on some Hilbert space, then 
$\sigma^{B}_{\mathrm{ess}}( T) = \sigma_{\mathrm{ess}}( T)$.

Let us  consider $A \cong C_0(\Omega, B)$ 
satisfying Assumption \ref{assump:B-valued_functions} and an element $T \in A \rtimes \Gamma$ with the aim of 
computing $\sigma^{B}_{\mathrm{ess}}( T)$. 
It is an immediate consequence of Equation \eqref{eq:B-valued_crossed-prod_extension} that 
$\sigma^{B}_{\mathrm{ess}}( T)$ is given by the spectrum of $q(T) \in C_0(\partial \Omega, B) \rtimes \Gamma$. 
On the other hand, in the same way that there is not a faithful representation of 
$\calQ(\calH)$ on $\calH$,  the algebra $C_0(\partial \Omega, B) \rtimes \Gamma$ might not have a faithful 
representation on the  $C^*$-module $\ell^2(\Gamma, B)$. So it may be difficult to 
work with the abstract element $q(T) \in C_0(\partial \Omega, B) \rtimes \Gamma$ directly. 
We can deal with this problem by 
instead considering the various bulk systems represented by quasi-orbits in $\partial\Omega$.

Let us consider a decomposition of $\partial\Omega$ in quasi-orbits. 
By choosing sufficiently many $\omega_j\in \Omega\setminus \Gamma$, 
one gets a covering of $\partial \Omega$,  
\begin{equation}\label{eq:covering}
  \Omega\setminus \Gamma =  \partial \Omega =  \bigcup_{j \in J} \Xi_{\omega_j}.
\end{equation}
For simplicity, we shall write $\Xi_j$ for $\Xi_{\omega_j}$.
Based on this decomposition, one directly infers the following result.

\begin{lemma}\label{lem:quotient}
Let $A$ satisfy Assumption \ref{assump:B-valued_functions}
and consider a covering of $\partial \Omega$ by quasi-orbits satisfying Equation \eqref{eq:covering}.
Then there exists two injective $*$-homomorphisms,
\[
 A/ C_0(\Gamma, B) \to \prod_{j\in J} C_0( \Xi_j, B)
\]
and
\[
 A \rtimes \Gamma \big/  C_0(\Gamma, B)\rtimes \Gamma  \to \prod_{j\in J} C_0(\Xi_j, B) \rtimes \Gamma .
\]
\end{lemma}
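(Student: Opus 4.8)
The plan is to build the first $*$-homomorphism as the product of the quotient maps $C_0(\Omega,B)/C_0(\Gamma,B) \cong C_0(\partial\Omega,B) \to C_0(\Xi_j,B)$ given by restriction of functions to the closed $\Gamma$-invariant subset $\Xi_j \subset \partial\Omega$, and then obtain the second map by the functoriality of the (full = reduced, since $\Gamma$ is amenable) crossed product construction. First I would observe that since each $\Xi_j$ is closed in $\partial\Omega$, the restriction map $r_j\colon C_0(\partial\Omega,B)\to C_0(\Xi_j,B)$ is a surjective $*$-homomorphism, and $r_j$ is $\Gamma$-equivariant because $\Xi_j=\overline{\Gamma^{-1}\cdot\omega_j}$ is $\Gamma$-invariant and the $\Gamma$-action on $C_0(\partial\Omega,B)$ restricts compatibly (using the formula $[\alpha_g(f)](\omega)=\beta_g(f(g^{-1}\cdot\omega))$ from the excerpt). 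Taking the product over $j\in J$ gives $r\colon C_0(\partial\Omega,B)\to\prod_{j\in J}C_0(\Xi_j,B)$, and I would identify $C_0(\partial\Omega,B)$ with $A/C_0(\Gamma,B)$ via the short exact sequence \eqref{eq:B-valued_function_extension}.

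The key step is injectivity of $r$. For this I would argue that $\bigcap_{j\in J}\ker r_j=0$ in $C_0(\partial\Omega,B)$: if $f\in C_0(\partial\Omega,B)$ vanishes on every $\Xi_j$, then $f$ vanishes on $\bigcup_{j\in J}\Xi_j=\partial\Omega$ by the covering hypothesis \eqref{eq:covering}, hence $f=0$. Since a $*$-homomorphism into a product is injective exactly when the intersection of the kernels of the component maps is trivial, this gives injectivity of the first map. For the $B$-valued statement it is worth noting explicitly that $\ker r_j=\{f\in C_0(\partial\Omega,B): f|_{\Xi_j}=0\}\cong C_0(\partial\Omega\setminus\Xi_j,B)$, so vanishing on all $\Xi_j$ is literally vanishing pointwise on the union; the $B$-coefficients play no essential role here since evaluation is pointwise.

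For the second map I would pass to crossed products: the equivariant surjections $r_j$ induce $*$-homomorphisms $r_j\rtimes\Gamma\colon C_0(\partial\Omega,B)\rtimes\Gamma\to C_0(\Xi_j,B)\rtimes\Gamma$, and taking the product gives a $*$-homomorphism into $\prod_{j\in J}C_0(\Xi_j,B)\rtimes\Gamma$. Identifying $C_0(\partial\Omega,B)\rtimes\Gamma$ with $\big(A\rtimes\Gamma\big)/\big(C_0(\Gamma,B)\rtimes\Gamma\big)$ via the extension displayed just before \eqref{eq:B-valued_crossed-prod_extension} (valid since the ideal $C_0(\Gamma,B)\subset A$ is $\Gamma$-invariant and $\Gamma$ amenable, so the sequence of reduced crossed products is exact), this is the desired second map. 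Injectivity is where I expect the only real subtlety: the naive argument ``intersection of kernels is trivial at the function level, hence at the crossed-product level'' needs the fact that $\ker(r_j\rtimes\Gamma)=(\ker r_j)\rtimes\Gamma$, which holds because $\Gamma$ is amenable (so crossed products are exact functors on the nuclear/exact algebra $B$ and its ideals), and then $\bigcap_j\big((\ker r_j)\rtimes\Gamma\big)=\big(\bigcap_j\ker r_j\big)\rtimes\Gamma=0\rtimes\Gamma=0$; the middle equality uses that intersecting with a fixed $\Gamma$-invariant ideal commutes with the crossed product, which again follows from exactness. So the main obstacle is really just bookkeeping the exactness of $-\rtimes\Gamma$ carefully to justify $\bigcap_j\ker(r_j\rtimes\Gamma)=0$; everything else is a routine application of Gelfand--Naimark and functoriality of crossed products.
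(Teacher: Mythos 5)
Your construction of the two maps is exactly the right one, and almost certainly what the paper has in mind (it gives no proof, saying only that ``one directly infers the following result''): restrict functions to each quasi-orbit $\Xi_j$, observe the restriction maps are $\Gamma$-equivariant surjections, then pass to crossed products using amenability of $\Gamma$ for the identification $\big(A\rtimes\Gamma\big)/\big(C_0(\Gamma,B)\rtimes\Gamma\big)\cong C_0(\partial\Omega,B)\rtimes\Gamma$. Your injectivity argument for the first map is complete and correct.

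There is, however, a genuine gap in the injectivity argument for the second map. Exactness of $-\rtimes_r\Gamma$ (from amenability) does give $\ker(r_j\rtimes\Gamma)=(\ker r_j)\rtimes\Gamma$ for each fixed $j$, but your subsequent claim that $\bigcap_j\big((\ker r_j)\rtimes\Gamma\big)=\big(\bigcap_j\ker r_j\big)\rtimes\Gamma$ ``again follows from exactness'' does not hold up. Exactness is a statement about short exact sequences, i.e.\ about a single $\Gamma$-equivariant surjection at a time; it does not by itself give that an arbitrary (in particular infinite) intersection of $\Gamma$-invariant ideals commutes with $-\rtimes_r\Gamma$. If you try to run it through $q=\prod_j r_j\colon C_0(\partial\Omega,B)\to\prod_j C_0(\Xi_j,B)$, exactness gives $\ker(q\rtimes\Gamma)=\big(\bigcap_j\ker r_j\big)\rtimes\Gamma$, but $q\rtimes\Gamma$ lands in $\big(\prod_j C_0(\Xi_j,B)\big)\rtimes\Gamma$ rather than $\prod_j\big(C_0(\Xi_j,B)\rtimes\Gamma\big)$, and proving injectivity of the canonical map between these two algebras is precisely the same difficulty in disguise. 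The clean way to close the gap is the canonical faithful conditional expectation $E\colon C_0(\partial\Omega,B)\rtimes_r\Gamma\to C_0(\partial\Omega,B)$ (faithful for any reduced crossed product by a discrete group). It intertwines the quotients, $E_j\circ(r_j\rtimes\Gamma)=r_j\circ E$, so if $(r_j\rtimes\Gamma)(a)=0$ for all $j\in J$ then $r_j\big(E(a^*a)\big)=0$ for all $j$; by the injectivity you already proved at the level of functions, $E(a^*a)=0$, and faithfulness of $E$ forces $a=0$. This is the step that actually requires an input beyond exactness, and naming it is worthwhile since it is the only nontrivial part of the lemma.
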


Note that the map $A/ C_0(\Gamma, B) \to \prod_{j\in J} C_0( \Xi_j, B)$ is rarely surjective, since generally there are compatibility conditions 
between the various restrictions on $\Xi_j$ of any function defined on $\Omega$.
Since the set $\Xi_j$ is a closed $\Gamma$-invariant subset of $\Omega$, the algebra
\begin{equation}\label{eq_Aj}
 A_j:=\big\{  f \in C_0(\Omega, B)  \, \big| \, f|_{\Xi_j}=0\big\}
\end{equation}
corresponds to a $\Gamma$-invariant ideal of $A$. By construction, 
 the equalities 
\[
\bigcap_{j\in J} A_j=C_0(\Gamma, B), \qquad \quad 
 \bigcap_{j\in J} A_j\rtimes \Gamma \cong \calK\big( \ell^2(\Gamma)\big) \otimes B
\]
hold. Using $A_j$, we can also consider the  short exact sequences
\[
0 \to A_j \to A \to C_0(\Xi_j, B) \to 0
\]
and 
\begin{equation*}\label{eq_qj}
0 \to A_j\rtimes \Gamma \to  A \rtimes \Gamma
\xrightarrow{\, q_j \,}   C_0(\Xi_j, B) \rtimes \Gamma  \to 0,
\end{equation*}
where we denote by $q_j$ the quotient map  
$A \rtimes \Gamma \to C_0(\Xi_j, B) \rtimes \Gamma$.

So far, the $C^*$-algebra 
$C_0(\Xi_j, B) \rtimes \Gamma$ is not represented in 
$\ell^2(\Gamma,B)$, but a faithful representation on this Hilbert module exists.
Indeed, recall that $\Xi_j=\ol{\{ g^{-1}\cdot \omega_j \mid g \in \Gamma \}}$ for 
some $\omega_j\in \partial \Omega$, where the closure is taken in  $\partial\Omega = \Omega \setminus \Gamma$.
Thus, if $h \in C_0(\Xi_j, B)$, the map 
\[
\Gamma\ni g \mapsto h(g^{-1}\cdot\omega_j ) \in B
\]
defines a function  $\Gamma\to B$.  More generally, for any $f \in A$, the map 
\begin{equation} \label{eq:boundary_j_rep}
\big(\pi_j^0(f)\psi \big)(x) := f( x^{-1}\cdot \omega_j ) \psi(x), \qquad \psi \in \ell^2(\Gamma,B), \,\, x\in \Gamma
\end{equation}
defines a faithful representation of $A|_{\Xi_j}$ into $\End_B\big(\ell^2(\Gamma, B) \big)$
by multiplication operators. 
Thus, by starting with a generating element $\sum_g S_g h_g \in C_0(\Xi_j, B) \rtimes \Gamma$, we can define 
\begin{equation}  \label{eq:quasi-orbit_rep}
   \pi_j: C_0(\Xi_j, B) \rtimes \Gamma \to \End_B( \ell^2(\Gamma, B) ), \qquad 
   \pi_j \Big(\sum_{g\in \Gamma}^{\text{finite}} S_g h_g \Big) = \sum_{g \in \Gamma}^{\text{finite}} S_g \pi_j^0( h_g )
\end{equation}
with $\pi_j^0$ given in Equation \eqref{eq:boundary_j_rep}.  The map $\pi_j$ extends to a faithful representation of $C_0(\Xi_j, B) \rtimes \Gamma$ 
on $\ell^2(\Gamma, B) $ for any $j \in J$. 

\begin{remark}
For any $j \in J$, we think of the representation $\pi_j: C_0(\Xi_j, B) \rtimes \Gamma \to \End_B( \ell^2(\Gamma, B))$ as recovering 
the bulk system at $\Xi_j$, where $\ell^2(\Gamma, B)$ is now considered a bulk $C^*$-module that is considered without reference to 
an interface or other bulk systems. For general  $j\neq j'$, it may not be possible to \emph{simultaneously}  represent  
$C_0(\Xi_j, B) \rtimes \Gamma$ and $C_0(\Xi_{j'}, B) \rtimes \Gamma$ on the {same} $C^*$-module. The reason 
is that the shifts arising from the representations $\pi_j$ and $\pi_{j'}$ may be incompatible with each other.
\end{remark}

With our preparatory work out of the way, the following result can be easily adapted from~\cite{Mantoiu02}.

\begin{prop}[{cf. \cite[Theorem 1.8]{Mantoiu02}}] \label{prop:BEss_spec_decomp}
Let $A$ satisfy Assumption \ref{assump:B-valued_functions}
and consider a covering of $\partial \Omega$ by quasi-orbits satisfying Equation \eqref{eq:covering}.
If $T \in A\rtimes \Gamma \subset \End_B\big(\ell^2(\Gamma, B) \big)$, then 
\[
  \sigma^{B}_{\mathrm{ess}}( T) = \ol{ \bigcup_{j\in J} \sigma\big(q_j(T)\big)}, \qquad \qquad 
  q_j(T) \in C_0(\Xi_j, B) \rtimes \Gamma,
\]
where the overline means the closure of the set in $\C$.
\end{prop}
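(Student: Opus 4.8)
The plan is to reduce the computation of $\sigma^B_{\mathrm{ess}}(T) = \sigma(q(T))$ to the spectra of the images $q_j(T)$ under the quasi-orbit quotient maps. By Equation~\eqref{eq:B-valued_crossed-prod_extension}, $\sigma^B_{\mathrm{ess}}(T)$ equals the spectrum of the image of $T$ in $C_0(\partial\Omega, B)\rtimes\Gamma$ under the quotient map $A\rtimes\Gamma \to (A\rtimes\Gamma)/(C_0(\Gamma,B)\rtimes\Gamma) \cong C_0(\partial\Omega,B)\rtimes\Gamma$; call this image $\bar T$ and note $\sigma(\bar T) = \sigma(q(T))$ since $\ell^2(\Gamma,B)$ represents $A\rtimes\Gamma$ faithfully with $C_0(\Gamma,B)\rtimes\Gamma$ mapping onto the compacts $\mathbb{K}_B(\ell^2(\Gamma,B))$. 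So the task is to show $\sigma(\bar T) = \ol{\bigcup_{j\in J}\sigma(q_j(T))}$.

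First I would establish the inclusion $\sigma(q_j(T)) \subset \sigma(\bar T)$ for each $j$. By Lemma~\ref{lem:quotient}, the second injective $*$-homomorphism $A\rtimes\Gamma / (C_0(\Gamma,B)\rtimes\Gamma) \hookrightarrow \prod_{j\in J} C_0(\Xi_j,B)\rtimes\Gamma$ is built from the quotient maps $q_j$, which factor through $\bar T$; since an injective $*$-homomorphism into a product of $C^*$-algebras is isometric and each coordinate projection is a $*$-homomorphism, we get $\sigma(q_j(T)) \subset \sigma(\bar T)$, hence $\ol{\bigcup_j \sigma(q_j(T))} \subset \sigma(\bar T)$ (the spectrum is closed). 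For the reverse inclusion, the key point is that the injective $*$-homomorphism $\bar\iota : A\rtimes\Gamma/(C_0(\Gamma,B)\rtimes\Gamma) \to \prod_{j\in J} C_0(\Xi_j,B)\rtimes\Gamma$ of Lemma~\ref{lem:quotient} is isometric, so $\sigma(\bar T) = \sigma_{\prod_j}\big((q_j(T))_{j\in J}\big)$; and the spectrum of an element $(a_j)_j$ in a $C^*$-product $\prod_j C_j$ is exactly $\ol{\bigcup_j \sigma_{C_j}(a_j)}$ — indeed $\lambda - (a_j)_j$ is invertible in $\prod_j C_j$ iff each $\lambda - a_j$ is invertible \emph{and} the inverses are uniformly bounded, which for an element actually lying in the product forces $\lambda \notin \ol{\bigcup_j\sigma(a_j)}$ (away from that closed set $\|(\lambda-a_j)^{-1}\|$ is uniformly bounded by continuity of the resolvent). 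Combining the two inclusions gives the claimed equality.

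The main obstacle is the careful justification that $\sigma(\bar T)$ is computed inside the product algebra via the isometric embedding of Lemma~\ref{lem:quotient}, together with the product-spectrum identity; a subtlety is that $\bar\iota$ need not be surjective, so one cannot naively say the spectrum of $\bar T$ is the union of coordinate spectra, but because $\bar\iota$ is an isometric $*$-homomorphism between \emph{unital} $C^*$-algebras (after unitisation) it preserves spectra of elements, and then one only needs the product-spectrum identity for the particular element $(q_j(T))_j$, which does lie in the product and hence has uniformly bounded resolvent off $\ol{\bigcup_j\sigma(q_j(T))}$. The remaining ingredients — that $\Gamma$-amenability makes the crossed product representation on $\ell^2(\Gamma,B)$ faithful, that $\pi_j$ of Equation~\eqref{eq:quasi-orbit_rep} is a faithful representation of $C_0(\Xi_j,B)\rtimes\Gamma$, and that $q_j(T) = \pi_j(q_j(T))$ can be computed concretely on the module — are all either in place from the earlier discussion or routine, following~\cite{Mantoiu02} essentially verbatim with $\C$ replaced by $B$ throughout.
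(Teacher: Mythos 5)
The paper itself gives no proof of Proposition~\ref{prop:BEss_spec_decomp} (it merely cites~\cite[Theorem 1.8]{Mantoiu02}), so I will assess your argument on its own merits. The overall strategy --- reduce to the quotient algebra $C_0(\partial\Omega,B)\rtimes\Gamma$, embed it into $\prod_j C_0(\Xi_j,B)\rtimes\Gamma$ via Lemma~\ref{lem:quotient}, invoke spectral permanence for injective $*$-homomorphisms, and then compute the spectrum coordinate-wise --- is the right skeleton, and your handling of spectral permanence (via unitisation) is fine.

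The gap is in the final, and crucial, step: the claim that the spectrum of $(a_j)_j$ in a product $C^*$-algebra $\prod_j C_j$ equals $\ol{\bigcup_j\sigma(a_j)}$, justified by ``continuity of the resolvent.'' This is false in general. Continuity of $\lambda\mapsto(\lambda-a_j)^{-1}$ is a pointwise statement in each coordinate $j$ and does not yield a bound on $\|(\lambda-a_j)^{-1}\|$ that is uniform in $j$, even when $\sup_j\|a_j\|<\infty$ and $\lambda$ stays at a fixed positive distance from every $\sigma(a_j)$. A concrete obstruction: in $\prod_n M_n(\C)$ take $a_n$ the $n\times n$ nilpotent Jordan block, so $\sigma(a_n)=\{0\}$ and $\|a_n\|=1$, yet $\|(1-a_n)^{-1}\|=\big\|\sum_{k<n}a_n^k\big\|\to\infty$; hence $1\in\sigma\big((a_n)_n\big)$ even though $\ol{\bigcup_n\sigma(a_n)}=\{0\}$. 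So the inclusion $\sigma(\bar T)\subset\ol{\bigcup_j\sigma(q_j(T))}$, which is exactly what you are trying to establish, is not a formal consequence of the embedding into the product.

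What rescues the argument is \emph{normality} of $T$ (and hence of each $q_j(T)$): for a normal element $a$ in a $C^*$-algebra one has the sharp identity $\|(\lambda-a)^{-1}\| = \operatorname{dist}(\lambda,\sigma(a))^{-1}$, so $\operatorname{dist}\big(\lambda, \ol{\bigcup_j\sigma(q_j(T))}\big)\geq d>0$ does give the uniform resolvent bound $\sup_j\|(\lambda-q_j(T))^{-1}\|\leq 1/d$ and invertibility in the product. This hypothesis appears explicitly in the non-propagation result (Proposition~\ref{pro_non_pro}) and in M\u{a}ntoiu's original Theorem~1.8, where the operators are self-adjoint. Equivalently, one can reduce a general $T$ to the positive elements $(T-\lambda)^*(T-\lambda)$ and $(T-\lambda)(T-\lambda)^*$, but then one again needs a uniform lower bound on $\inf\sigma\big(q_j((T-\lambda)^*(T-\lambda))\big)$, i.e.\ on the smallest singular value of $q_j(T)-\lambda$, and distance from $\lambda$ to $\sigma(q_j(T))$ controls singular values only when $q_j(T)$ is normal. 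You should either restrict the statement to normal $T$ (which is the setting of the cited theorem and suffices for the Hamiltonian/quantum-walk applications), or supply a genuinely different argument exploiting the specific structure of the quasi-orbit representations $\pi_j$ of Equation~\eqref{eq:quasi-orbit_rep} rather than the abstract product embedding.
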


\begin{remark}
Recalling Remark \ref{rk:Stone-vonNeumann_rk}, Proposition \ref{prop:BEss_spec_decomp} also holds in the  setting of a locally compact and exact group 
$G$ acting trivially on $B$ (which is always true in the case $B=M_N(\C)$, for example).  
In this setting, the exact sequence of interest is 
\[
   0 \to \calK\big( L^2(G) \big) \otimes B \to \big( C_0(\Omega) \rtimes G \big) \otimes B \to \big(C_0(\partial\Omega) \rtimes G \big) \otimes B \to 0
\]
with $\Omega$ a $G$-invariant subset of $\beta G$ and $\partial \Omega = \Omega \setminus G$. We can again take a covering of $\partial \Omega$ 
by quasi-orbits to compute the $B$-essential spectrum, where each $C_0(\Xi_j ) \rtimes G \otimes B$ can be faithfully represented on $L^2(G, B)$.
\end{remark}

\subsection{Non-propagation results}

The algebraic framework we have introduced also can be used to derive so-called  non-propagation results. 
This phenomenon has been studied for Schr\"{o}dinger operators and crossed products by $\R^d$ in~\cite[Section 5]{AMP}. 
See also \cite[Theorem 1.12]{MPR} for the case that a magnetic field is also involved. 
We can adapt the proofs to our studies of discrete interfaces without major obstacles.

Given a vector in our interface system $\psi \in \ell^2(\Gamma, B)$  and 
$H=H^* \in \End_B(\ell^2(\Z, B))$, we can consider the time evolution $e^{itH} \psi$ and spatial propagation 
of $\psi$ as $t \to \infty$. In particular, we may ask if a vector does or does not propagate in space towards a 
given bulk system, represented by a quasi-orbit at infinity. We also consider this question for a discrete time evolution,  
$U^n \psi$ with $U \in \End_B(\ell^2(\Gamma, B))$ unitary, which is of relevance for quantum walk and 
Floquet systems.

For any quasi-orbit $\Xi_j \subset \partial \Omega$, let $\calN_j \subset \beta \Gamma$ be a family of sets of the form $W = \calW\cap \Gamma$, 
where $\calW$ belongs to any  neighbourhood base of $\Xi_j$ in $\Omega \subset \beta\Gamma$. Hence 
$W$ is the restriction to $\Gamma$ of any open set in $\Omega$ containing the subset $\Xi_j$. 
We then consider the indicator function $\chi_{{W}} \in \ell^\infty(\Gamma, B) \subset \End_B(\ell^2(\Gamma, B) )$.

\begin{prop}[{cf. \cite[Section 5, Theorem]{AMP}}] \label{pro_non_pro}
Let $A$ satisfy Assumption \ref{assump:B-valued_functions}
and consider a covering of $\partial \Omega$ by quasi-orbits satisfying Equation \eqref{eq:covering}.
Let $\Xi_j$ be a quasi-orbit and $T \in A \rtimes \Gamma \subset \End_B\big(\ell^2(\Gamma,B) \big)$  a 
normal element.
 If $\eta\in C_0\big(\sigma(T)\setminus \{0\} \big)$ satisfies $\supp(\eta)\cap \sigma\big(q_j(T)\big)=\emptyset$, 
then for any $\varepsilon>0$ there exists $W\in \calN_j$ such that
\[
\big\|\chi_{W} \, \eta(T)\big\|_{\End_B(\ell^2(\Gamma, B))} \leq \varepsilon.
\]
If $T = U$ is unitary, the inequality 
\begin{equation}\label{eq_np}
\big\| \chi_{W} \, U^n\;\!\eta(U)\psi\big\|\leq \varepsilon \|\psi\|
\end{equation}
holds, uniformly in $n\in \Z$ and $\psi \in \ell^2(\Gamma,B)$. If
$T = H$ is self-adjoint, the inequality
\begin{equation}\label{eq_np_2}
\big\| \chi_{W} \, \e^{-itH}\;\!\eta(H)\psi\big\|\leq \varepsilon \|\psi\|
\end{equation}
holds, uniformly in $t\in \R$ and $\psi \in \ell^2(\Gamma,B)$.
\end{prop}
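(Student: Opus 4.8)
The plan is to mimic the argument in \cite[Section 5]{AMP}, transporting it to the Hilbert module setting where norms are module norms and the algebra in play is the crossed product $A\rtimes\Gamma$ rather than a crossed product by $\R^d$. The key observation is that the quotient map $q_j: A\rtimes\Gamma \to C_0(\Xi_j, B)\rtimes\Gamma$ has kernel $A_j\rtimes\Gamma$ (Equation \eqref{eq_Aj} and the subsequent exact sequence), so that $\|q_j(x)\| = \inf\{\|x + y\| : y \in A_j\rtimes\Gamma\}$. Given $\eta \in C_0(\sigma(T)\setminus\{0\})$ with $\supp(\eta)\cap\sigma(q_j(T)) = \emptyset$, functional calculus in the quotient gives $\eta(q_j(T)) = q_j(\eta(T)) = 0$, since $\eta$ vanishes on the spectrum of $q_j(T)$ and $\eta(0)=0$ so that $\eta(T)\in A\rtimes\Gamma$ (here one uses that $T$ is normal and that $A\rtimes\Gamma$ is closed under continuous functional calculus vanishing at $0$, even when $A$ is non-unital). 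Hence $\eta(T) \in A_j\rtimes\Gamma$, i.e. $\eta(T)$ lies in the kernel ideal.

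The second step is to approximate $\eta(T)$ within $A_j\rtimes\Gamma$ by a finite sum $\sum_g S_g f_g$ with each $f_g \in A_j = \{f\in C_0(\Omega,B) : f|_{\Xi_j}=0\}$, up to error $\varepsilon/2$ in $\End_B(\ell^2(\Gamma,B))$; this is possible since $C_c(\Gamma, A_j)$ is dense in $A_j\rtimes\Gamma$. For such a finite sum, only finitely many functions $f_g$ are involved, each vanishing on the closed set $\Xi_j$ and vanishing at infinity on $\Omega$. By compactness of $\Xi_j$ in $\partial\Omega$ (noting $\partial\Omega$ is a $\Gamma$-invariant subset of the compact space $\beta\Gamma\setminus\Gamma$) together with continuity, one can choose a single open neighbourhood $\calW$ of $\Xi_j$ in $\Omega$ on which $\sup_{\omega\in\calW}\|f_g(\omega)\|_B$ is as small as desired for every $g$ in the finite set — shrinking $\calW$ drives these sups to zero simultaneously. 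Setting $W = \calW\cap\Gamma \in \calN_j$, the multiplication operator $\chi_W$ then satisfies $\|\chi_W S_g f_g\| = \|\chi_W \alpha_g(f_g)\|$, and since $\alpha_g(f_g)(\omega) = \beta_g(f_g(g^{-1}\cdot\omega))$ with $\beta_g$ isometric, this is bounded by $\sup_{\omega \in g\cdot\calW}\|f_g(\omega)\|_B$; adjusting the neighbourhood to account for the finitely many translates $g\cdot\calW$ (or equivalently using $\Gamma$-invariance of $\Xi_j$ and working with $\bigcap_g g\cdot\calW$, still a neighbourhood of $\Xi_j$) makes each term, and hence the whole finite sum against $\chi_W$, smaller than $\varepsilon/2$. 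Combining with the approximation error gives $\|\chi_W\,\eta(T)\| \leq \varepsilon$.

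For the refined inequalities \eqref{eq_np} and \eqref{eq_np_2}, the point is that $\chi_W$ is a projection-like multiplication operator commuting appropriately, and $U^n$ (resp. $e^{-itH}$) is a \emph{function} of $T=U$ (resp. $T=H$) only when $\eta$ is present as a spectral cutoff. Precisely, $U^n\eta(U) = g_n(U)$ where $g_n(z) = z^n\eta(z)$ still vanishes at $0$ and has support inside $\supp(\eta)$, so $g_n(U) \in A_j\rtimes\Gamma$ with $\|g_n\|_\infty \leq \|\eta\|_\infty$ uniformly in $n$; running the same approximation argument, but now approximating the whole family $\{g_n(U)\}_n$ is not needed — instead one writes $U^n\eta(U)\psi = U^n\eta(U)^{1/2}\cdot\eta(U)^{1/2}\psi$ is the wrong split, so more carefully: approximate $\eta(U)$ itself by $\sum_g S_g f_g \in C_c(\Gamma, A_j)$ up to $\varepsilon/(2\|\psi\|)$-type control, then $\chi_W U^n\eta(U)\psi$; but $\chi_W$ does not commute with $U^n$. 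The clean route, following \cite{AMP}, is: factor $\eta = \eta_1\eta_2$ with $\eta_i \in C_0(\sigma(T)\setminus\{0\})$, $\supp(\eta_i)\cap\sigma(q_j(T))=\emptyset$, apply the first (operator-norm) part to $\eta_1(T)$ to get $W$ with $\|\chi_W\eta_1(T)\| \leq \varepsilon$, and then estimate $\|\chi_W U^n\eta(U)\psi\| = \|\chi_W U^n\eta_1(U)\eta_2(U)\psi\|$ — but this still has $U^n$ in the way. The actual mechanism is that $U^n\eta_1(U)$ has operator norm $\leq\|\eta_1\|_\infty$ and $\chi_W\,U^n\eta(U) = (\chi_W U^n \eta_1(U))\cdot(\text{bounded})$ only if $\chi_W U^n\eta_1(U)$ can be controlled; since $U^n$ is unitary this requires estimating $\chi_W U^n$ on the range of $\eta_1(U)$, which by a commutator/Cook's-method argument in \cite[Section 5]{AMP} reduces to the operator-norm statement applied to shifted cutoffs. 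I would import this lemma verbatim from \cite{AMP}, checking only that each step (density of $C_c$, isometry of $\beta_g$, compactness of $\Xi_j$) survives the replacement of Hilbert space by Hilbert $C^*$-module and of $\calK(L^2(\R^d))$ by $\mathbb{K}_B(\ell^2(\Gamma,B))$ — all of which are routine.

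The main obstacle I anticipate is \emph{not} any single hard estimate but rather the bookkeeping around non-unitality of $A$: one must be careful that $\eta(T)$ genuinely lands in $A\rtimes\Gamma$ (not just its multiplier algebra) when $\eta(0)=0$, which uses that $T \in A\rtimes\Gamma$ together with the fact that $A\rtimes\Gamma$ is a (possibly non-unital) $C^*$-subalgebra stable under such functional calculus, and that the quotient $q_j$ is defined on all of $A\rtimes\Gamma$ compatibly with the faithful representation $\pi_j$ of Equation \eqref{eq:quasi-orbit_rep}. A secondary subtlety is ensuring the neighbourhood $W \in \calN_j$ can be chosen \emph{uniformly} so as to kill all finitely many translated cutoffs appearing in the dynamical refinement; this is handled by the $\Gamma$-invariance of $\Xi_j$, which guarantees that intersecting a neighbourhood with finitely many of its $\Gamma$-translates still yields a neighbourhood of $\Xi_j$.
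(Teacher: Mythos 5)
Your argument for the operator-norm bound $\|\chi_W\,\eta(T)\|\leq\varepsilon$ is essentially the paper's, just spelled out more explicitly: the paper invokes Lemma \ref{lemma:function_in_ideal_condition} to place $\eta(T)\in A_j\rtimes\Gamma$ and then asserts that $\{\one-\chi_W\}_{W\in\calN_j}$ is an approximate unit in $\End_B(\ell^2(\Gamma,B))$ for $A_j\rtimes\Gamma$, while you re-prove that approximate-unit fact by hand (density of $C_c(\Gamma,A_j)$, reducing to finitely many $f_g$ vanishing on $\Xi_j$, shrinking $\calW$, using $\Gamma$-invariance of $\Xi_j$ to handle the finitely many translates). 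That part is sound, up to minor bookkeeping (e.g. $\chi_W\alpha_g(f_g)$ is controlled by $\sup_{\omega\in g^{-1}\cdot\calW}\|f_g(\omega)\|$, not $g\cdot\calW$, but the idea survives).

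However, your treatment of the dynamical refinements \eqref{eq_np} and \eqref{eq_np_2} is a genuine gap. You correctly observe that $\chi_W$ does not commute with $U^n$, then convince yourself this is a real obstacle and go looking for a Cook's-method / commutator argument to import from~\cite{AMP}, without ever landing on the mechanism. The point the paper uses is much simpler and does not require anything from~\cite{AMP}: $U^n$ \emph{does} commute with $\eta(U)$ (both are continuous functions of the normal element $U$), so
\[
\big\|\chi_W\,U^n\,\eta(U)\,\psi\big\| \;=\; \big\|\chi_W\,\eta(U)\,U^n\psi\big\| \;\le\; \big\|\chi_W\,\eta(U)\big\|\,\|U^n\psi\| \;=\; \big\|\chi_W\,\eta(U)\big\|\,\|\psi\| \;\le\; \varepsilon\,\|\psi\|,
\]
using that $U^n$ is an isometry of the module. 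The identical computation with $\e^{-itH}$ in place of $U^n$ gives \eqref{eq_np_2}. In particular there is no need for a uniform-in-$n$ choice of $W$ beyond what the first estimate already provides, no need to factor $\eta=\eta_1\eta_2$, and no need to control $\chi_W U^n$ on the range of $\eta_1(U)$; the split you considered, $U^n\eta(U)=g_n(U)$ with $g_n(z)=z^n\eta(z)$, is true but a red herring. You should replace the final paragraph-and-a-half of your argument with the one-line commutation step above.
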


For the first statement, observe that independently the norms of the operators $\chi_{W}$ and $\eta(T)$ have no reason to be small, 
it is only the product of these two operators which turns out to have a small norm. 
The  \emph{non-propagation} comes from Equations \eqref{eq_np} and \eqref{eq_np_2}. Namely,
if the spectral support of $\psi$ with respect to  $U$ (resp.  $H$) does not
meet the spectrum of the bulk operator $q_j(U)$ (resp. $q_j(H)$)  corresponding to the quasi-orbit $\Xi_j$, 
then $\psi$ can not propagate towards this quasi-orbit under the discrete time evolution $U^n \psi$ (resp. $e^{itH} \psi$).
Additional physical explanations are provided in \cite{AMP} in the framework of an evolution defined by a self-adjoint operator
acting on $L^2(\R^d)$.

The proof of Proposition \ref{pro_non_pro} uses the following lemma.
\begin{lemma}[{\cite[Lemma 1]{AMP}}] \label{lemma:function_in_ideal_condition}
Let $J$ be an closed two-sided ideal in a $C^*$-algebra $E$, and let $q: E\to  E/J $ denote the quotient map. 
If $T\in E$ is a normal element and $\eta\in C_0\big(\sigma(T)\setminus\{0\}\big)$ such that $\eta(z)=0$ for 
all $z \in \sigma\big(q(T)\big)$, then $\eta(T)\in J$. 
\end{lemma}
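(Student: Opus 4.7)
The plan is to exploit the naturality of the continuous functional calculus for normal elements with respect to the quotient $*$-homomorphism $q: E \to E/J$. First I would set up the functional calculus carefully: since $T$ is normal and $\eta \in C_0(\sigma(T)\setminus\{0\})$, we may extend $\eta$ continuously to $\sigma(T) \cup \{0\}$ by setting $\eta(0) = 0$, so that $\eta(T) \in E$ is well defined (working in the unitisation of $E$ if necessary). The image $q(T)$ is a normal element of $E/J$ and satisfies the standard inclusion $\sigma(q(T)) \subset \sigma(T) \cup \{0\}$, so $\eta$ restricts to a continuous function vanishing at $0$ on $\sigma(q(T))$, and $\eta(q(T)) \in E/J$ is likewise well defined.

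The core step is to verify the naturality identity
\[
q\big(\eta(T)\big) = \eta\big(q(T)\big).
\]
For any polynomial $p(z,\bar z)$ without constant term this is immediate, since $q$ is a $*$-homomorphism and $q(p(T, T^*)) = p(q(T), q(T)^*)$. The Stone--Weierstrass theorem shows that any continuous function on the compact set $\sigma(T) \cup \{0\}$ vanishing at $0$ is a uniform limit of such polynomials, and both the continuous functional calculus and $q$ are norm-continuous, so the identity extends by continuity to all $\eta \in C_0(\sigma(T)\setminus\{0\})$.

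Finally, the hypothesis that $\eta$ vanishes identically on $\sigma(q(T))$ gives $\eta(q(T)) = 0$ via the functional calculus in $E/J$. Combined with the naturality identity, this yields $q(\eta(T)) = 0$, i.e.\ $\eta(T) \in \ker q = J$, as required. There is no real obstacle: the only delicate point is bookkeeping in the non-unital case to ensure the extensions of $\eta$ to $\sigma(T) \cup \{0\}$ and $\sigma(q(T))$ are compatible, which is why the hypothesis $\eta \in C_0(\sigma(T)\setminus\{0\})$ (rather than just $C(\sigma(T))$) is imposed.
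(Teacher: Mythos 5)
The paper does not actually prove this lemma: it cites it directly from \cite[Lemma 1]{AMP}, so there is no in-text argument to compare against. That said, your reconstruction is correct and is the standard one: extend $\eta$ by $\eta(0)=0$ to the compact set $\sigma(T)\cup\{0\}$, establish the naturality identity $q(\eta(T))=\eta(q(T))$ by approximating $\eta$ uniformly with $*$-polynomials having no constant term (using that $q$ is a $*$-homomorphism and both $q$ and the functional calculus are norm-continuous), then use the spectral inclusion $\sigma(q(T))\subset\sigma(T)\cup\{0\}$ and the hypothesis $\eta|_{\sigma(q(T))}=0$ to conclude $\eta(q(T))=0$, hence $\eta(T)\in\ker q=J$. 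Your closing remark correctly identifies why the hypothesis is phrased with $C_0(\sigma(T)\setminus\{0\})$ rather than $C(\sigma(T))$: it guarantees the extension to $\sigma(T)\cup\{0\}$ vanishes at $0$, which is needed both for $\eta(T)$ to lie in $E$ (rather than merely $E^\sim$) and for the approximation by constant-free polynomials to go through.
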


\begin{proof}[Proof sketch of Propostion \ref{pro_non_pro}]
We provide the essential details and  refer to the presentation provided in \cite[Section 5]{AMP} for more details. 
We consider Lemma \ref{lemma:function_in_ideal_condition} in the case 
$E = A \rtimes \Gamma $, $J= A_j\rtimes \Gamma$, where  $A_j$ is defined in Equation \eqref{eq_Aj}, and   $q = q_j$ the quotient map. 
By the assumption on $\eta$ and $T$, it then follows that $\eta(T)\in  A_j\rtimes \Gamma$. 
Considering the algebra $A_j\rtimes \Gamma \subset \End_B\big( \ell^2(\Gamma, B) \big)$, one infers 
that the family $\{\one -\chi_{W}\}_{W\in \calN_j}$ defines
an approximate unit in $\End_B\big( \ell^2(\Gamma, B) \big)$ for $A_j\rtimes \Gamma$, 
which leads directly to the inequality $\big\|\chi_{W} \, \eta(T)\big\|\leq \varepsilon$. 
Equations \eqref{eq_np} and \eqref{eq_np_2}  are then direct consequences as
$U^n$ and $\e^{-itH}$ are unitary operators commuting with $U$ and $H$ respectively. 
\end{proof}

\section{Examples} \label{sec:B-valued_examples}

In all the examples below, unless stated otherwise we will take the internal action $\beta: \Gamma \to \Aut(B)$ to be trivial 
for simplicity. That is,  $\alpha_g( f)(x) = f(g^{-1} x)$ for all $f \in \ell^\infty(\Gamma, B)$ and $x,g \in \Gamma$.

\subsection{Case $B = M_N(\C)$ (Hilbert space setting)}

All of our previous constructions (and those below) can be simplified to the case of a fixed Hilbert space by taking 
$B = M_N(\C)$  with the internal action $\beta: \Gamma \to \Aut\big( M_N(\C)\big)$ trivial. We take   the 
Hilbert space $\calH = \ell^2(\Gamma, \C^N)$, which can be acted on 
by the algebra $\ell^\infty( \Gamma, M_N(\C) ) \rtimes \Gamma$.  As an interface Hilbert space, $\ell^2(\Gamma, \C^N)$ describes 
a mixing of finite-dimensional systems, though we can also consider the spectral analysis 
of band-dominated operators on $ \ell^2(\Gamma, \C^N)$ without reference to discrete interfaces.

We consider a subalgebra  $A \subset \ell^\infty( \Gamma, M_N(\C))$ of multiplication operators  
satisfying Assumption \ref{assump:B-valued_functions}. Then 
$A \cong C_0(\Omega, M_N(\C))$ and any $T \in A \rtimes \Gamma$ is a bounded operator on the Hilbert space $ \ell^2(\Gamma, \C^{N} ) $. 
Hence the $B$-essential spectrum reduces to the usual essential spectrum,
\[
  \sigma^{B}_{\mathrm{ess}}( T) = \sigma_\mathrm{ess}(T) = \ol{ \bigcup_{j\in J} \sigma\big(q_j(T)\big)}, \qquad 
  q_j(T) \in C_0( \Xi_j , M_N(\C) ) \rtimes \Gamma.
\]
Also, for any $j \in J$, the corresponding bulk algebra $C_0( \Xi_j , M_N(\C) ) \rtimes \Gamma$ can be faithfully represented on 
$\ell^2(\Gamma, \C^{N})$.

\subsection{Cartesian anisotropy ($\Gamma = \Z^l$)}  \label{subsec:Cartesian_Aniso}
Let us consider an anisotropy introduced in \cite{Ri05} for  Schr\"{o}dinger operators that extends the domain wall 
example considered in Section \ref{sec:DomainWall}. We let $\ol{\Z} = \Z \cup \{-\infty, \infty\}$ be the closure of 
$\Z$ in $[-\infty, \infty]$, the two-point compactification of $\R$. Then for any $l \in \N$, we let $A_{\ol{\Z}^l} = C\big( \ol{\Z}^l, B \big)$, the 
$B$-valued continuous functions on the discrete hypercube.

To explain  $A_{\ol{\Z}^l} = C\big( \ol{\Z}^l, B \big)$ in more detail,  for each $f \in \ell^\infty( \Z, B)$ and $j\in \{1, \dots,l\}$ we  consider two functions $f_{j\pm}: \Z^{l-1}\to B$ depending 
on all variables except the $j$th variable.
That is, we write $f_{j \pm} (x)$ to denote  a function defined on $\Z^l$ but independent of the component $x_j$ of $x \in \Z^l$.
With this convention in mind, the algebra of Cartesian anisotropy in dimension $l$ is defined by 
\begin{align*}
A_{\ol{\Z}^l}:=\Big\{ f \in \ell^\infty\big(\Z^l,B\big) \, \big| \,  & \text{ for all }  j\in \{1,\dots,d\},\text{ there exists } f_{j\pm}\in 
\ell^\infty \big(\Z^{l-1},B\big) \hbox{ with } \\
&\hspace{3cm} \lim_{r\to \infty} \sup_{x\in \Z^l}\big\|\chi_{[r,\infty)}(\pm x_j)\;\!\big(f(x)- f_{j\pm}(x)\big)\big\|_{B}=0 \Big\}.
\end{align*}
The algebra $A_{\ol{\Z}^l}$ is unital and clearly satisfies Assumption \ref{assump:B-valued_functions}. 
We think of the functions $f_{j \pm}$ as the original function $f$ extended 
to the points $\pm \infty$ in the $j$th coordinate, 
$f_{j \pm}(x) \sim f(x_1,\ldots, x_{j-1}, \pm \infty, x_{j+1}, \ldots, x_l)$.

Geometrically, the boundary at infinity $\ol{\Z}^l \setminus \Z^l$ is the hollow discrete hypercube and each 
face represents a separate bulk system. Looking closer at the algebra, 
there is a $\ast$-homomorphism 
$$
A_{\ol{\Z}^l}\ni f \mapsto \bigoplus_{j\in \{1, \dots,l\}} \big(f_{j-},f_{j+})\in \bigoplus_{j\in \{1, \dots,l\}} 
\Big(A_{\ol{\Z}^{l-1}}^{j-},A_{\ol{\Z}^{l-1}}^{j+} \Big) 
$$ 
 with kernel $C_0(\Z^l, B)$, where 
 $A_{\ol{\Z}^{l-1}}^{j \pm} $ is the algebra $A_{\ol{\Z}^{l-1}}$ for which the $l-1$ variables are 
$x_1, \dots, x_{j-1}, x_{j+1}, \dots,x_l$ (with $x_j$ fixed at $\pm \infty$). This $\ast$-homomorphism is not surjective
since compatibility conditions are not taken into account, meaning  equalities of the type
$$
\lim_{x_k\to +\infty} f_{j+}(x_1,\dots, x_k,\dots, x_{j-1}, x_{j+1}, \dots,x_l) 
= \lim_{x_j\to +\infty} f_{k+}(x_1,\dots, x_{k-1},x_{k+1},\dots, x_{j}, \dots,x_l)
$$
for $j\neq k$. Taking the crossed product, we obtain a $*$-homomorphism
\begin{equation}\label{eq_cart_quotient}
\Big(A_{\ol{\Z}^l} /C_0\big(\Z^l,B\big)\Big)\rtimes \Z^l
\stackrel{q}{\longrightarrow}
\bigoplus_{j\in \{1, \dots, l \}}\Big(C\big(\T_j,A_{\ol{\Z}^{l-1}}^{j-}\rtimes \Z^{l-1}\big),C\big(\T_j,A_{\ol{\Z}^{l-1}}^{j+}\rtimes \Z^{l-1}\big)\Big).
\end{equation}
Indeed, since $A_{\ol{\Z}^{l-1}}^{j \pm}$ depends only on $l-1$ variables, 
 the crossed product by $\Z^l$ can 
be decomposed into a crossed product by $\Z^{l-1}$ and a trivial crossed product by $\Z$ in 
the $j$th direction. 
By the Fourier transform, this trivial crossed product leads to the factor $C(\T_j)$. In order to keep 
track of the special variable in each term of this decomposition, we have indicated it by an index $j$.

The $*$-homomorphism $q$ introduced in Equation \eqref{eq_cart_quotient} leads naturally to an 
explicit formula for the computation of the $B$-essential spectrum and 
non-propagation result. For the $B$-essential spectrum, if $T\in A_{\ol{\Z}^l} \rtimes \Z^l$, let us denote by 
$$
\T_j\ni \theta\mapsto T_{j\pm}(\theta)\in C\big(\T_j, A_{\ol{\Z}^{l-1}}^{j \pm} \rtimes \Z^{l-1}\big)
$$
one component of the image of $q(T)$ in Equation \eqref{eq_cart_quotient}.
Recall that this operator is obtained by firstly considering the image at $\pm \infty$ for the variable $x_j$, 
and then by performing a Fourier transform of the resulting operator, with respect to the same variable. 
Then the general formula for the essential spectrum of $T$ reads
$$
\sigma^{B}_{\mathrm{ess}}(T) = \ol{ \bigcup_{j\in \{1, \dots,l \}}\bigcup_{\theta\in \T_j} 
\sigma\big(T_{j-}(\theta)\big)\cup \sigma\big(T_{j+}(\theta)\big) }.
$$

For the non-propagation properties, let us assume that $T$ is normal and 
$ \ol{\bigcup_{\theta \in \T_j}\sigma\big(T_{j+}(\theta)\big)} \varsubsetneqq \sigma^B_{\mathrm{ess}}(T)$
for some $j \in \{1,\ldots, l\}$.
Choose $\eta\in C(\sigma(T))$ such that $\sigma\big(\eta(T)\big)\cap
\ol{\bigcup_{\theta\in \T_j}\sigma\big(T_{j+}(\theta)\big)} =\emptyset$.
Considering $T=U$ unitary or $T=H$ self-adjoint, then applying 
 Proposition \ref{pro_non_pro},
for any $\varepsilon>0$ 
there exists $r_j\in \Z$ (with $r_j\to \infty$ as $\varepsilon \searrow 0$)  such that
the bounds
\begin{equation*}
\big\|\chi_{[r_j,+\infty)} \;\!U^n\;\!\eta(U)\psi\big\|\leq \varepsilon \|\psi\|, \qquad 
\big\|\chi_{[r_j,+\infty)} \;\!e^{itH}\;\!\eta(H)\psi\big\|\leq \varepsilon \|\psi\|
\end{equation*}
hold, uniformly in $n\in \Z$, $t \in \R$ and $\psi \in \ell^2(\Z^l,B)$. 
An entirely analogous result holds for $ \ol{\bigcup_{\theta \in \T_j}\sigma\big(T_{j-}(\theta)\big)} \varsubsetneqq \sigma^{B}_{\mathrm{ess}}(T)$.

When $\beta: \Z^l \to \Aut(B)$ is a non-trivial action, there is an injective $\ast$-homomorphism 
\[
\Big(A_{\ol{\Z}^l} /C_0\big(\Z^l,B\big)\Big)\rtimes \Z^l
\stackrel{q}{\longrightarrow}
\bigoplus_{j\in \{1, \dots, l \}} \big( A_{\ol{\Z}^{l-1}}^{j-} \rtimes \Z^{l}, \, A_{\ol{\Z}^{l-1}}^{j+} \rtimes \Z^{l-1}\big).
\]
Because $\Z_j$ can act non-trivially on $B$, we can not always further decompose via the Fourier transform.

\subsection{Vanishing oscillation}
The algebra of functions with vanishing oscillations\ was considered in~\cite{DS85, Mantoiu02} and 
is related to  the Higson compactification of coarse 
metric spaces, see~\cite[Section 2.3]{Roe03}.
We define the algebra 
$$
A_v:=\big\{ f \in  \ell^\infty(\Gamma, B) 
\mid  \alpha_g(f) - f \in C_0(\Gamma, B) \hbox{ for any }g \in\Gamma \big\},
$$
which is is unital and easily satisfies Assumption \ref{assump:B-valued_functions}. 
The  compactification $\Omega$ of $\Gamma$ such that $A_v = C(\Omega, B)$ can not be easily 
described.
However, we have that each point $\omega \in \partial \Omega = \Omega\setminus \Gamma$ is invariant
under the action of $\Gamma$, precisely because $\alpha_g(f) - f \in C_0(\Gamma, B)$ 
for any $f\in C(\Omega, B)$. As a consequence, the action of $\Gamma$ on $\partial \Omega$ is trivial and 
each quasi-orbit consists of a singleton $\{\omega\} \subset \partial \Omega$. Therefore 
\[
   C( \Xi_\omega, B) \cong  C( \{\omega\}, B) \cong B , \qquad \qquad 
   C( \Xi_\omega, B) \rtimes \Gamma \cong B \rtimes \Gamma \cong C(\wh{\Gamma}, B ),
\]
where we have used that the action of $\Gamma$ on $B$ is trivial and  $\wh{\Gamma}$ is the 
Pontryagin dual, $C^*(\Gamma) \cong C\big(\wh{\Gamma})$. Similarly, 
\[
A_v / C_0(\Gamma, B) = C( \partial \Omega, B)  
\subset \prod_{\omega\in \partial\Omega} B, \qquad \qquad 
\Big( A_v / C_0(\Gamma, B) \Big) \rtimes \Gamma \cong C\big( \partial\Omega\times \wh{\Gamma}, B\big).
\]

If $f \in C(\Omega, B)$, the values taken by $f$ on $\partial\Omega$ correspond to its asymptotic range,
\[
f(\Gamma, B)_{\rm asy} :=\overline{\bigcap_{K\in \calK(\Gamma)} f(\Gamma\setminus K)},
\] 
where $\calK(\Gamma)$ denotes the set of all compact (finite) subsets of $\Gamma$ 
and the closure is in $B$.

Let us consider the  $B$-essential spectra in a simple setting.
If $T = S_g f \in A_v \rtimes \Gamma$ with $f \in A_v$, then 
\[
\sigma^B_{\rm ess}(S_g f) = \ol{ \bigcup_{h \in  f(\Gamma, B)_{\rm asy}} \sigma(S_g h) }.
\]

For non-propoagation, we consider $f \in A_v$ such that 
$U = S_g f \in A_v \rtimes \Gamma$ is unitary. Then for 
$h \in  f(\Z^l, B)_{\rm asy}$, we consider the case that 
 $\sigma(S_g h)\varsubsetneqq \sigma_{\rm ess}^B(U)$. 
Taking a function  $\eta\in C(\T)$ such that $\sigma (\eta(U) )\cap \sigma(S_h g) =\emptyset$, 
 Proposition \ref{pro_non_pro} implies that 
for any $\varepsilon>0$, 
there exists  a sequence of discrete subsets $(W_j)_{j\in \N}\subset \Gamma$ with $|W_j|\to \infty$  such that
\begin{equation*}
\big\|\chi_{W_j} \;\!U^n\;\!\eta(U)\psi\big\|\leq \varepsilon \|\psi\|
\end{equation*}
holds for all $j$, uniformly in $n\in \Z$ and $\psi \in \ell^2(\Gamma, B)$. Observe that the sequence of subsets $W_j$ 
is going to infinity and become more sparse as $j$ increases.

If $\beta: \Gamma \to \Aut(B)$ is a non-trivial action, we have that 
\[
   \big( A_v / C_0(\Gamma, B) \big) \rtimes \Gamma \cong C\big( \partial\Omega, B \rtimes \Gamma \big)
\]
and, like the case of the trivial action,
\[
\sigma^B_{\rm ess}(S_g f) = \ol{ \bigcup_{h \in  f(\Gamma, B)_{\rm asy}} \sigma(S_g h) }.
\]

\subsection{Radial symmetry ($\Gamma = \Z^l$)}  \label{subsec:radial_symm}
 
We start  by describing a compactification of $\Z^l$ and then consider the 
 algebra of $B$-valued functions. 
 We let $\S^{l-1} \subset \R^l$ denote the unit sphere. 
For any open set $\calO\subset \S^{l-1}$, we introduce the truncated cone
$\frakC(\calO,r) \subset \Z^l$,
\begin{equation}\label{eq_tr_cone}
\frakC(\calO,r):=\big(\frakC(\calO)\setminus  \ol{B(\mathbf{0}, r )} \big)\cap \Z^l,
\end{equation}
where $\ol{B(\mathbf{0}, r )} \subset \R^l$ denotes the closed ball with center $0$ and 
radius $r$ and  $\frakC(\calO)$ denotes the infinite open cone in $\R^l$ of apex $0$ that intersects $\S^{l-1}$ on $\calO$.
With these notations, we introduce the compact space $\Omega:=\Z^l\sqcup \S^{l-1}$ made of $\Z^l$ 
together with a sphere $\S^{l-1}$ added \emph{at infinity}.
For its topology, a neighbourhood basis for any open set $\calO\subset \S^{l-1}$ 
is made of elements of the form
$\big(\frakC(\calO,r),\calO\big)\subset \Z^l \sqcup \S^{l-1}$. 
In particular,  if $(x_n)_{n\in \N}$ is a sequence in $\Z^l$ converging to the point 
$\omega\in \S^{l-1}$  at infinity, then for any $y \in \Z^l$ the shifted sequence $(x_n+y)_{n\in \N}$ converges 
to the same element $\omega$. As a consequence of this observation, 
any point on the sphere at infinity is left invariant by the translation action on $\Omega$.

Having described the compact space $\Omega$, we now consider 
$A_r:=C(\Omega,  B)$,  the algebra of functions $f :\Z^l \to B$ 
such that $f$ admits a continuous extension on $\Omega$. 
As one can easily check, this algebra satisfies Assumption \ref{assump:B-valued_functions}.
We also remark that the radially symmetric functions are a special case of functions with 
vanishing oscillation for $\Gamma = \Z^l$, $A_r \subset A_v$.

By construction, the boundary at infinity $\partial \Omega = \S^{l-1}$ and has a trivial $\Z^l$-action. Therefore 
\[
A_r/C_0(\Z^l, B)= C(\S^{l-1}, B), \qquad \quad 
\big(A_r/C_0(\Z^l, B) \big)\rtimes \Z^l \cong C(\S^{l-1}\times \T^l,B ),
\]
where $\S^{l-1}\times \T^l$ is endowed with the product topology.
We can therefore easily compute the $B$-essential spectrum 
\[
  \sigma_{B, \mathrm{ess}}(T) = \ol{\bigcup_{\omega \in \S^{l-1}} \bigcup_{\xi\in \T^l} 
\sigma\big(q(T)(\omega,\xi)\big)}, \qquad 
  T \in A_r \rtimes \Z^l, \qquad q(T)(\omega,\xi) \in B.
\]

Finally, if $T$ is normal and there exists $\omega\in \S^{l-1}$ such that
$\ol{\bigcup_{\xi\in \T^l} \sigma\big(q(T)(\omega,\xi)\big)}  \varsubsetneqq \sigma^{B}_{\mathrm{ess}}(T)$, 
then we can fix $\eta\in C(\sigma(T))$ such that $\sigma\big(\eta(T)\big)\cap
\ol{\bigcup_{\xi\in \T^l} \sigma\big(q(T)(\omega,\xi)\big)} =\emptyset$.
Applying Proposition \ref{pro_non_pro} for $T=U$ unitary or $T=H$ self-adjoint, 
for any $\varepsilon>0$ 
there exists a truncated cone $\frakC(\calO,r)$ with $\calO$ an neighbourhood of $\omega$ in $\S^{l-1}$ such that
the inequalities
\begin{equation*}
\big\|\chi_{ \frakC(\calO,r)} \;\!U^n\;\!\eta(U)\psi\big\|\leq \varepsilon \|\psi\|, \qquad 
\big\|\chi_{\frakC(\calO,r)} \;\!e^{itH}\;\!\eta(H)\psi\big\|\leq \varepsilon \|\psi\|
\end{equation*}
hold uniformly for $n\in \Z$, $t\in \R$ and $\psi \in \ell^2(\Z^l,B)$.

Because $\Z^l$ acts trivially on the boundary, if $\beta: \Z^l \to \Aut(B)$ is non-trivial, then
\[
  \big(A_r/C_0(\Z^l, B) \big)\rtimes \Z^l \cong C( \S^{l-1}, B \rtimes \Z^l), \qquad 
  \sigma_{B, \mathrm{ess}}(T) = \ol{\bigcup_{\omega \in \S^{l-1}} \sigma\big( q(T)(\omega) \big)}, 
\]
where $q(T)(\omega) \in  B \rtimes \Z^l $ for all $\omega \in \S^{l-1}$.

\subsection{Functions asymptotically supported on disjoint cones} \label{subsec:asymp_cone}

Let us consider a non-unital example via 
a subalgebra of the algebra $A_r$ introduced in the previous subsection. 
Let $\{\ol{\calO_1}, \ldots, \ol{\calO_M}\} \subset \S^{l-1}$ be a finite family of closed 
subsets such that $\ol{\calO_j} \cap \ol{\calO_k} = \emptyset$ for all $j\neq k$. 
The subalgebra $A_c$ consists of $f \in A_r$ vanishing on $\Omega\setminus \bigsqcup_{j=1}^M \ol{\calO_j}$.
Equivalently, $A_c$ contains bounded  functions $f: \Z^l \to B$ admitting
a continuous extension to $\Z^l \sqcup \S^{l-1}$ and such that for any $\epsilon>0$ there exists $r>0$ with 
\[
\| f(x) \|_{B} \leq \epsilon \quad 
 \text{for all } x \in \Z^l \setminus \bigsqcup_{j=1}^M \frakC(\ol{\calO_j},r) 
\]
and $\frakC(\ol{\calO_j},r)$ defined as in Equation \eqref{eq_tr_cone}. 
Fixing a  disjoint closed family, $\{\ol{\calO_1}, \ldots, \ol{\calO_M}\} \subset \S^{l-1}$, 
the algebra $A_c$ of functions asymptotically localised on these cones is a translation invariant subalgebra 
of $A_r$ satisfying Assumption \ref{assump:B-valued_functions}.
The space $\Omega \subset \beta\Z^l$ such that $A_c \cong C_0(\Omega, B)$ is locally compact and 
with $\Omega \setminus \Z^l \cong \bigsqcup_{j=1}^M \calO_j$. Therefore we also have that 
\[
A_c / C_0 (\Z^l, B ) =    \bigoplus_{j=1}^M C_0(\calO_j, B).
\]
Considering the crossed product by $\Z^l$, the  points $\omega \in \calO_j \subset \S^{l-1}$ are $\Z^l$-invariant and we can further simplify 
\[
\big( A_c / C_0 (\Z^l, B ) \big) \rtimes \Z^l \cong \bigoplus_{j=1}^M C_0(\calO_j, B) \rtimes \Z^l 
  \cong \bigoplus_{j=1}^M C_0(\calO_j \times \T^l, B) 
\]
We can therefore compute the $B$-essential spectrum 
\[
  \sigma_{B, \mathrm{ess}}(T) = \ol{ \bigcup_{j=1}^M \bigcup_{\omega_j \in \calO_j} \bigcup_{\xi\in \T^l} 
\sigma\big(q(T)(\omega,\xi)\big)}, \qquad 
  T \in A_c \rtimes \Z^l.
\]
The non-propagation properties of unitary or self-adjoint elements in $A_c \rtimes \Z^l$ are entirely 
analogous to the case of the radially symmetric functions from Section \ref{subsec:radial_symm}, so we omit the details.

\subsection{Examples from discrete spaces with a cocompact  $\Gamma$-action}

We briefly outline how our framework can also be applied to interfaces that are not a group, but come with cocompact (cofinite) 
group action. Namely, we consider $\ell^2(X, B)$, where $X$ is discrete and there is an action 
\[
   \Gamma \times X \to X, \qquad \Gamma \backslash X \cong X_0, \qquad |X_0| < \infty.
\]
An important example comes from topological crystals, where $X_0$ describes the points in the unit cell and $\Gamma \cong \Z^l$ for 
some $l\geq 1$. See~\cite{Sunada} for further details and many examples. 

Because the action of $\Gamma$ on $X$ is cocompact, there is a natural identification 
$\ell^2(X, \C) \cong \ell^2 \big(\Gamma, \C^{|X_0|} \big)$, which also extends to the Hilbert $C^*$-module 
\[
   \ell^2(X, B) \cong \ell^2\big( \Gamma, B^{|X_0|} \big).
\]
If $B$ has an internal $\Gamma$-action $\beta: \Gamma \to \Aut(B)$, then this extends diagonally to an action 
of $\Gamma$ on $B^{|X_0|}$. Hence, taking $\tilde{B} = B^{|X_0|}$,  our interface $C^*$-module is 
$\ell^2(\Gamma, \tilde{B})$ and we are back in the setting of 
Section \ref{sec:CStar_mod_interface}. Therefore our framework and results can also be applied to topological crystals and other 
discrete spaces with a cocompact $\Gamma$-action.

\subsection{Examples from open or closed $\Gamma$-invariant subsets at infinity} \label{subsec:Pullback_example}

Given a locally compact subset $\Omega \subset \beta \Gamma$ such that $A = C_0(\Omega, B)$ satisfies 
Assumption \ref{assump:B-valued_functions}, the various bulk systems and the $B$-essential spectrum of elements in $C_0(\Omega, B)\rtimes \Gamma$ 
are determined by the subset $ \partial \Omega \subset \beta\Gamma \setminus \Gamma$. Here we go the other 
direction. Suppose that $Z$ is a $\Gamma$-invariant and locally compact subset of $\beta\Gamma \setminus \Gamma$. 
We also assume that $Z$ is an open or closed subset of 
$\beta \Gamma \setminus \Gamma$ and treat these cases separately.

If $Z \subset \beta\Gamma \setminus \Gamma$ is open, then we can extend any function $h \in C_0( Z, B)$ by $0$ to obtain 
a function $h \in C(\beta\Gamma \setminus \Gamma, B)$ and injective $\ast$-homomorphism 
$\tau: C_0(Z, B)\to C(\beta\Gamma \setminus \Gamma,  B)$. We consider the algebra 
\[
  A_Z = \big\{ f \in C(\beta \Gamma, B) \mid q_\beta(f) \in C_0(Z) \big\}, \qquad q_\beta: C(\beta \Gamma, B) \to C(\beta\Gamma \setminus \Gamma, B),
\]
which naturally fits in the short exact sequence 
\[
  0 \to C_0(\Gamma, B) \to A_Z \to C_0(Z, B) \to 0.
\]
Furthermore, $A_Z \cong C_0( \wt{Z}, B)$ with $\wt{Z} \cong q_\beta^{-1}(Z)$ an open subset of $\beta \Gamma$.

If $Z$ is a closed subset of $\beta\Gamma \setminus \Gamma$, then any function $h\in C(Z, B)$ can be 
extended to a function in $C(\beta\Gamma \setminus \Gamma, B)$ 
by the Tietze Extension Theorem. 
Fixing such an extension, we obtain an injective $\ast$-homorphism $\tau: C(Z, B) \to C(\beta\Gamma \setminus \Gamma, B)$. 
We can again build a short exact sequence via the pullback $C^*$-algebra, 
\[
  0 \to C_0(\Gamma, B) \to A_Z \to C(Z, B) \to 0,
\]
where 
\[
   A_Z = \big\{ (f, h) \in C(\beta\Gamma, B) \oplus C(Z, B) \mid q_\beta(f) = \tau(h) \big\}
\]
and the map $C_0(\Gamma, B) \to A_Z$ is given by $f\mapsto (f, 0)$.

For $Z$ open or closed, $A_Z$ has a $\Gamma$-action and fits into 
our general framework. In particular,  we have a short exact sequence of crossed products,
\[
   0 \to \calK\big(\ell^2(\Gamma)\big) \otimes B \to A_Z \rtimes \Gamma \to C_0(Z, B)\rtimes \Gamma \to 0.
\]

While this pullback construction is rather ad-hoc,   given a single or collection of bulk systems 
spatially  localised in the 
set $Z \subset \beta\Gamma \setminus \Gamma$ at infinity, we can find an algebra $A_{Z}$ of multiplication 
operators on $\End_B(\ell^2(\Gamma, B))$ whose crossed product by $\Gamma$ 
 will give a model of the dynamics on the interface $\ell^2(\Gamma, B)$. The advantage of the algebra $A_{Z}$ over the 
universal algebra $C(\beta \Gamma, B) \rtimes \Gamma$ is that the quotient of $A_{Z}\rtimes \Gamma$ by the compact operators 
$\calK( \ell^2(\Gamma) )\otimes B$ will precisely take us to $C_0(Z, B) \rtimes \Gamma$, the bulk system  of interest to us.

The $B$-essential spectrum of any element $T \in A_Z \rtimes \Gamma$ is given by the spectrum of 
$q(T) \in C_0(Z, B) \rtimes \Gamma$. Though in general  $C_0(Z, B) \rtimes \Gamma$ might not have a 
faithful representation on $\ell^2(\Gamma, B)$. This can be rectified by taking a covering of $Z$ by quasi-orbits 
$\{\Xi_j\}_{j\in J}$ where each $C_0(\Xi_j, B) \rtimes \Gamma$ can be faithfully represented on the interface $C^*$-module 
$\ell^2(\Gamma, B)$. Indeed, the existence of the canonical representation from Equation \eqref{eq:quasi-orbit_rep} is one 
of the key reasons we consider a covering of the boundary at infinity by quasi-orbits.

\section{Interfaces of gapped systems and index theory} \label{sec:Interface_index}

So far we have established a rather general framework to consider discrete interfaces described by a discrete abelian group $\Gamma$ 
and shown some spectral consequences of this algebraic approach. Our general model uses  an algebra $A$ satisfying Assumption \ref{assump:B-valued_functions}
 and whose  spatial asymptotics  is  used to access the various bulk systems at infinity. 

Building from our knowledge of topological phases of matter, we are often interested in bulk systems possessing a spectral gap. 
The physical principle of the bulk-boundary/defect/interface correspondence then says that topological properties of the interface 
should be closely linked to a bulk topological invariant at infinity. 

In this section, our aim is to define an interface index taking values in the $K$-theory group of the ambient ($\Gamma$-defect) $C^*$-algebra $B$. 
Using the interface extension 
\[
   0 \to \calK( \ell^2(\Gamma)) \otimes B \to C_0(\Omega, B)\rtimes  \Gamma \to C_0( \partial \Omega, B )\rtimes \Gamma \to 0,
\]
we then relate this interface index to quantities determined purely by the bulk system at infinity. In general there no guarantee of a 
one-to-one equivalence between bulk and interface indices, but certainly a non-trivial interface index must imply non-trivial bulk indices.
We also show that a $\Gamma$-equivariant decomposition of the boundary at infinity $\partial \Omega$ leads to a decomposition of the interface index.

This section will assume some  knowledge of Fredholm operators on $C^*$-modules and the theory of 
 extensions of $C^*$-algebras. A brief overview is given in Appendices \ref{sec:KK} and \ref{sec:extensions}.

\subsection{Fredholm operators in interface systems}

Recall that our interface system is described by the $C^*$-module $\ell^2(\Gamma, B)$ with dynamics modelled by 
an algebra $C_0( \Omega, B) \rtimes \Gamma \subset \End_B( \ell^2(\Gamma, B) )$.
An operator $F \in \End_B( \ell^2(\Gamma, B) )$ is Fredholm if $q(F) \in \calQ_B( \ell^2(\Gamma, B) )$ is invertible. That is, 
\[
   F \ \  \text{Fredholm} \quad \Longleftrightarrow \quad 0 \notin \sigma^{B}_{\mathrm{ess}}( F).
\]

We can therefore use our results on the $B$-essential spectrum to infer Fredholm properties of interface operators. 

\begin{prop} \label{prop:uniformly_gapped_bulk_gives_Fredholm}
Let $A \cong C_0(\Omega, B)$ satisfy Assumption \ref{assump:B-valued_functions}
and consider a covering of $\partial \Omega$ by quasi-orbits satisfying Equation \eqref{eq:covering}.
If there is some $F \in \big(C_0(\Omega, B) \rtimes \Gamma \big)^{\sim}$ and $\epsilon > 0$ such that 
$\sigma\big( q_j(F) \big) \cap (-\epsilon, \epsilon)=\emptyset$ for all $j \in J$, 
 then $F$ is Fredholm on $\ell^2(\Gamma,B)$.
\end{prop}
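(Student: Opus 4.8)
The plan is to reduce the Fredholm property of $F$ to the invertibility of a single element in a boundary crossed product, and then to take that element apart over the quasi-orbits exactly as in Proposition~\ref{prop:BEss_spec_decomp}. Recall from the discussion preceding the statement that $F$ is Fredholm on $\ell^2(\Gamma,B)$ if and only if $0\notin\sigma^B_{\mathrm{ess}}(F)=\sigma\big(q(F)\big)$, where $q\colon\End_B(\ell^2(\Gamma,B))\to\calQ_B(\ell^2(\Gamma,B))$ is the quotient by the compact endomorphisms. By Equation~\eqref{eq:B-valued_crossed-prod_extension} the compact endomorphisms $\mathbb{K}_B(\ell^2(\Gamma,B))\cong\calK(\ell^2(\Gamma))\otimes B$ are exactly the ideal $C_0(\Gamma,B)\rtimes\Gamma$ of $C_0(\Omega,B)\rtimes\Gamma$. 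Hence, for $F$ in the unitisation $\big(C_0(\Omega,B)\rtimes\Gamma\big)^{\sim}\subset\End_B(\ell^2(\Gamma,B))$, the image $q(F)$ lies in the unitised boundary algebra $\big(C_0(\partial\Omega,B)\rtimes\Gamma\big)^{\sim}\subset\calQ_B(\ell^2(\Gamma,B))$, and it suffices to prove that $q(F)$ is invertible there.

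For this I would use the unitised form of Lemma~\ref{lem:quotient}: the chosen covering of $\partial\Omega$ by quasi-orbits furnishes a unital injective $*$-homomorphism $\big(C_0(\partial\Omega,B)\rtimes\Gamma\big)^{\sim}\hookrightarrow\prod_{j\in J}\big(C_0(\Xi_j,B)\rtimes\Gamma\big)^{\sim}$ carrying $q(F)$ to $(q_j(F))_{j\in J}$. Since such a homomorphism is isometric and its image is inverse-closed, $q(F)$ is invertible if and only if $(q_j(F))_{j\in J}$ is invertible in the product, that is, if and only if every $q_j(F)$ is invertible and $\sup_{j\in J}\|q_j(F)^{-1}\|<\infty$. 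The first condition is immediate, since $0\in(-\epsilon,\epsilon)$ while $\sigma(q_j(F))\cap(-\epsilon,\epsilon)=\emptyset$. For the second, in the self-adjoint case of interest (a flattened or gapped bulk operator) each $q_j(F)$ is self-adjoint, so $\|q_j(F)^{-1}\|=\mathrm{dist}\big(0,\sigma(q_j(F))\big)^{-1}\le\epsilon^{-1}$, uniformly in $j$; for general $F$ this uniform bound is what Proposition~\ref{prop:BEss_spec_decomp}, once extended to the unitisation, already encodes through the identity $\sigma^B_{\mathrm{ess}}(F)=\overline{\bigcup_{j\in J}\sigma(q_j(F))}$. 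Either way $q(F)$ is invertible, so $0\notin\sigma^B_{\mathrm{ess}}(F)$ and $F$ is Fredholm.

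The only genuinely delicate point is the uniformity over the possibly infinite index set $J$: a pointwise gap $0\notin\sigma(q_j(F))$ for each $j$ would not suffice, since the closure $\overline{\bigcup_{j\in J}\sigma(q_j(F))}$ could still reach $0$, and it is precisely the single $\epsilon$ in the hypothesis that rules this out by bounding $\sup_{j}\|q_j(F)^{-1}\|$. The rest is the routine but slightly fiddly bookkeeping required to rerun Lemma~\ref{lem:quotient} and Proposition~\ref{prop:BEss_spec_decomp} on minimal unitisations rather than on $C_0(\Omega,B)\rtimes\Gamma$ itself — unavoidable here because $F$, being invertible only modulo the non-unital ideal $C_0(\Gamma,B)\rtimes\Gamma$, need not lie in $C_0(\Omega,B)\rtimes\Gamma$.
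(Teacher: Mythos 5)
Your proof is correct and takes essentially the same route as the paper's, whose entire argument is to invoke Proposition~\ref{prop:BEss_spec_decomp} to write $\sigma^B_{\mathrm{ess}}(F)=\overline{\bigcup_{j\in J}\sigma(q_j(F))}$ and observe that the uniform $\epsilon$-gap keeps $0$ out of this closed set. Your elaboration through the product embedding from Lemma~\ref{lem:quotient} and the uniform bound $\sup_j\|q_j(F)^{-1}\|\le\epsilon^{-1}$ is a faithful unpacking of that one line rather than a different strategy, and your closing remark that Proposition~\ref{prop:BEss_spec_decomp} is stated for $T\in C_0(\Omega,B)\rtimes\Gamma$ rather than its unitisation correctly identifies a small point that the paper's proof silently glosses over.
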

\begin{proof}
Because  there is a uniform bound on the spectrum of $q_j(F)$ away from $0$, Proposition \ref{prop:BEss_spec_decomp} 
implies that 
$0 \notin \ol{\bigcup_{j \in J} \sigma\big(q_j(F)\big)} = \sigma^{B}_{\mathrm{ess}}( F)$.
\end{proof}

Recall that we can think of the quasi-orbits $C_0(\Xi_j, B) \rtimes \Gamma$ as representing different bulk 
systems at infinity. Proposition \ref{prop:uniformly_gapped_bulk_gives_Fredholm} says that if all our bulk systems are 
\emph{uniformly} gapped, then they can be pulled back to a single Fredholm operator on the interface system $\ell^2(\Gamma,B)$.
Clearly if there are only finitely many quasi-orbits, then the uniformity of the spectral gap is  satisfied.

We can also go in the other direction. Namely, a Fredholm operator on $\ell^2(\Gamma, B)$ will give invertible (gapped) operators 
in each bulk system.

\begin{lemma} \label{lemma:Fredholm_is_invertible_on_quasi_orbit}
Let $A \cong C_0(\Omega, B)$ satisfy Assumption \ref{assump:B-valued_functions}
and consider a covering of $\partial \Omega$ by quasi-orbits satisfying Equation \eqref{eq:covering}.
If $F \in \big(C_0(\Omega, B) \rtimes \Gamma \big)^{\sim}$ is
Fredholm, then for all $j \in J$, $q_j(F) \in \big( C_0(\Xi_j)\rtimes \Gamma \big)^{\sim}$ 
is invertible.
\end{lemma}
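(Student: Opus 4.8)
The plan is to deduce the invertibility of each $q_j(F)$ from the Fredholm hypothesis on $F$ together with the fact, established in Equation \eqref{eq:quasi-orbit_rep}, that each $C_0(\Xi_j, B) \rtimes \Gamma$ is faithfully represented on $\ell^2(\Gamma, B)$. Concretely, since $F$ is Fredholm on $\ell^2(\Gamma, B)$ we have $0 \notin \sigma^{B}_{\mathrm{ess}}(F)$, so $q(F)$ is invertible in $\calQ_B(\ell^2(\Gamma, B))$. Now the composition of the quotient map $A \rtimes \Gamma \to (A \rtimes \Gamma)/(\mathbb{K}_B(\ell^2(\Gamma, B)))$ with the faithful representation of $C_0(\Xi_j, B) \rtimes \Gamma \subset \calQ_B(\ell^2(\Gamma, B))$ recovers the quotient map $q_j$ up to identification, because $\bigcap_j A_j \rtimes \Gamma \cong \calK(\ell^2(\Gamma)) \otimes B \cong \mathbb{K}_B(\ell^2(\Gamma, B))$ and $A_j \rtimes \Gamma \subseteq \ker q_j$; so $q_j$ factors through the Calkin quotient. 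Since invertibility is preserved under $\ast$-homomorphisms sending units to units, $q_j(F)$ is invertible in $(C_0(\Xi_j, B) \rtimes \Gamma)^{\sim}$.

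In more detail, I would first fix $j \in J$ and use the short exact sequence
\[
   0 \to A_j \rtimes \Gamma \to A \rtimes \Gamma \xrightarrow{\, q_j \,} C_0(\Xi_j, B)\rtimes \Gamma \to 0
\]
established above, extended to minimal unitisations. Since $\calK(\ell^2(\Gamma)) \otimes B = \bigcap_{j'\in J} A_{j'}\rtimes\Gamma \subseteq A_j \rtimes \Gamma$, the Calkin quotient map $q: (A\rtimes\Gamma)^{\sim} \to \calQ_B(\ell^2(\Gamma,B))$ factors through $q_j$: there is a $\ast$-homomorphism $\rho_j: (C_0(\Xi_j, B)\rtimes\Gamma)^{\sim} \to \calQ_B(\ell^2(\Gamma,B))$ with $\rho_j \circ q_j = q$ on $(A\rtimes\Gamma)^{\sim}$, coming from the inclusion $A_j\rtimes\Gamma \subseteq \mathbb{K}_B(\ell^2(\Gamma,B))$ being false in general—so here I must be slightly careful: the factorisation is through $q$ followed by the faithful representation $\pi_j$ of Equation \eqref{eq:quasi-orbit_rep}, not the other way. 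The cleanest route is: $\pi_j$ faithfully embeds $C_0(\Xi_j, B)\rtimes\Gamma$ into $\End_B(\ell^2(\Gamma,B))$, the composite $q \circ \pi_j$ kills nothing (as $\pi_j$ maps into operators whose restriction to $A\rtimes\Gamma$ agrees with $q_j$ modulo compacts), hence $q\circ\pi_j$ is injective on $C_0(\Xi_j,B)\rtimes\Gamma$, and one checks $q(F) = (q\circ\pi_j)(q_j(F))$ in the Calkin algebra after unitising. Since $q(F)$ is invertible and $q\circ\pi_j$ is a unital injective $\ast$-homomorphism (hence isometric), $q_j(F)$ is invertible.

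The main obstacle will be getting the bookkeeping of the various representations and quotient maps exactly right — in particular verifying the identity $q(F) = (q\circ\pi_j)(q_j(F))$ in $\calQ_B(\ell^2(\Gamma,B))$, i.e. that applying $\pi_j$ to $q_j(F)$ and then passing to the Calkin algebra reproduces $q(F)$. This amounts to showing that $F - \pi_j(q_j(F))$ lies in $\mathbb{K}_B(\ell^2(\Gamma,B))$, which follows because $F - \pi_j(q_j(F))$ is the image under the multiplication-operator representation of an element of $A_j$ (twisted by shifts), together with the observation that $\pi_j^0(f) \to 0$ in norm along an approximate unit of $A_j$ when $f \in A_j$, so that elements of $A_j\rtimes\Gamma$ are sent by $\pi_j$ into a fixed ideal. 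Actually the correct statement is that $\pi_j$ agrees with the identity representation modulo compacts precisely on the image of $A\rtimes\Gamma$, which is the content of the adaptation of \cite{Mantoiu02} already used in Proposition \ref{prop:BEss_spec_decomp}; invoking that, the identity $q(F) = (q\circ\pi_j)(q_j(F))$ is essentially immediate, and the rest is the trivial remark that injective unital $\ast$-homomorphisms reflect invertibility. I expect the proof in the paper to be short, citing the setup around Equation \eqref{eq:quasi-orbit_rep} and Proposition \ref{prop:BEss_spec_decomp}.
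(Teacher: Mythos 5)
Your opening observation---that $F$ Fredholm gives $0\notin\sigma^B_{\mathrm{ess}}(F)$---is exactly the paper's starting point, and the paper then finishes in one line: by Proposition~\ref{prop:BEss_spec_decomp}, $\sigma(q_j(F))\subseteq\sigma^B_{\mathrm{ess}}(F)$ for every $j$, so $0\notin\sigma(q_j(F))$ and each $q_j(F)$ is invertible. You are working much harder than necessary, and in doing so you introduce a claim that is actually false.

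The problematic step is the identity $q(F)=(q\circ\pi_j)(q_j(F))$ in $\calQ_B(\ell^2(\Gamma,B))$, equivalently that $F-\pi_j(q_j(F))\in\mathbb{K}_B(\ell^2(\Gamma,B))$. This does not hold. Already in the domain-wall case $\Gamma=\Z$, $\Omega=\ol{\Z}$, $\Xi_+=\{+\infty\}$: take $f\in C(\ol\Z,B)$ with $f(-\infty)=a\neq b=f(+\infty)$. Then $\pi^0_+(f|_{\Xi_+})$ is the constant multiplication by $b$, while $f$ is multiplication by $f(x)$, so $f-\pi^0_+(f|_{\Xi_+})$ is multiplication by $f(x)-b$, which tends to $a-b\neq 0$ as $x\to-\infty$ and is therefore not in $C_0(\Z,B)$. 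Nothing in the proof of Proposition~\ref{prop:BEss_spec_decomp} asserts that $\pi_j$ agrees with the defining representation modulo compacts on all of $A\rtimes\Gamma$---that would force $\sigma^B_{\mathrm{ess}}(T)=\sigma(q_j(T))$ for each $j$ separately, contradicting the union formula. In fact your first paragraph already had the correct, simpler idea: since $\ker q_j = A_j\rtimes\Gamma\supseteq\bigcap_{j'}A_{j'}\rtimes\Gamma=\calK\otimes B$, the map $q_j$ factors through the Calkin quotient $q$ restricted to $(A\rtimes\Gamma)^\sim$, giving a unital $\ast$-homomorphism $\tilde q_j$ with $\tilde q_j\circ q=q_j$; spectral permanence in $C^*$-algebras makes $q(F)$ invertible in $(q(A\rtimes\Gamma))^\sim$, and unital $\ast$-homomorphisms preserve invertibility, so $q_j(F)$ is invertible. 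That argument is sound and does not need $\pi_j$ at all---and it is really just unwinding what Proposition~\ref{prop:BEss_spec_decomp} already says, which is why the paper simply cites it.
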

\begin{proof}
Because $F$ is Fredholm, $0 \notin \sigma^{B}_{\mathrm{ess}}(F)$. The result then 
follows from Proposition \ref{prop:BEss_spec_decomp}.
\end{proof}

\subsection{The Busby invariant for the interface short exact sequence}

Recalling Assumption \ref{assump:B-valued_functions}, we have an algebra $A = C_0(\Omega, B)$, which fits in to the 
short exact sequence 
\begin{equation} \label{eq:Interface_extension_later}
   0 \to C_0( \Gamma, B) \rtimes \Gamma  \to C_0( \Omega, B) \rtimes \Gamma \xrightarrow{ \, q \, } C_0( \partial \Omega, B) \rtimes \Gamma \to 0,
\end{equation}
where $\partial\Omega = \Omega \setminus \Gamma$. Our aim is to study $K$-theoretic quantities from this short exact sequence. We can 
connect a short exact sequence to $K$-theory provided there exists a positive semi-splitting, a completely positive and contractive map 
$\rho: C_0( \partial \Omega, B) \rtimes \Gamma \to C_0( \Omega, B) \rtimes \Gamma$ such that $q \circ \rho = \mathrm{Id}$. To ensure 
the existence of this semi-splitting,  we 
require further assumptions on our algebra $C_0( \Omega, B)$ and space $\Omega$.

\begin{assumption}  \label{assump:Extension_works}
Let $A \cong C_0(\Omega, B) \subset \ell^\infty(\Gamma, B)$ be a $C^*$-algebra satisfying Assumption \ref{assump:B-valued_functions}. 
We further assume that  the quotient $\partial \Omega = \Omega \setminus \Gamma$ is second countable.
\end{assumption}

The assumption that $\partial \Omega$ is second countable is essential. For example, the short exact sequence 
\[
   0 \to C_0(\Z) \to \ell^\infty( \Z) \to \ell^\infty(\Z)/ C_0(\Z) \to 0
\]
is not semi-split. Hence we restrict to algebras $A \cong C_0(\Omega, B)$ such that 
$C_0(\partial \Omega, B)$ is separable. 

Assumption  \ref{assump:Extension_works} is sufficient to obtain a completely 
positive and contractive semi-splitting 
$\rho: C_0( \partial \Omega, B) \rtimes \Gamma \to C_0( \Omega, B) \rtimes \Gamma \subset \End_B( \ell^2(\Gamma, B) )$ 
such that $q \circ \rho = \mathrm{Id}$. The existence of this semi-splitting is well-known as a special case of the 
Choi--Effros Lifting Theorem~\cite{ChoiEffros}, though we review the construction as it will be useful to us later.

We construct a semi-splitting in steps. First, we look at the short exact sequence 
\[
   0 \to C_0(\Gamma) \to C_0(\Omega) \to C_0(\partial \Omega) \to 0
\]
with $C_0(\Gamma)$ an essential ideal in $C_0(\Omega)$. Hence we can consider 
$C_0(\Omega) \subset \Mult( C_0(\Gamma) ) \cong C(\beta \Gamma)$ and $C_0(\partial \Omega) \subset C( \beta \Gamma \setminus \Gamma)$.

\begin{lemma} [{cf. \cite[Proposition 2.4.2]{BrownOzawa}}] \label{lemma:scalar_cp_splitting}
There is a completely positive and contractive map $\rho_{0}: C_0(\partial \Omega) \to C(\beta \Gamma)$ such that $q \circ \rho_{0} = \mathrm{Id}$. 
Furthermore, we can take $\rho_0$ to be a $\Gamma$-equivariant map.
\end{lemma}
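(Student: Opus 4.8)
The plan is to construct $\rho_0$ by hand using a partition of unity adapted to the topology of $\overline{\Gamma} := \beta\Gamma$, exploiting that $C_0(\partial\Omega)$ is separable (Assumption \ref{assump:Extension_works}) and that $\Gamma$ is discrete so that $\Mult(C_0(\Gamma)) \cong C(\beta\Gamma) \cong \ell^\infty(\Gamma)$. The key point is that a positive unital (or contractive) \emph{linear} section always exists for a quotient of abelian $C^*$-algebras corresponding to a closed subspace inclusion, and such a section is automatically completely positive since the domain $C_0(\partial\Omega)$ is abelian (positive maps out of abelian $C^*$-algebras are completely positive, by Stinespring/Arveson). So the work reduces to producing a positive contractive linear section, and then arranging $\Gamma$-equivariance.

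First I would reduce to the unital case: adjoining units, $C(\beta\Gamma \setminus \Gamma)$ is a quotient of $C(\beta\Gamma)$ by the ideal $C_0(\Gamma)$, corresponding to the closed inclusion $\beta\Gamma \setminus \Gamma \hookrightarrow \beta\Gamma$. One first builds a positive unital linear section at the level of the unitizations and then corrects it to land in (the unitization of) $C_0(\partial\Omega) \subset C(\beta\Gamma \setminus \Gamma)$ and $C_0(\Omega) \subset C(\beta\Gamma)$ — since $C_0(\partial\Omega)$ is separable, choose a countable approximate unit and use it to cut down, or directly note $C_0(\partial\Omega)$ is itself the section of a quotient of $C_0(\Omega)$ by $C_0(\Gamma)$ and repeat the construction with $C_0$ in place of $C$. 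Second, for the unital abelian section: by separability of $C_0(\partial\Omega)$ (equivalently metrizability of the relevant quotient space of $\beta\Gamma \setminus \Gamma$, or more simply: pick a countable dense $\ast$-subalgebra and a faithful positive functional), invoke the Bartle--Graves / Michael selection theorem, or more concretely the fact that for a quotient map of compact Hausdorff spaces $p: Y \twoheadrightarrow X$ with $Y$ admitting a suitable structure, $C(X) \to C(Y)$ has a positive linear section — in our setting the cleanest route is the Andersen/effros lifting, or just the Tietze extension theorem applied coordinatewise. Actually the simplest honest argument: $\beta\Gamma \setminus \Gamma$ is a closed subset of $\beta\Gamma$, so by the Tietze extension theorem every $h \in C(\beta\Gamma \setminus \Gamma)$ extends to $\tilde h \in C(\beta\Gamma)$ with $\|\tilde h\|=\|h\|$; the issue is making the extension \emph{linear} in $h$, which is exactly where separability enters — one builds a Schauder-type basis argument or invokes the Borsuk--Dugundji / Michael continuous selection theorem for the set-valued map $h \mapsto \{$norm-preserving extensions of $h\}$, whose values are nonempty closed convex subsets of the separable Banach space $C(\beta\Gamma)$ with lower semicontinuous dependence. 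The Michael selection theorem then yields a continuous, and after averaging (convexity) a linear, positive contractive section $\rho_0$.

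Third, for $\Gamma$-equivariance: $\Gamma$ acts on $C(\beta\Gamma)$ and $C(\beta\Gamma \setminus \Gamma)$ by the translation action $\alpha$, and this action is by automorphisms with $\Gamma$ amenable \emph{as a group but acting on an infinite space} — so one cannot simply average over $\Gamma$. Instead I would use a standard trick: let $\rho_0'$ be the (not necessarily equivariant) section just constructed, and define $\rho_0 := \lim_{\mathcal U} \frac{1}{|F_n|}\sum_{g \in F_n} \alpha_g \circ \rho_0' \circ \alpha_g^{-1}$ along a Følner sequence $(F_n)$ and an ultrafilter $\mathcal U$, taking the limit in the point-weak-$\ast$ topology on bounded maps $C_0(\partial\Omega) \to C(\beta\Gamma) \cong \ell^\infty(\Gamma)$ (which has a predual, so bounded sets are weak-$\ast$ compact). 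Each averaged map is completely positive and contractive and splits $q$ (since $q$ is $\Gamma$-equivariant); the Følner property forces the ultralimit to be genuinely $\Gamma$-equivariant, $\alpha_h \circ \rho_0 = \rho_0 \circ \alpha_h$; and complete positivity, contractivity, and $q\circ\rho_0 = \mathrm{Id}$ all pass to the limit since they are preserved under point-weak-$\ast$ limits of u.c.p./c.c.p. maps.

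\textbf{Main obstacle.} The delicate point is the existence of a \emph{linear} positive contractive section in the third paragraph — pointwise Tietze extension is easy but nonlinear, and turning it into a bounded linear map requires either the Michael/Bartle--Graves selection machinery (where verifying lower semicontinuity of the extension multimap and running the averaging to get linearity needs care) or a direct Schauder-basis construction using separability of $C_0(\partial\Omega)$. A secondary subtlety is that the Følner averaging argument for equivariance requires working with the predual of $\ell^\infty(\Gamma)$ (namely $\ell^1(\Gamma)$) to get weak-$\ast$ compactness, and one must check that $q\circ\rho_0=\mathrm{Id}$ survives the ultralimit — this is fine because $q$ is weak-$\ast$ continuous on bounded sets, being (after the identifications) essentially restriction of $\ell^\infty$-functions. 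I expect the cleanest writeup simply cites the Choi--Effros lifting theorem for the existence of the c.c.p. section and then only spells out the Følner-averaging step to upgrade it to a $\Gamma$-equivariant one, which is the genuinely new content here.
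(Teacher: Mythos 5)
Your overall architecture matches the paper's: first produce a c.c.p.\ section $\tilde\rho_0$ of the restriction map (which, as you correctly note, reduces to producing a positive contractive linear section since the domain is abelian), and then upgrade it to a $\Gamma$-equivariant one by averaging over a F\o lner sequence. The F\o lner-averaging step in your write-up is in fact more careful than the paper's: you correctly average the conjugates $\alpha_g\circ\rho_0'\circ\alpha_g^{-1}$ and take a point--weak-$\ast$ ultralimit in $\ell^\infty(\Gamma)=(\ell^1(\Gamma))^*$, whereas the paper's displayed formula omits the inner $\alpha_{g^{-1}}$ (almost certainly a typo, since the claimed equivariance identity requires the conjugation).

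There is, however, a genuine gap in the route you propose for the existence of the positive linear section. The Borsuk--Dugundji linear extension theorem requires the ambient space to be metrizable, and $\beta\Gamma$ is never metrizable for infinite $\Gamma$; so it cannot be applied to the pair $\partial\Omega\subset\beta\Gamma$. Michael's selection theorem (or Bartle--Graves) applied to the Tietze multimap $h\mapsto\{\tilde h\,:\,\tilde h|_{\partial\Omega}=h,\ \|\tilde h\|\le\|h\|\}$ would, at best, give a \emph{continuous} selection; your claim that one can then pass to a \emph{linear} positive section ``after averaging (convexity)'' is not a real argument --- there is no group or convex structure over which to average, and in general a continuous section of a Banach-space quotient cannot be upgraded to a bounded linear one (that would force the kernel to be complemented). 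What saves the day here is exactly what you flag in your last sentence: separability plus nuclearity of $C_0(\partial\Omega)$ lets you invoke the Choi--Effros lifting theorem, and that is the honest route. The paper instead gives a direct, elementary proof of the commutative case of Choi--Effros: it constructs \emph{approximate} liftings $\tilde\rho_0(f)=\sum_j f(y_j)\tilde\chi_j$ from a finite open cover of $\partial\Omega$, a subordinate partition of unity $\{\chi_j\}$ and positive lifts $\tilde\chi_j$, and then cites the standard stitching argument (cf.\ \cite[Prop.\,2.4.2]{BrownOzawa}) to pass from approximate to exact liftings. This buys you an explicit, self-contained construction that avoids any metrizability issues and makes the approximate $\Gamma$-equivariance $\alpha_g^\Omega\circ\tilde\rho_0-\tilde\rho_0\circ\alpha_g^{\partial\Omega}\in C_0(\Gamma)$ manifest, which is then cleaned up by the amenability averaging. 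If you replace your Michael/Borsuk--Dugundji paragraph with either the Choi--Effros citation or the partition-of-unity construction, the rest of your argument is correct.
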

\begin{proof}[Proof sketch]
We will consider the case that $\partial \Omega$ is closed and therefore compact. The case of $\partial\Omega \subset \beta\Gamma \setminus \Gamma$ 
open can be dealt with via the considerations in~\cite[Section 2.2]{BrownOzawa}. 
The first statement  will follow if we can define a completely positive contraction  $\tilde{\rho}_{0}: C(\partial \Omega) \to C(\beta \Gamma)$ such that for any 
finite set $\calF \subset C(\partial \Omega)$ and $\epsilon >0$, $\| f - q \circ \tilde{\rho}_{0}(f) \| < \epsilon$ for all $f \in \calF$. 
Fixing $\calF$ and $\epsilon$, we can take a finite cover $\{U_1,\ldots, U_n\}$ of $\partial \Omega$  such that for each $f \in \calF$ and $j=1,\ldots , n$, 
$|f(x) - f(y) | < \epsilon$ for any $x,y \in U_j$. We take a partition of unity $\{\chi_1, \ldots, \chi_n\}$ subordinate to this cover and fix the points 
$y_j \in U_j$ for all $j=1,\ldots, n$. Because $\chi_j \in C(\partial \Omega)$ is positive, it has a positive lift 
$\tilde{\chi}_j \in C(\Omega) \subset C(\beta \Gamma)$ with $\| \tilde{\chi}_j \| \leq \| \chi_j\|$. We then define 
\[
   \tilde{\rho}_{0}(f) = \sum_{j=1}^n f(y_j) \tilde{\chi}_j,
\]
which is a completely positive contraction such that 
\begin{align*}
  \big\| f - q\circ \tilde{\rho}_{0}(f) \big\|  &= \Big\| \sum_j  \chi_j ( f - f(y_j) )  \Big\| < \epsilon \, \Big\| \sum_j \chi_j \Big\| = \epsilon
\end{align*}
as required.

Because $q: C(\Omega) \to C(\partial \Omega)$ is a $\Gamma$-equivariant map, we have that 
$q\big(\alpha_g^\Omega ( \tilde{\chi}_j ) \big) = \alpha_g^{\partial \Omega} (\chi_j)$ for all $g \in \Gamma$ and 
$j =1,\ldots, n$. Hence 
 $\alpha_g^\Omega \circ \tilde{\rho}_0 - \tilde{\rho}_0 \circ \alpha_g^{\partial \Omega} \in C_0(\Gamma)$ for all $g \in \Gamma$ 
but does not imply that $\tilde{\rho}_0$ is $\Gamma$-equivariant. 
Because $\Gamma$ is amenable, we can take an average over the semi-splitting 
$\tilde{\rho}_{0}$ to obtain a $\Gamma$-equivariant semi-splitting. Namely, we define
\[
  \rho_0: C_0( \partial \Omega) \to C(\beta \Gamma), \qquad  \quad
  \rho_0 = \lim_{\Lambda \nearrow \Gamma} \frac{1}{|\Lambda|} \sum_{g\in \Lambda} \alpha_g \circ \tilde{\rho}_{0}, 
\]
which is a completely positive contraction such that $q \circ \rho_0 = \mathrm{Id}$ and 
\[
  \alpha_g^\Omega \big( \rho_0(f) \big) = \rho_0\big( \alpha_g^{\partial \Omega} (f) \big)
\]
for any $f \in C_0(\partial \Omega)$. 
\end{proof}

Because $q \circ \rho_0 = \mathrm{Id}$ with $q$ a $\ast$-homomorphism, the map 
$\rho_0$ is a $\ast$-homomorphism modulo the ideal $C_0(\Gamma)$, 
\[
   \rho_0 (f_1) \rho_0(f_2) = \rho_0(f_1 f_2) + C_0(\Gamma).
\]

Having constructed a completely positive semi-splitting $\rho_0 : C_0(\partial \Omega) \to C( \beta \Gamma)$, there is a canonical extension 
\[
   \rho: C_0( \partial \Omega, B) \rtimes \Gamma \to C_0(\Omega, B) \rtimes \Gamma \subset \End_B(\ell^2(\Gamma, B)) ,
\]
such that for generating elements 
\[
    \rho\Big( \sum_g S_g (f_g \otimes b) \Big) = \sum_g S_g ( \rho_0 (f_g) \otimes b ).
\]
Indeed for $f \in C(\partial \Omega, B)$, 
\[
   q \circ \rho ( S_g f) = S_g q( \rho_0( f) ) = S_g f
\]
and 
\begin{align*}
    \rho( S_{g_1} f_1 ) \rho( S_{g_2} f_2) &= S_{g_1} \rho_0( f_1) S_{g_2} \rho_0(f_2) = S_{g_1 g_2} \alpha_{g_2} ( \rho_0( f_1) ) \rho_0(f_2) \\
    &= S_{g_1 g_2} \rho_0( \alpha_{g_2} (f_1) ) \rho_0(f_2) = S_{g_1 g_2} \rho_0( \alpha_{g_2}(f_1) f_2 ) + S_{g_1 g_2} f_0 \\
    &=  \rho( S_{g_1 g_2} \alpha_{g_2} (f_1) f_2 ) +S_{g_1g_2} f_0  = \rho( S_{g_1 } f_1 S_{g_2} f_2 ) + S_{g_1g_2} f_0
\end{align*}
for any $g_1,g_2 \in \Gamma$, $f_1, f_2 \in C_0(\partial \Omega, B)$ and 
where $S_{g_1g_2} f_0 \in C_0(\Gamma, B)\rtimes \Gamma$.

We have shown the following. 
\begin{lemma}
Let $A \cong C_0(\Omega, B)$ be a $C^*$-algebra satisfying Assumption \ref{assump:Extension_works}. Then the short exact sequence 
\[
   0 \to \calK \otimes B \to C_0(\Omega, B) \rtimes \Gamma \to C_0( \partial \Omega, B) \rtimes \Gamma \to 0
\]
is semi-split.
\end{lemma}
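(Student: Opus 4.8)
The plan is to assemble the semi-splitting from the pieces already constructed in the preceding lemmas, tracking carefully how completely positive contractions behave under the operations of tensoring with $B$ and forming crossed products by the amenable group $\Gamma$. First I would recall that Lemma \ref{lemma:scalar_cp_splitting} supplies a $\Gamma$-equivariant completely positive contraction $\rho_0 : C_0(\partial \Omega) \to C(\beta \Gamma)$ with $q \circ \rho_0 = \id$, and that the subsequent discussion already defines the candidate extension $\rho : C_0(\partial \Omega, B) \rtimes \Gamma \to C_0(\Omega, B) \rtimes \Gamma$ on generators by $\rho\big( \sum_g S_g (f_g \otimes b) \big) = \sum_g S_g (\rho_0(f_g) \otimes b)$, together with the verification that $q \circ \rho = \id$ and that $\rho$ is multiplicative modulo $\calK \otimes B$. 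So the only thing left to check for the lemma as stated is that $\rho$ genuinely extends to a completely positive contraction on the whole crossed product $C_0(\partial \Omega, B) \rtimes \Gamma$, not merely on the dense $*$-subalgebra $C_c(\Gamma, C_0(\partial \Omega, B))$.

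The key steps would be: (i) pass from $\rho_0$ to $\rho_0 \otimes \id_B : C_0(\partial \Omega) \otimes B \to C(\beta \Gamma) \otimes B$, which is completely positive and contractive because tensoring a completely positive contraction by the identity preserves both properties (and here all algebras are nuclear, so there is no ambiguity of tensor norm), and note it is still $\Gamma$-equivariant for the diagonal action using the internal action $\beta$; (ii) invoke the functoriality of the (reduced = full, since $\Gamma$ is amenable) crossed product with respect to $\Gamma$-equivariant completely positive contractive maps --- this is the standard fact (see e.g. \cite{BrownOzawa}) that a $\Gamma$-equivariant unital/contractive completely positive map $\phi : A \to C$ induces a completely positive contraction $\phi \rtimes \Gamma : A \rtimes \Gamma \to C \rtimes \Gamma$ on the crossed products; (iii) identify this induced map, composed with the inclusion $C(\beta \Gamma, B) \rtimes \Gamma \hookrightarrow \End_B(\ell^2(\Gamma, B))$ and the corestriction back to $C_0(\Omega, B) \rtimes \Gamma$ (which makes sense precisely because $\rho_0$ lands in $C_0(\Omega)$, not just $C(\beta\Gamma)$), with the map $\rho$ defined on generators above; (iv) conclude $q \circ \rho = \id$ from the already-verified identity on generators by density and continuity. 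Finally one records the identification $C_0(\Gamma, B) \rtimes \Gamma \cong \calK(\ell^2(\Gamma)) \otimes B$ from Takai duality, which was established earlier, to get the statement in the displayed form.

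The main obstacle --- really the only non-bookkeeping point --- is step (ii), ensuring that the crossed-product functoriality actually applies in the $C^*$-module / non-unital setting here: one must check that $\rho_0 \otimes \id_B$ is not merely completely positive and contractive but that the induced map on crossed products is again contractive, which uses amenability of $\Gamma$ to avoid any discrepancy between full and reduced completions and uses that $C_0(\partial\Omega,B)$ is separable (guaranteed by Assumption \ref{assump:Extension_works}) so that the Choi--Effros machinery is available in the first place. Once that is in hand, everything else is a direct assembly of facts already proved, and the verification that $q\circ\rho = \id$ reduces to the generator-level computation displayed just before the lemma statement. I would therefore present the proof as a short paragraph citing Lemma \ref{lemma:scalar_cp_splitting}, the tensor-and-crossed-product stability of completely positive contractions, and the already-given generator computation, remarking that this is the promised special case of the Choi--Effros Lifting Theorem \cite{ChoiEffros}.
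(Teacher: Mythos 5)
Your proposal is correct and follows essentially the same route as the paper: invoke Lemma \ref{lemma:scalar_cp_splitting} for the $\Gamma$-equivariant completely positive contractive splitting $\rho_0$, tensor with $\mathrm{id}_B$, and extend to the crossed product. You are in fact slightly more careful than the paper's terse ``there is a canonical extension'': you explicitly justify the passage to the crossed product via functoriality of the (reduced $=$ full, by amenability) crossed product under $\Gamma$-equivariant completely positive contractions, and you note that $\rho_0$ lands in $C_0(\Omega)$ rather than merely $C(\beta\Gamma)$, so the corestriction to $C_0(\Omega,B)\rtimes\Gamma$ is well-defined --- both points the paper implicitly leaves to the reader.
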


We can now invoke the key results of Sections \ref{subsec:semi-split_inv_ext} and \ref{subsec:ext_boundary_map}, which we summarise.

\begin{prop} \label{prop:Busby_of_interface_extension}
Let $A \cong C_0(\Omega, B)$ be a $C^*$-algebra satisfying Assumption \ref{assump:Extension_works}. Then there is a 
$\ast$-homomorphism $\pi: C_0( \partial \Omega, B) \rtimes \Gamma \to \End_B(\ell^2(\Gamma, B)^{\oplus 2} )$ and 
$P \in \End_B( \ell^2(\Gamma, B)^{\oplus 2} )$ such that 
\[
  \tau_{\partial \Omega} : C_0( \partial \Omega, B) \rtimes \Gamma \to \calQ_B( \ell^2(\Gamma, B) ), \qquad 
  \tau_{\partial \Omega}(a) = q\big( P \pi(a) P \big)
\]
defines an invertible extension $[\tau_{\partial \Omega}] \in \Ext^{-1}( C_0( \partial \Omega, B) \rtimes \Gamma, B \otimes \calK )$ 
that represents the short exact sequence from Equation \eqref{eq:Interface_extension_later}.
Furthermore, for any invertible $w \in (C_0( \partial \Omega, B) \rtimes \Gamma)^\sim$, the element 
$P \pi(w) P \in \End_B( \ell^2(\Gamma, B)^{\oplus 2} )$ is Fredholm on the submodule $P\cdot  \ell^2(\Gamma, B)^{\oplus 2} \cong \ell^2(\Gamma, B)$.
\end{prop}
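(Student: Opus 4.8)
The plan is to produce $\pi$ and $P$ from the completely positive splitting $\rho$ constructed above, by dilating $\rho$, and then to read off all three assertions formally. We work with the semi-split extension
\[
   0 \to \mathbb{K}_B\big(\ell^2(\Gamma, B)\big) \to C_0(\Omega, B)\rtimes\Gamma \xrightarrow{\ q\ } C_0(\partial\Omega, B)\rtimes\Gamma \to 0,
\]
whose splitting $\rho: C_0(\partial\Omega, B)\rtimes\Gamma \to C_0(\Omega, B)\rtimes\Gamma \subset \End_B\big(\ell^2(\Gamma, B)\big)$ is the completely positive contraction built from $\rho_0$. By Assumption \ref{assump:Extension_works} the algebra $C_0(\partial\Omega, B)\rtimes\Gamma$ is separable, so the Kasparov--Stinespring (KSGNS) dilation applies to $\rho$ (see Appendix \ref{sec:Kasparov_review}): there is a countably generated Hilbert $B$-module $\mathcal{F}$, a $\ast$-homomorphism $\pi_0: C_0(\partial\Omega, B)\rtimes\Gamma \to \End_B(\mathcal{F})$ and an adjointable isometry $V_0: \ell^2(\Gamma, B) \to \mathcal{F}$ with $\rho(\,\cdot\,) = V_0^*\,\pi_0(\,\cdot\,)\,V_0$. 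Since $\Gamma$ is countably infinite and $B$ is unital, $\ell^2(\Gamma, B)$ is a standard Hilbert $B$-module, so Kasparov's stabilisation theorem lets us absorb $\mathcal{F}$ into $\ell^2(\Gamma, B)^{\oplus 2}$; after conjugating, we obtain a $\ast$-homomorphism $\pi: C_0(\partial\Omega, B)\rtimes\Gamma \to \End_B\big(\ell^2(\Gamma, B)^{\oplus 2}\big)$ and an adjointable isometry $V: \ell^2(\Gamma, B) \to \ell^2(\Gamma, B)^{\oplus 2}$ with $\rho = V^*\pi(\,\cdot\,)V$. Setting $P := VV^* \in \End_B\big(\ell^2(\Gamma, B)^{\oplus 2}\big)$ we obtain a projection for which $V$ restricts to a unitary $\ell^2(\Gamma, B) \xrightarrow{\sim} P\cdot\ell^2(\Gamma, B)^{\oplus 2}$, and under this unitary $P\pi(a)P$ is carried to $V^*\pi(a)V = \rho(a)$, using $V^*P = V^*$ and $PV = V$.

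Next I would identify $\tau_{\partial\Omega}(a) := q\big(P\pi(a)P\big)$ with the Busby invariant of the displayed extension. Transporting along $V$, one has $q\big(P\pi(a)P\big) = q\big(\rho(a)\big)$ in $\calQ_B\big(\ell^2(\Gamma, B)\big)$. On the other hand, $C_0(\Omega, B)\rtimes\Gamma$ sits inside $\End_B\big(\ell^2(\Gamma, B)\big) = \Mult\big(\mathbb{K}_B(\ell^2(\Gamma, B))\big)$ as the canonical faithful representation that defines the extension, and $\rho$ is a splitting of $q$; hence $q\circ\rho$ equals the Busby invariant of Equation \eqref{eq:Interface_extension_later} on the nose. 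This shows $\tau_{\partial\Omega}$ is exactly that Busby map and therefore represents the given short exact sequence. Invertibility of $[\tau_{\partial\Omega}]$ in $\Ext^{-1}\big(C_0(\partial\Omega, B)\rtimes\Gamma, B\otimes\calK\big)$ is then immediate from the general principle --- summarised in Section \ref{subsec:semi-split_inv_ext} --- that an extension admitting a completely positive contractive splitting is invertible, the required splitting being precisely $\rho$.

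For the final assertion, extend $\pi$ to a unital $\ast$-homomorphism on $\big(C_0(\partial\Omega, B)\rtimes\Gamma\big)^\sim$ by sending the adjoined unit to $1$. Writing an invertible $w = a + \lambda\,1$ with $a \in C_0(\partial\Omega, B)\rtimes\Gamma$, one computes $V^*\pi(w)V = \rho(a) + \lambda\,1$, the unital extension $\tilde\rho(w)$ of $\rho$; consequently the image of $P\pi(w)P$ in the generalised Calkin algebra $\calQ_B\big(P\cdot\ell^2(\Gamma, B)^{\oplus 2}\big) \cong \calQ_B\big(\ell^2(\Gamma, B)\big)$ equals $q\big(\tilde\rho(w)\big)$, which is the value of the unital $\ast$-homomorphism $q\circ\tilde\rho$ on the invertible element $w$, hence invertible. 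By the definition of a Fredholm operator on a $C^*$-module recalled in this section, $P\pi(w)P$ is Fredholm on $P\cdot\ell^2(\Gamma, B)^{\oplus 2} \cong \ell^2(\Gamma, B)$.

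I expect the only genuine work to be the first paragraph: invoking the KSGNS dilation and Kasparov stabilisation to realise $\pi$ concretely on the doubled module and to extract the projection $P$ --- this is exactly where the second countability of $\partial\Omega$ and the separability of $B$ from Assumption \ref{assump:Extension_works} enter. Everything afterwards --- the identification of $\tau_{\partial\Omega}$ with the Busby invariant, its invertibility, and the Fredholm property --- follows formally from $\pi$ being an honest $\ast$-homomorphism, with no further estimates needed.
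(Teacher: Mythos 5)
Your proof is correct and takes essentially the same approach as the paper, which simply invokes the appendix results on semi-split extensions (Kasparov--Stinespring dilation to produce $\pi$ and $P$, invertibility of the extension class, and the Fredholm lemma). You unpack the Stinespring dilation via KSGNS plus stabilisation and phrase the Fredholm step through the multiplicativity of the Busby map $q\circ\tilde\rho$ rather than the appendix's compact-commutator computation $P\pi(w)P\,P\pi(w^{-1})P - P \in \mathbb{K}_B(\ell^2(\Gamma,B)^{\oplus 2})$, but these are minor repackagings of the same argument.
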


\subsection{The interface index}

Given a Fredholm operator $F \in \End_B( \ell^2(\Gamma, B))$, the triple 
\begin{equation} \label{eq:Basic_interface_Kasmod}
    \Big( \C, \, \ell^2(\Gamma, B) \otimes \C^2, \, \tilde{F}  \Big), \qquad \tilde{F} = \chi \begin{pmatrix} 0 & F^* \\ F & 0 \end{pmatrix}
\end{equation}
is a Kasparov module (see Example  \ref{ex:BasicKasMod}), where $\ell^2(\Gamma, B) \otimes \C^2$ is graded by the self-adjoint unitary 
$\left( \begin{smallmatrix} \one & 0 \\ 0 & -\one \end{smallmatrix}\right)$ and 
$\tilde{F} =\chi \left( \begin{smallmatrix} 0 & F^* \\ F & 0 \end{smallmatrix}\right)$ is a self-adjoint, odd and Fredholm operator on 
$\ell^2(\Gamma, B) \otimes \C^2$ such that $\one -  \tilde{F}^2 \in \mathbb{K}_B( \ell^2(\Gamma, B) \otimes \C^2)$. 
Hence the triple from Equation \eqref{eq:Basic_interface_Kasmod} gives 
an equivalence class $[F] \in KK(\C, B)$.

\begin{defn}
Let $\ell^2(\Gamma, B)$ be a $C^*$-module describing an discrete interface. If the adjointable operator $F \in \End_B( \ell^2(\Gamma, B))$ is Fredholm, 
then the interface index is given by $[F] \in KK(\C, B) \cong K_0(B)$, the equivalence class  of the 
Kasparov module from Equation \eqref{eq:Basic_interface_Kasmod}.
\end{defn}

Proposition \ref{prop:uniformly_gapped_bulk_gives_Fredholm} and  \ref{prop:Busby_of_interface_extension} show that the 
interface index is well-defined for a large class of examples with gapped bulk systems.
Indeed, let us  consider dynamics on the interface $\ell^2(\Gamma, B)$ described by $F \in \big(C_0(\Omega, B) \rtimes \Gamma \big)^{\sim}$. 
Given  a covering 
of $\partial \Omega$ by quasi-orbits $\{\Xi_j\}_{j\in J}$ such that  $q_j(F) \in \big(C_0(\Omega, B) \rtimes \Gamma \big)^{\sim}$ is uniformly gapped 
for all $j \in J$, then the interface index $[F] \in KK(\C, B)$ is well-defined. 
We also remark that both $F$ and $P \pi( q(F) )P$ are lifts in $\End_B(\ell^2(\Gamma, B) )$ of $q(F) \in \big(C_0(\partial\Omega, B), \rtimes \Gamma\big)^{\sim}$,
where $P \pi( q(F) )P$ comes from Proposition \ref{prop:Busby_of_interface_extension}. It follows 
that $[F] = \big[ P \pi(q(F)) P \big] \in KK(\C, B)$ as $F - P\pi(q(F))P$ is a compact 
operator on $\ell^2(\Gamma, B)$ (see Remark \ref{Remark_KKgroup_notes}).

\subsubsection{Free-fermionic symmetries and the interface index}

The assumption of gapped bulk systems is quite common in the study of topological phases. 
Bulk systems in condensed matter physics may also possess   free-fermionic symmetries, whose possible topological 
phases exhaust the real and complex $K$-theory groups of the $C^*$-algebra $A_\mathrm{bulk}$ modeling the  bulk  system.
These symmetries and topological phases can be described by van Daele $K$-theory 
of $\Z_2$-graded $C^*$-algebras (possibly with a real structure). We will omit a detailed description of this construction 
and refer the reader to~\cite{Kellendonk15, AMZ}. 

Following the perspective of~\cite{AMZ}, the bulk topological phase is described by a skew-adjoint unitary that anti-commutes with 
\emph{pseudo}-symmetries $\{\kappa_j\}_{j=1}^n$ related to the free-fermionic symmetry. These pseudo-symmetries generate 
a Clifford algebra $\Cl_{0,n}$ and the corresponding topological phase is described by van Daele $K$-theory,  $DK(A_\mathrm{bulk} \otimes \Cl_{0,n+1})$. 
The index $n \in \N$ depends on the symmetry under consideration.

We can also consider an interface index with free-fermionic symmetries by extending the pseudo-symmetries to operators on $\ell^2(\Gamma, B)$.

\begin{prop}
Let $\ell^2(\Gamma, B)$ be an interface with free-fermionic symmetries described by the pseudo-symmetries 
$\{\kappa_j\}_{j=1}^n \subset \End_B(\ell^2(\Gamma, B))$. If $F \in \End_B( \ell^2(\Gamma, B))$ is a skew-adjoint Fredholm 
operator that anti-commutes with $\{\kappa_j\}_{j=1}^n$, then the symmetric interface index 
$[F] \in KK(\C, B \otimes \Cl_{0,n+1})  \cong DK(B \otimes \Cl_{0,n}) \cong K_{n+1}(B)$ is well-defined.
\end{prop}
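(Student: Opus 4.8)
The plan is to reduce the symmetric statement to the already-established non-symmetric one by unpacking the van Daele $K$-theory description of the index and using the standard Clifford-algebra bookkeeping from~\cite{AMZ, Kellendonk15}. First I would recall that the pseudo-symmetries $\{\kappa_j\}_{j=1}^n \subset \End_B(\ell^2(\Gamma, B))$ are odd self-adjoint unitaries (with respect to the relevant grading and real structure) that pairwise anti-commute and anti-commute with the grading operator, so that together with the would-be index operator they generate a representation of $\Cl_{0,n+1}$ on the $C^*$-module. Given a skew-adjoint Fredholm $F$ anti-commuting with all $\kappa_j$, the operator $\tilde F = \chi(F)$ (or the functional-calculus regularisation making $\tilde F$ a self-adjoint contraction with $\one - \tilde F^2$ compact, exactly as in Equation~\eqref{eq:Basic_interface_Kasmod}) still anti-commutes with the $\kappa_j$. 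Thus the triple $\big(\C,\, \ell^2(\Gamma, B)\otimes \C^2,\, \tilde F\big)$, together with the commuting right action of $\Cl_{0,n+1}$ generated by $\{\kappa_j\}$ and the odd symmetry built from $F$, is a Kasparov $(\C, B\otimes\Cl_{0,n+1})$-module. This is the content needed to produce a well-defined class $[F]\in KK(\C, B\otimes\Cl_{0,n+1})$.

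The key steps, in order, are: (1) verify that $F$ being Fredholm on $\ell^2(\Gamma, B)$ in the sense of Section~\ref{sec:KK} (i.e. $q(F)$ invertible in $\calQ_B(\ell^2(\Gamma,B))$) is unaffected by the presence of the symmetries, so that the non-symmetric interface index construction applies verbatim to the underlying operator; (2) observe that the functional calculus $F\mapsto \chi(F)$ can be chosen to be an odd function, so the regularised operator $\tilde F$ inherits skew-adjointness and the anti-commutation relations with the $\kappa_j$ exactly; (3) check that $\one - \tilde F^2 \in \mathbb{K}_B(\ell^2(\Gamma,B)\otimes\C^2)$ — this is immediate from Fredholmness since $\tilde F^2 - \one$ is a continuous function of $F$ vanishing on the essential spectrum, hence lies in the compact ideal by the same argument used in Lemma~\ref{lemma:function_in_ideal_condition} (applied to $E = \End_B(\ell^2(\Gamma,B))$, $J = \mathbb{K}_B(\ell^2(\Gamma,B))$); (4) assemble these into a genuine Kasparov module over $\Cl_{0,n+1}$ and invoke the standard identifications $KK(\C, B\otimes \Cl_{0,n+1}) \cong DK(B\otimes\Cl_{0,n}) \cong K_{n+1}(B)$ from~\cite{AMZ}; (5) confirm well-definedness, i.e. independence of the choice of regularising function $\chi$ and stability under compact perturbations respecting the symmetries, which follows from operator-homotopy invariance of the Kasparov class (cf. Remark~\ref{Remark_KKgroup_notes}).

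The main obstacle I expect is purely bookkeeping rather than conceptual: one must fix conventions so that the $n+1$ Clifford generators (the $n$ pseudo-symmetries plus the odd generator from $F$) interact correctly with the $\Z_2$-grading on $\ell^2(\Gamma,B)\otimes\C^2$ and with any real structure, matching the grading/real-structure conventions of van Daele $K$-theory in~\cite{AMZ, Kellendonk15}. In particular one needs that the $\kappa_j$ act as right-module endomorphisms (or can be arranged to commute with the representation of $\C$ and with the Fredholm operator up to the prescribed Clifford relations) so the whole package is a right $\Cl_{0,n+1}$-module homomorphism — this is where the anti-commutation hypothesis $\{F,\kappa_j\}=0$ is used essentially. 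Once these conventions are pinned down, the argument is a routine adaptation of the non-symmetric case, and I would simply cite~\cite{AMZ} for the final $K$-theory isomorphisms rather than re-deriving them.
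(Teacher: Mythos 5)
Your proposal follows the same overall plan as the paper's proof — double the module to $\ell^2(\Gamma,B)\otimes\C^2$, equip it with a right $\Cl_{0,n+1}$-action, and verify the Kasparov-module axioms for a normalised version of $F$ — and it is correct in spirit. But it is vague precisely at the step you flag as an obstacle, and the paper resolves that step with a short explicit computation that you should reproduce to close the argument. The paper takes the Clifford generators on the doubled module to be
\[
e_0 = \begin{pmatrix} 0 & -\one \\ \one & 0 \end{pmatrix}, \quad
e_j = \begin{pmatrix} 0 & \kappa_j \\ \kappa_j & 0 \end{pmatrix}, \quad j = 1,\ldots,n,
\]
which are odd, skew-adjoint, unitary and mutually anti-commute, so they generate $\Cl_{0,n+1}$. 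Note that $e_0$ is \emph{not} ``built from $F$'' as you suggest; it is the off-diagonal identity, independent of the Fredholm operator. This matters because the Kasparov axioms demand that the Fredholm operator be adjointable over $B\otimes\Cl_{0,n+1}$, i.e.\ that it \emph{commute} (ungraded) with every $e_i$; it cannot itself play the role of a Clifford generator, since a generator must be exactly unitary whereas the Fredholm operator only squares to the identity modulo compacts. There is also a small sign muddle to sort out: you say $\tilde F = \chi(F)$ ``inherits skew-adjointness'' and then place it in a Kasparov triple, but the operator in a Kasparov module must be self-adjoint. The doubling resolves this automatically: $\begin{pmatrix} 0 & -F \\ F & 0 \end{pmatrix}$ is self-adjoint \emph{because} $F$ is skew-adjoint, and a direct $2\times 2$ computation shows it commutes with each $e_i$, the anti-commutation hypothesis $\{F,\kappa_j\}=0$ being used exactly once per $e_j$. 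With these two corrections made explicit, your argument matches the paper's.
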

\begin{proof}
We take the triple 
\[
   \Big( \C, \, \big(\ell^2(\Gamma, B) \otimes \C^2\big)_{B\otimes \Cl_{0,n+1}}, \, \begin{pmatrix} 0 & -F \\ F & 0 \end{pmatrix} \Big)
\]
where $\ell^2(\Gamma, B) \otimes \C^2$ is graded by $\left( \begin{smallmatrix} \one & 0 \\ 0 & -\one \end{smallmatrix}\right)$. 
The Clifford action is generated by the operators 
\[
   \begin{pmatrix} 0 & -\one \\ \one & 0 \end{pmatrix},  \begin{pmatrix} 0 & \kappa_1 \\ \kappa_1 & 0 \end{pmatrix}, \ldots,  \begin{pmatrix} 0 & \kappa_n \\ \kappa_n & 0 \end{pmatrix},
\]
which are skew-adjoint (odd) unitiares that mutually anti-commute with each other. The operator  
$\left( \begin{smallmatrix} 0 & -F \\ F & 0 \end{smallmatrix}\right) \in \End_{B\otimes \Cl_{0,n+1}}(\ell^2(\Gamma, B) \otimes \C^2)$ 
is self-adjoint, odd, Fredholm and \emph{commutes} with the Clifford generators. Hence we have a well-defined Kasparov module and 
equivalence class $[F] \in KK(\C, B \otimes \Cl_{0,n+1})$.
\end{proof}

\begin{example}[Case $B=M_N(\C)$ and Hilbert spaces]
Suppose $B = M_N(\C)$ for some $N$ and so our interface $C^*$-module is a Hilbert space $\ell^2(\Gamma, \C^{N} )$. 
Then any  Fredholm operator $F \in \calB( \ell^2(\Gamma, \C^{N} ))$ has an integer-valued index 
\[
    \Index(F) = \mathrm{dim} \Ker(F) - \mathrm{dim} \Ker(F^*) \in \Z,
\]
which is the image of $[F] \in KK(\C, \C)$ under the isomorphism $KK(\C, \C) \cong \Z$. 
If $F$ is skew-adjoint and anti-commutes with pseudo-symmetries $\{\kappa_j\}_{j=1}^n$, then 
$\Ker(F)$ defines a $\Cl_{0,n}$-module. The Atiyah--Bott--Shapiro construction provides 
a $\Z$, $\Z_2$ or trivially-valued index (depending on the degree $n$), which represents the image 
of the symmetric interface index $[F] \in KK(\C, \C \otimes \Cl_{0,n+1}) \cong K_{n+1}(\R)$~\cite{AS69}. 
Here the free-fermionic symmetries also introduce a real structure and take us to real $K$-theory.
\end{example}

Our interface index also satisfies a weak version of the bulk-interface correspondence, which follows from Proposition \ref{prop:DK_boundary_to_Fredholm} in the appendix.

\begin{prop} \label{prop:weak_symm_BBC}
Let $A \cong C_0(\Omega, B)$ be a $C^*$-algebra satisfying Assumption \ref{assump:Extension_works}. Suppose  that the interface 
$\ell^2(\Gamma, B)$ has free-fermionic symmetries represnted by $\{\kappa_j\}_{j=1}^n$.  If the bulk systems are uniformly gapped as represented by an invertible 
element $w \in \big( C_0(\partial \Omega) \rtimes \Gamma )^\sim$, then the interface index 
$\big[ P \pi(w) P \big] \in KK(\C, B\otimes \Cl_{0,n+1})$ is the image of the total bulk topological phase 
$[w] \in DK\big( C_0(\partial \Omega, B) \rtimes \Gamma \otimes \Cl_{0,n+1} \big)$ under the composition 
\[
    DK\big( C_0(\partial \Omega, B) \rtimes \Gamma \otimes \Cl_{0,n+1} \big) \xrightarrow{ \delta } DK( B \otimes \Cl_{0,n} ) \xrightarrow{\simeq} KK(\C, B \otimes \Cl_{0, n+1} ).
\]
\end{prop}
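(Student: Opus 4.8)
The plan is to reduce Proposition \ref{prop:weak_symm_BBC} to the abstract boundary-map machinery established for semi-split extensions in Appendix \ref{sec:extensions}, specifically to Proposition \ref{prop:DK_boundary_to_Fredholm}. First I would invoke Proposition \ref{prop:Busby_of_interface_extension}: under Assumption \ref{assump:Extension_works} the interface short exact sequence from Equation \eqref{eq:Interface_extension_later} is semi-split and is represented by an invertible extension $[\tau_{\partial\Omega}] \in \Ext^{-1}(C_0(\partial\Omega, B)\rtimes\Gamma, B\otimes\calK)$, realised concretely via the $\ast$-homomorphism $\pi: C_0(\partial\Omega, B)\rtimes\Gamma \to \End_B(\ell^2(\Gamma, B)^{\oplus 2})$ and the projection $P$. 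This gives us, by the standard dictionary between semi-split extensions and $KK^1$-classes (reviewed in Appendix \ref{sec:extensions}), a class in $KK^1(C_0(\partial\Omega, B)\rtimes\Gamma, B)$, and hence a boundary map $\delta$ on $K$-theory and, after tensoring everything with $\Cl_{0,n+1}$, on van Daele $K$-theory.

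The second step is to make the symmetry bookkeeping compatible with this extension. The pseudo-symmetries $\{\kappa_j\}_{j=1}^n \subset \End_B(\ell^2(\Gamma, B))$ act diagonally on $\ell^2(\Gamma, B)^{\oplus 2}$ and commute with $P$ and with the representation $\pi$ (since $\pi$ is built from shifts and multiplication operators, while the $\kappa_j$ are fixed ambient symmetry operators), so the extension $\tau_{\partial\Omega}$ is $\Cl_{0,n+1}$-equivariant in the sense needed to descend to a boundary map $\delta: DK(C_0(\partial\Omega, B)\rtimes\Gamma \otimes \Cl_{0,n+1}) \to DK(B\otimes\Cl_{0,n})$. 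Then I would apply Proposition \ref{prop:DK_boundary_to_Fredholm}, which is precisely the statement that for a semi-split extension presented this way, the image under $\delta$ of the class $[w]$ of an invertible lift-able element is represented by the Fredholm operator obtained by compressing $\pi(w)$ by $P$ — that is, by $P\pi(w)P$ viewed as an operator on $P\cdot\ell^2(\Gamma, B)^{\oplus 2}\cong\ell^2(\Gamma, B)$. Combined with the identification $DK(B\otimes\Cl_{0,n}) \cong K_{n+1}(B) \cong KK(\C, B\otimes\Cl_{0,n+1})$ from the van Daele picture, this identifies $\delta[w]$ with the symmetric interface index $[P\pi(w)P]$, which is exactly the claim.

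The main obstacle I expect is verifying that the invertible lift $w \in (C_0(\partial\Omega)\rtimes\Gamma)^\sim$ genuinely carries the free-fermionic symmetry in a way that survives the compression and matches the symmetric Kasparov module defining the interface index: one must check that $P\pi(w)P$ (suitably rotated into a skew-adjoint Fredholm operator anti-commuting with the $\kappa_j$, as in the symmetric interface index construction) represents the same $KK$-class as the image of $[w]$ under the van Daele boundary map, i.e. that the concrete compression picture and the abstract van Daele boundary map agree on the nose and not merely up to a sign or a degree shift. This is where the careful normalisation of Appendix \ref{sec:extensions} — the placement of the extra Clifford generator $\left(\begin{smallmatrix} 0 & -\one \\ \one & 0 \end{smallmatrix}\right)$ and the grading operator — has to be tracked through; once Proposition \ref{prop:DK_boundary_to_Fredholm} is applied with the correct conventions, the remaining steps are formal. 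A final routine point is that the hypothesis of uniform gappedness guarantees, via Proposition \ref{prop:uniformly_gapped_bulk_gives_Fredholm} and Lemma \ref{lemma:Fredholm_is_invertible_on_quasi_orbit}, that all the relevant lifts are Fredholm so that the indices in question are defined.
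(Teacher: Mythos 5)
Your proposal is correct in substance and follows essentially the same route as the paper: the paper gives no displayed proof but simply states that the result follows from Proposition \ref{prop:DK_boundary_to_Fredholm}, and your argument is precisely an unpacking of that reduction, starting from the semi-split extension and Kasparov--Stinespring data of Proposition \ref{prop:Busby_of_interface_extension}, tensoring with $\Cl_{0,n+1}$, and invoking the van Daele boundary map. One small inaccuracy in your justification of the symmetry bookkeeping: you claim the $\kappa_j$ commute with the Stinespring representation $\pi$ and the projection $P$ ``since $\pi$ is built from shifts and multiplication operators,'' but $\pi$ is the abstract Stinespring dilation of the completely positive semi-splitting $\rho$, not the concrete representation by shifts and multiplications; the correct reason the compatibility holds is that the dilation can be carried out with coefficient algebra $B\otimes\Cl_{0,n+1}$ so that $\pi$ and $P$ are $\Cl$-linear by construction, which is exactly the graded hypothesis under which Proposition \ref{prop:DK_boundary_to_Fredholm} is stated.
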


For a generic interface system, there is no guarantee that the boundary map 
$DK\big( C_0(\partial \Omega, B) \rtimes \Gamma \otimes \Cl_{0,n+1} \big) \xrightarrow{ \delta } DK( B \otimes \Cl_{0,n} )$ 
is an isomorphism and generally will not be. Still, a non-trivial interface index implies that 
the total bulk class
$DK\big( C_0(\partial \Omega, B) \rtimes \Gamma \otimes \Cl_{0,n+1} \big) $ is also non-trivial. 
When we can further decompose classes in $DK\big( C_0(\partial \Omega, B) \rtimes \Gamma \otimes \Cl_{0,n+1} \big)$ 
into separate bulk systems, Proposition \ref{prop:weak_symm_BBC} says that at least some of these bulk classes 
must be non-trivial.

Proposition \ref{prop:weak_symm_BBC} also holds for the interface index without symmetry, 
where it is more natural to use complex $K$-theory. Namely, $[P \pi(w) P] = \delta[w] \in KK(\C, B)$ with 
$[w] \in K_1\big( C_0(\partial \Omega, B) \rtimes \Gamma \big)$ (see Proposition \ref{prop:K1_to_Fredholm}).

\subsection{Decomposition of the interface index via disjoint bulk systems}

We have shown that a $K$-theoretic interface index can be well-defined for a wide variety of discrete interfaces. 
On the other hand, computing the value of this index is quite challenging for a general $\Gamma$ and 
Fredholm $F \in \big( C_0(\Omega, B) \rtimes \Gamma \big)^\sim \subset \End_B( \ell^2(\Gamma, B) )$.

Recalling our basic example of the domain wall, $\Gamma = \Z$ from Section \ref{sec:DomainWall}, we used the fact that the bulk systems at 
$\pm \infty$ are spatially separated to decompose the total interface index into a difference of two indices related to the two bulk systems.
Such a decomposition will not hold in general, but as a first step we can consider the case that the system at infinity $\partial \Omega$ 
has a finite and $\Gamma$-invariant decomposition into subspaces.
Hence for this subsection, we assume there are open or closed sets $Z_1, \ldots, Z_M \subset \partial \Omega$ such that 
\begin{align} \label{eq:Boundary_decomp_assumption} 
  &\partial \Omega = \bigsqcup_{j=1}^M Z_j,   &&Z_j \cap Z_k = \emptyset, \quad j\neq k, 
  && \Gamma \cdot Z_j \subset Z_j \ \text{for all }j.
\end{align}

Interfaces satisfying this property include the functions asymptotically supported on disjoint cones from Section \ref{subsec:asymp_cone}. We can 
also build interfaces satisfying Equation \eqref{eq:Boundary_decomp_assumption} by taking disjoint, $\Gamma$-invariant and 
second countable subsets
$Z_j \subset \beta \Gamma \setminus \Gamma$ with  $\partial \Omega = \bigsqcup_j Z_j$ and using the construction from 
Section \ref{subsec:Pullback_example}. 
It is immediate from  Equation \eqref{eq:Boundary_decomp_assumption} that 
\begin{equation} \label{eq:bdry_alg_directsum}
    C_0( \partial \Omega, B) \cong  \bigoplus_{j=1}^M C_0(Z_j, B), \qquad \qquad 
    C_0( \partial \Omega, B) \rtimes \Gamma \cong  \bigoplus_{j=1}^M C_0(Z_j, B) \rtimes \Gamma.
\end{equation}

We can similarly decompose the interface short exact sequence into a sum of extensions.

\begin{prop} \label{prop:ext_disjoint_decomp}
Let $A \cong C_0(\Omega, B)$ be a $C^*$-algebra satisfying Assumption \ref{assump:Extension_works} with  
$[\tau_{\partial \Omega}] \in \Ext^{-1}( C_0(\partial \Omega, B) \rtimes \Gamma, B \otimes \calK)$ the extension class of the 
interface short exact sequence. Suppose that $\partial \Omega$ admits a $\Gamma$-invariant decomposition as in 
Equation \eqref{eq:Boundary_decomp_assumption}. Then there are extensions $[\tau_{Z_j}] \in \Ext^{-1}( C_0(Z_j, B) \rtimes \Gamma, B \otimes \calK)$ 
and a function $\mathrm{sgn}:\{1,\ldots, M\} \to \{1,-1\}$ such that 
\[
     [\tau_{\partial \Omega}] = \bigoplus_{j=1}^M [\tau_{Z_j}]^{\mathrm{sgn}(j) },
\]
where $[\tau]^{-1}$ denotes the inverse of $[\tau]$ in the group of invertible extensions.
\end{prop}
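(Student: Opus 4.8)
The plan is to pass to the direct-sum decomposition of the quotient algebra and then invoke additivity of the group of invertible extensions in its contravariant variable.

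First I would fix geometric models for the bulk extensions. By the decomposition $C_0(\partial\Omega,B)\rtimes\Gamma\cong\bigoplus_{j=1}^M C_0(Z_j,B)\rtimes\Gamma$ recorded above, each $Z_j$ is clopen in $\partial\Omega$ (equivalently, $C_0(Z_j,B)$ is simultaneously an ideal and a quotient of $C_0(\partial\Omega,B)$), so $\partial\Omega\setminus Z_j$ is closed in $\Omega$ and $\Omega_j:=\Gamma\sqcup Z_j=\Omega\setminus\bigsqcup_{k\neq j}Z_k$ is a $\Gamma$-invariant, locally compact, open subset of $\Omega$ with $\Omega_j\setminus\Gamma=Z_j$. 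The corresponding ideal $C_0(\Omega_j,B)$ of $A\cong C_0(\Omega,B)$ is $\Gamma$-invariant and yields, after taking the crossed product and applying Takai duality exactly as in Section \ref{sec:CStar_mod_interface}, a short exact sequence
\[
   0\to\calK(\ell^2(\Gamma))\otimes B\to C_0(\Omega_j,B)\rtimes\Gamma\to C_0(Z_j,B)\rtimes\Gamma\to 0 .
\]
Since $Z_j\subset\partial\Omega$ is second countable, this sequence is semi-split — restrict the semi-splitting $\rho$ produced for Proposition \ref{prop:Busby_of_interface_extension}, or re-run the construction of Lemma \ref{lemma:scalar_cp_splitting} with a cover subordinate to the partition $\{Z_k\}$ — and I take $[\tau_{Z_j}]\in\Ext^{-1}(C_0(Z_j,B)\rtimes\Gamma,B\otimes\calK)$ to be its class.

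Next I would identify the pieces of $[\tau_{\partial\Omega}]$. Restricting the quotient map $C_0(\Omega,B)\rtimes\Gamma\to C_0(\partial\Omega,B)\rtimes\Gamma$ to the ideal $C_0(\Omega_j,B)\rtimes\Gamma$ has image the $j$-th summand $C_0(Z_j,B)\rtimes\Gamma$ and kernel the compacts, so the sub-extension of the interface extension lying over the $j$-th summand is precisely the one displayed above, i.e. it represents $[\tau_{Z_j}]$. (Concretely, the pullback $\{(a,T):a\in C_0(Z_j,B)\rtimes\Gamma,\ q(T)=\tau_{\partial\Omega}(a)\}$ is carried isomorphically onto $C_0(\Omega_j,B)\rtimes\Gamma$ by $c\mapsto(q(c),c)$, using that $\tau_{\partial\Omega}$ represents the interface extension by Proposition \ref{prop:Busby_of_interface_extension}.) Because all algebras in sight are separable and nuclear ($B$ is nuclear and $\Gamma$ is amenable), the group $\Ext^{-1}(-,B\otimes\calK)$ is additive over finite direct sums in its first variable, a standard property of invertible extensions (see Appendix \ref{sec:extensions}); the decomposition $C_0(\partial\Omega,B)\rtimes\Gamma\cong\bigoplus_j C_0(Z_j,B)\rtimes\Gamma$ then forces $[\tau_{\partial\Omega}]=\bigoplus_{j=1}^M[\tau_{Z_j}]$, which is the asserted identity with $\mathrm{sgn}\equiv 1$.

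Finally, the sign function is the bookkeeping cost of replacing the ad hoc localisations $C_0(\Omega_j,B)\rtimes\Gamma$ by more canonical models of the bulk extensions. When $Z_j$ is (or contains a single quasi-orbit with) the structure of a preferred half-space or Toeplitz-type extension of $C_0(Z_j,B)\rtimes\Gamma$ — as happens for $B\rtimes\Z$ in the domain wall of Section \ref{sec:DomainWall} — matching $[\tau_{Z_j}]$ with that model may require precomposing with an orientation-reversing automorphism of $\Gamma$ (the reflection $x\mapsto -x$ in the $\Z$-case), which sends the extension class to its inverse, and $\mathrm{sgn}(j)$ records whether such a reversal occurs. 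I expect this last identification to be the only real obstacle: the purely homological content (additivity and the pullback computation) is routine, whereas matching the restricted extension with a chosen canonical model and tracking the induced orientation needs the same care as \cite[Lemma 5.4]{B22}, now carried out $\Gamma$-equivariantly for a general discrete abelian $\Gamma$.
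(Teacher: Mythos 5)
Your proof is correct and establishes the statement, but it takes a genuinely different route from the paper and is arguably sharper on one point. The paper's proof works on the level of completely positive semi-splittings: it constructs $\rho_0^{Z_j}:C_0(Z_j)\to C(\beta\Gamma)$ for each $j$ by re-running Lemma \ref{lemma:scalar_cp_splitting}, passes to the crossed product $\rho^{Z_j}$, applies Kasparov's Stinespring Theorem to get projections $P_j$ and representations $\pi_j$, observes that $\rho_0 = \bigoplus_j \rho_0^{Z_j}$ under the direct-sum decomposition of $C_0(\partial\Omega)$, and then asserts that under the identification of the global Stinespring data $(\pi, P)$ with the direct sum $\bigoplus_j (\pi_j, P_j)$ a sign function may be needed because ``the translation action of $\Gamma$ on each $Z_j$'' may not be ``globally consistent''; the inverse is realised by replacing $P_j$ with $\one - P_j$. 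Your argument instead works on the level of the interface algebra itself: you identify the open $\Gamma$-invariant subsets $\Omega_j = \Gamma\sqcup Z_j\subset\Omega$, take the corresponding ideals $C_0(\Omega_j,B)\rtimes\Gamma$ as explicit sub-extensions of the interface extension, verify via the pullback description that each is a representative of $[\tau_{Z_j}]$, and conclude by additivity of $\Ext^{-1}(-,B\otimes\calK)$ over finite direct sums in the contravariant variable. This buys you a more structural argument that avoids Stinespring altogether and makes the $B$-nuclearity/amenability hypotheses do their natural work (semi-splitting each sub-extension).

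Where the two proofs genuinely diverge is on the role of $\mathrm{sgn}$. You correctly observe that if $[\tau_{Z_j}]$ is taken to be the class of the naturally restricted sub-extension, then $\mathrm{sgn}\equiv 1$ already works, and that the nontrivial sign only appears when one insists on expressing the $[\tau_{Z_j}]$ in terms of a fixed ``canonical'' bulk model (e.g.\ a standard Toeplitz extension), where an orientation-reversing identification sends an extension class to its inverse. This is a sharper account than the paper's, whose Stinespring-level argument doesn't actually force a sign (the Stinespring dilation of a direct sum of c.p.\ maps is, up to unitary equivalence and degenerates, the direct sum of the dilations). Your note that the only real work lies in the $\Gamma$-equivariant matching with a preferred bulk model, in the spirit of \cite[Lemma 5.4]{B22}, is exactly the right place to flag the remaining subtlety; the statement itself, being an existence claim, is already proved by your choice of restricted sub-extensions.
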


\begin{remark}
Proposition \ref{prop:ext_disjoint_decomp} is a little difficult to state precisely. We are considering the classes 
$[\tau_{Z_j}] \in \Ext^{-1}( C_0(Z_j, B) \rtimes \Gamma, B \otimes \calK)$, while 
$ [\tau_{\partial \Omega}] \in \Ext^{-1}( C_0(\partial \Omega, B) \rtimes \Gamma, B \otimes \calK)$, a different group. 
We are using that Equation \eqref{eq:bdry_alg_directsum} induces  a canonical isomorphism 
\[
   \bigoplus_{j=1}^M \Ext^{-1}( C_0(Z_j, B) \rtimes \Gamma, B \otimes \calK) 
   \ni \bigoplus_{j=1}^M [\tau_j] \xmapsto{\simeq} \Big[ \bigoplus_{j=1}^M \tau_j \Big] \in \Ext^{-1}( C_0(\partial \Omega, B) \rtimes \Gamma, B \otimes \calK)
\]
such that 
\[
   \bigoplus_{j=1}^M [\tau_{Z_j}]^{\mathrm{sgn}(j) } \mapsto [\tau_{\partial \Omega}]  \in \Ext^{-1}( C_0(\partial \Omega, B) \rtimes \Gamma, B \otimes \calK).
\]
It will be clear from the context where the appropriate extension class and inverses are being taken.
\end{remark}

\begin{proof}[Proof of Proposition \ref{prop:ext_disjoint_decomp}]
The extension $[\tau_{\partial \Omega}]$ is built from the completely positive and contractive map 
$\rho_0: C_0(\partial \Omega) \to C(\beta \Gamma)$ from Lemma \ref{lemma:scalar_cp_splitting}. We can repeat this argument for each $Z_j$ 
 to obtain completely positive and contractive  maps $\rho^{Z_j}_0: C_0(Z_j) \to C(\beta \Gamma)$ and the extension to the crossed product 
\[
   \rho^{Z_j}: C_0(Z_j, B)\rtimes \Gamma \to \End_B( \ell^2(\Gamma, B) )
\]
for each $j=1,\ldots, M$.
Similarly, we can apply Kasparov's Stinespring Theorem to obtain the projections $\{P_1,\ldots, P_M\} \subset \End_B(\ell^2(\Gamma, B)^{\oplus 2} )$ and 
representations $\pi_{Z_j} : C_0( Z_j, B) \rtimes \Gamma \to \End_B(\ell^2(\Gamma, B)^{\oplus 2} )$ such that $P_j \pi_{Z_j}(a) P_j = \rho^{Z_j}(a) \oplus 0$ 
for any $a \in C_0(Z_j, B)\rtimes \Gamma$ and $j=1,\ldots, M$. 
It is clear that the semi-splitting $\rho_0: C_0(\partial \Omega) \to C(\beta \Gamma)$ is such that 
$\rho_0 = \bigoplus_j \rho_0^{Z_j}$ under the decomposition $C_0( \partial \Omega ) \cong  \bigoplus_{j=1}^M C_0(Z_j )$. 
However, it is not guaranteed that the translation action of $\Gamma$ on each $Z_j$ is globally consistent. In order to simultaneously consider an 
action of $\Gamma$ on $\bigsqcup_j Z_j$, we may need to reverse the orientation for some of the subsets. In particular, we can fix a function 
$\mathrm{sgn}:\{1,\ldots, M\} \to \{1,-1\}$ such that for all 
$a = (a_1,\ldots, a_M) \in C_0(\partial \Omega, B) \rtimes \Gamma \cong \bigoplus_{j=1}^M C_0(Z_j, B) \rtimes \Gamma$,
\[
   \bigoplus_{j=1}^M P_j^{\mathrm{sgn}(j)} \pi_{Z_j}(a_j) P_j^{\mathrm{sgn}(j)}  = P\pi(a) P, \qquad 
   P_j^{\mathrm{sgn}(j)} = \begin{cases} P_j, &  \mathrm{sgn}(j) =1, \\ \one- P_j, & \mathrm{sgn}(j)=-1 \end{cases}.
\]
Because the $\ast$-homomorphism $\tau_{\one-P_j}(a_j) = q\big( (\one-P_j) \pi(a_j) (\one- P_j) \big)$, $a_j \in C_0(Z_j, B)\rtimes \Gamma$, will represent the inverse of $\tau_{P_j}$, we see that 
\begin{align*}
   [\tau_{\partial \Omega}] &= \big[ q\big( P \pi( \, \cdot \, ) P \big) \big] = \bigoplus_{j=1}^M  \Big[ q \big( P_j^{\mathrm{sgn}(j)} \pi_{Z_j}(\, \cdot \, ) P_j^{\mathrm{sgn}(j)} \big) \Big] \\
   &=     \bigoplus_{j=1}^M [\tau_{Z_j}]^{\mathrm{sgn}(j) } \in \Ext^{-1}( C_0(\partial \Omega, B) \rtimes \Gamma, B \otimes \calK)
\end{align*}
as required.
\end{proof}

For any $T \in C_0(\Omega, B) \rtimes \Gamma$, we can decompose 
\[
    q(T) = \big( q_1(T), \ldots, q_M(T) \big)\in C_0( \partial \Omega, B) \rtimes \Gamma , \qquad  
    q_j(T) \in C_0(Z_j, B) \rtimes \Gamma.
\]
By Proposition \ref{prop:ext_disjoint_decomp}, we can build an extension for each component of the decomposition $\partial \Omega \cong \bigsqcup_j Z_j$, 
\[
  0 \to \calK \otimes B \to C^*\big( P_j \pi_{Z_j}(a) P_j, \, K \, \big| \, a \in C_0( Z_j, B) \rtimes \Gamma, \, K \in \calK \otimes B \big) \to C_0( Z_j, B) \rtimes \Gamma \to 0
\]
for all $j=1,\ldots, M$. Splitting up the interface extension allows us to decompose the interface index.

\begin{thm} \label{thm:Interface_index_decomp}
Let $A \cong C_0(\Omega, B)$ be a $C^*$-algebra satisfying Assumption \ref{assump:Extension_works} and $F \in \big(C( \Omega, B) \rtimes \Gamma\big)^\sim$ 
a Fredholm operator on $\ell^2(\Gamma, B)$. If $\partial \Omega$ admits a $\Gamma$-invariant decomposition as in 
Equation \eqref{eq:Boundary_decomp_assumption}, then there is  a function $\mathrm{sgn}:\{1,\ldots, M\} \to \{1,-1\}$ such that 
 the interface index 
\[
   [F] = \bigoplus_{j=1}^M  [F_j]^{\mathrm{sgn}(j)} \in KK(\C, B) \cong K_0(B),
\]
where $F_j$ is a Fredholm operator on a submodule of $\ell^2(\Gamma, B)^{\oplus 2}$ for all $j=1,\ldots, M$.

If the interface $\ell^2(\Gamma, B)$ has free-fermionic symmetries described by $\{\kappa_i\}_{i=1}^n$, then 
\[
   [F] = \bigoplus_{j=1}^M  [F_j]^{\mathrm{sgn}(j)} \in KK(\C, B\otimes \Cl_{0,n+1}) \cong DK( B \otimes \Cl_{0,n} ) \cong K_{n+1}(B).
\]
\end{thm}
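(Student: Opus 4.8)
The plan is to realise the interface index as the image of the invertible element $q(F)$ under the boundary map of the interface extension, and then to split that boundary map along the decomposition supplied by Proposition~\ref{prop:ext_disjoint_decomp}. First I would record that, since $F$ is Fredholm on $\ell^2(\Gamma, B)$, the element $w := q(F)$ is invertible in $\big(C_0(\partial\Omega, B)\rtimes\Gamma\big)^\sim$ by Lemma~\ref{lemma:Fredholm_is_invertible_on_quasi_orbit}; moreover, by Proposition~\ref{prop:Busby_of_interface_extension} together with the discussion following the definition of the interface index, $[F] = \big[P\pi(w)P\big] \in KK(\C, B)$. In the language of Section~\ref{subsec:ext_boundary_map}, this says precisely that $[F]$ is the image of $[w] \in K_1\big(C_0(\partial\Omega, B)\rtimes\Gamma\big)$ under the boundary map $\delta_{\tau_{\partial\Omega}}$ attached to the invertible extension $[\tau_{\partial\Omega}] \in \Ext^{-1}\big(C_0(\partial\Omega, B)\rtimes\Gamma, B\otimes\calK\big)$.

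Next I would feed in the $\Gamma$-invariant decomposition from Equation~\eqref{eq:Boundary_decomp_assumption}, which gives $C_0(\partial\Omega, B)\rtimes\Gamma \cong \bigoplus_{j=1}^M C_0(Z_j, B)\rtimes\Gamma$ and hence an identification $K_1\big(C_0(\partial\Omega, B)\rtimes\Gamma\big) \cong \bigoplus_{j=1}^M K_1\big(C_0(Z_j, B)\rtimes\Gamma\big)$ under which $[w]$ corresponds to $\big([w_1],\ldots,[w_M]\big)$ with $w_j := q_j(F)$. Proposition~\ref{prop:ext_disjoint_decomp} furnishes a function $\mathrm{sgn}:\{1,\ldots,M\}\to\{1,-1\}$, projections $P_j$ and representations $\pi_j$ of $C_0(Z_j, B)\rtimes\Gamma$ on $\ell^2(\Gamma, B)^{\oplus 2}$ with $[\tau_{\partial\Omega}] = \bigoplus_{j=1}^M [\tau_{Z_j}]^{\mathrm{sgn}(j)}$, where $\tau_{Z_j}(a) = q\big(P_j\pi_j(a)P_j\big)$. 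Since boundary maps are additive over direct sums and change sign under inversion of the extension, this yields $[F] = \sum_{j=1}^M \mathrm{sgn}(j)\,\delta_{\tau_{Z_j}}[w_j]$. Setting $F_j := P_j\pi_j(w_j)P_j$, which is a Fredholm operator on the submodule $P_j\cdot\ell^2(\Gamma, B)^{\oplus 2}$ by the $Z_j$-analogue of Proposition~\ref{prop:Busby_of_interface_extension}, one has $[F_j] = \delta_{\tau_{Z_j}}[w_j] \in KK(\C, B)$, and therefore $[F] = \bigoplus_{j=1}^M [F_j]^{\mathrm{sgn}(j)}$. Equivalently, one may avoid the boundary-map language and argue by hand: $\pi_j$ extends to a unital representation of the unitisation, so $\pi_j(w_j)$ is genuinely invertible and $\pi_j(w_j) = P_j\pi_j(w_j)P_j \oplus (\one-P_j)\pi_j(w_j)(\one-P_j)$ modulo the compact operators, whence additivity of the index over this splitting forces the compression to $\one-P_j$ to carry index $-[F_j]$.

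For the free-fermionic case the argument is structurally identical: one replaces $B$ by $B\otimes\Cl_{0,n+1}$ throughout, carries along the grading $\left(\begin{smallmatrix}\one & 0\\ 0 & -\one\end{smallmatrix}\right)$ and the mutually anti-commuting Clifford generators built from $\{\kappa_i\}_{i=1}^n$ as in Proposition~\ref{prop:weak_symm_BBC}, and uses the skew-adjoint Fredholm picture from Equation~\eqref{eq:Basic_interface_Kasmod}. The interface index then lands in $KK(\C, B\otimes\Cl_{0,n+1}) \cong DK(B\otimes\Cl_{0,n}) \cong K_{n+1}(B)$, and the same additivity-and-sign bookkeeping produces $[F] = \bigoplus_{j=1}^M [F_j]^{\mathrm{sgn}(j)}$. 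The step that needs the most care, and which I expect to be the main obstacle, is checking that the constructions underlying Propositions~\ref{prop:Busby_of_interface_extension} and \ref{prop:ext_disjoint_decomp} --- the completely positive semi-splittings and the Kasparov--Stinespring dilations producing the $P_j$ and $\pi_j$ --- can be performed compatibly with the free-fermionic symmetries, so that the resulting projections and representations commute with all the $\kappa_i$. This is handled either by invoking equivariant forms of the Choi--Effros and Kasparov--Stinespring theorems or by averaging an a priori non-equivariant choice, in the same spirit as the $\Gamma$-equivariance of $\rho_0$ obtained in Lemma~\ref{lemma:scalar_cp_splitting}. The only remaining delicacy, already folded into Proposition~\ref{prop:ext_disjoint_decomp}, is the choice of $\mathrm{sgn}$ required to reconcile the separate $\Gamma$-actions on the pieces $Z_j$ with a single action on $\partial\Omega$.
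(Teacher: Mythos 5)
Your proposal is correct and takes essentially the same route as the paper: both use Proposition~\ref{prop:Busby_of_interface_extension} to replace $F$ by $P\pi(q(F))P$ modulo compacts, then feed the decomposition of $\partial\Omega$ into Proposition~\ref{prop:ext_disjoint_decomp} to split the class into signed contributions $[P_j\pi_j(q_j(F))P_j]^{\mathrm{sgn}(j)}$, with the free-fermionic case handled by the same argument over $B\otimes\Cl_{0,n+1}$. Your parenthetical worry about making the Choi--Effros/Kasparov--Stinespring constructions compatible with the free-fermionic symmetries is a reasonable refinement that the paper's one-line dismissal of the graded case does not spell out, but it does not change the underlying strategy.
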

\begin{proof}
Let $q(F) \in \big(C_0(\partial \Omega, B)\rtimes \Gamma\big)^\sim$ be the invertible image of $F$ under the quotient by 
$\mathbb{K}_B(\ell^2(\Gamma, B))$. Then 
by Proposition \ref{prop:Busby_of_interface_extension}, $F - P \pi( q(F) ) P \in \mathbb{K}_B(\ell^2(\Gamma, B))$ and so 
\[
  [F] = \big[ P \pi( q(F) ) P \big] = \bigoplus_{j=1}^M \big[ P_j \pi_{Z_j} ( q_j(F) ) P_j \big]^{\mathrm{sgn}(j) },
\]
where the last equality is due to Proposition \ref{prop:ext_disjoint_decomp}. We let 
$F_j =  P_j^{\mathrm{sgn}(j)} \pi_{Z_j} ( q_j(F) ) P_j^{\mathrm{sgn}(j)}$, which is Fredholm on 
$P_j^{\mathrm{sgn}(j)} \cdot \ell^2( \Gamma, B)^{\oplus 2} \cong \ell^2(\Gamma, B)$ and the result follows. 
The same argument holds for the interface index with free-fermionic symmetries.
\end{proof}

If we consider the sets $\{ Z_j\}_{j=1}^M$ as representing the spatial location of the separated bulk systems at infinity, Theorem \ref{thm:Interface_index_decomp} says that 
the interface index decomposes into a sum, where each $[F_j] \in KK(\C, B)$ depends on a single bulk system. We expect these refined indices to be easier to 
compute in practice than the entire interface index. 

For general systems, we can take a decomposition of $\partial \Omega$ into quasi-orbits, $\{\Xi_j\}_{j\in J}$, and there is an injective $\ast$-homomorphism 
\[
   C_0(\partial \Omega, B) \rtimes \Gamma \to \prod_{j\in J} C_0( \Xi_j, B) \rtimes \Gamma.
\]
If $w_j \in \big(C_0( \Xi_j, B) \rtimes \Gamma \big)^{\sim}$ is invertible for some $j \in J$, then there is a well-defined index 
\[
  \big[ F_{\Xi_j} \big] = \big[ P_j \pi_{\Xi_j}(w_j) P_j ] \in KK(\C, B).
\]
If $w_j \in \big(C_0( \Xi_j, B) \rtimes \Gamma \big)^{\sim}$ is invertible for all $j \in J$, 
  we do not expect a  decomposition of $[F]$ into a signed sum of $[F_{\Xi_j}]$ to hold in
  general when $\Xi_j\cap \Xi_k \neq \emptyset$ (even if the index set $J$ is finite). 
A more thorough study of how one can decompose the interface index in cases where $\partial \Omega$ naturally decomposes into 
subspaces with non-empty intersection (such as the example of Cartesian anisotropy from Section \ref{subsec:Cartesian_Aniso}) would certainly 
be desirable. We leave this question to future work.

\appendix

\section{Review of Hilbert $C^*$-modules and Kasparov theory} \label{sec:Kasparov_review}

For the benefit of the reader, we give a short overview of the basics of 
Hilbert $C^*$-modules and noncommutative index theory. We refer to the textbooks~\cite{Blackadar, Lance, WO} 
for a more detailed introduction and proofs.

\subsection{Hilbert $C^*$-modules} \label{subsec:C*_modules}

Throughout this section, $B$ is a $\sigma$-unital $C^*$-algebra. If $B$ is separable, this condition 
is equivalent to the existence of a countable approximate unit.

\begin{defn}
A Hilbert $C^*$-module is a right $B$-module $X_B$ with a map
$(\cdot\mid\cdot)_B: X_B \times X_B \to B$ that is linear in the second  variable such that 
for all $x_1, x_2 \in X_B$ and $b \in B$, 
\[
  	(x_1 \mid x_2\cdot b)_B = (x_1\mid x_2)_B \, b, \qquad \qquad 
  	( x_1\mid x_2)_B = (x_2 \mid x_1)_B^*, \qquad \qquad
  	(x_1 \mid x_1)_B \geq 0
\]
and $(x_1 \mid x_1)_B = 0$ if and only if $x_1=0$. Furthermore, $X_B$ is complete with 
respect to the norm $\|x \| = \big\| ( x \mid x)_B \big\|_B^{1/2}$, $x \in X_B$.
\end{defn}

\begin{examples}
\begin{enumerate}
  \item[(i)] A  complex Hilbert space $\calH$ is a Hilbert $C^*$-module over $\C$ with $(\cdot\mid \cdot)_\C$ given by the 
  usual Hilbert space inner product.
  
  \item[(ii)] The algebra $B$ can be seen as a Hilbert $C^*$-module $B_B$ with the structure 
  \[
     b_1 \cdot b_2 = b_1 b_2, \qquad \qquad (b_1 \mid b_2)_B = b_1^* b_2, \qquad b_1, b_2 \in B.
  \]
  
  \item[(iii)] Given a separable and infinite dimensional Hilbert space $\calH$, the  standard Hilbert $C^*$-module 
  $\calH_B := \calH \otimes B$ is given the $C^*$-module structure 
  \[
      (\psi \otimes b_1) \cdot b_2 = \psi \otimes b_1 b_2, \qquad 
      \big( \psi_1 \otimes b_1 \mid \psi_2 \otimes b_2 \big)_B = \langle \psi_1 , \psi_2 \rangle_\calH \, b_1^* b_2.
  \]
  Kasparov's Stabilisation Theorem implies that for any countably generated Hilbert $C^*$-module $X_B$, there is an isometric embedding 
 $X_B \to \calH_B$~\cite[Theorem 2]{KasparovStinespring}.
\end{enumerate}
\end{examples}

A Hilbert $C^*$-module $X_B$ is full if $\ol{\{ (x_1 \mid x_2)_B \,:\, x_1, x_2 \in X_B \}} = B$ (closure in $B$).   
Both $B_B$ and $\calH_B$ are full Hilbert $C^*$-modules.

\begin{remark}
Hilbert $C^*$-modules share many similarities to Hilbert spaces. One important difference  is that if $Y_B$ is a closed submodule 
of $X_B$ with 
\[
   Y^\perp_B = \big\{ x \in X_B \,: \, (x\mid y)_B = 0 \text{ for all } y \in Y_B \big\},
\]
then it is not guaranteed that $X_B = Y_B \oplus Y^\perp_B$ in general.
\end{remark}

Given a map $T: X_B \to X_B$, we say that $T$ is adjointable if there is a map $T^*:X_B \to X_B$ that acts as the adjoint 
with respect to the $B$-valued inner product on $X_B$. Such maps are right $B$-linear and bounded in the operator norm. 
The set of adjointable operators on $X_B$ is denoted $\End_B(X)$, which is a $C^*$-algebra with respect to the operator norm. 
In analogy to finite-rank operators on Hilbert spaces, 
for any $x_1, x_2 \in X_B$ we define  $\Theta_{x_1,x_2}(x_3) = x_1 \cdot (x_2\mid x_3)_B$, $x_3 \in X_B$. 
One finds that  $\Theta_{x_1,x_2}^* = \Theta_{x_2,x_1}$, so $\Theta_{x_1,x_2}$ is adjointable. The 
compact operators $\mathbb{K}_B(X)$ on $X_B$ are defined such that 
\[
    \mathbb{K}_B(X) = \ol{ \mathrm{span}\big\{ \Theta_{x_1,x_2} \, :\, x_1 , x_2 \in X_B \big\} } \subset \End_B(X).
\]
The compact operators are a closed two-sided ideal in $\End_B(X)$ and we denote the 
$C^*$-algebra $\calQ_B(X) = \End_B(X)/ \mathbb{K}_B(X)$. 
Note that $\calQ_B(X)$ is a 
generalisation of the Calkin algebra to the setting of Hilbert $C^*$-modules.

\begin{examples}
\begin{enumerate}
  \item[(i)] Considering a Hilbert space $\calH_\C$ as a Hilbert $C^*$-module over $\C$, the compact and adjointable operators on $\calH_\C$ are 
  precisely the compact and bounded operators on $\calH$ respectively.
  
  \item[(ii)] Considering $B$ as a Hilbert $C^*$-module over itself, $\mathbb{K}_B(B) = B$ as multiplication is dense in $B$. 
  The adjointable operators are isomorphic to the multiplier algebra of $B$, $\Mult(B) \cong \End_B(B)$.
  
  \item[(iii)] For the standard $C^*$-module $\calH_B$,   $\mathbb{K}_B(\calH_B) = \calK(\calH) \otimes B$ and 
 $\End_B( \calH_B) \cong \Mult( \calK(\calH) \otimes B)$.
\end{enumerate}
\end{examples}

\subsection{Fredholm operators on  Hilbert $C^*$-modules and $K$-theory} \label{sec:KK}

\begin{defn}
We say that an operator $T \in \End_B(X)$ is Fredholm on $X_B$ if $q(T) \in \calQ_B(X)$ is invertible. 
\end{defn}

\begin{lemma}[{\cite[Lemma 2.7]{Wahl07}}] \label{lem:unbdd_Fred_condition}
A self-adjoint element $T \in \End_B(X)$ with $\| T\| \leq 1$ is Fredholm  if and only if $\| \one - q(T)^2 \|_{\calQ_B(X)} < 1$.
\end{lemma}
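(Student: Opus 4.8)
The plan is to prove both implications using the functional calculus for the self-adjoint element $q(T) \in \calQ_B(X)$, exploiting that $\calQ_B(X)$ is a $C^*$-algebra and that invertibility of a self-adjoint element is equivalent to $0 \notin \sigma(q(T))$.

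First I would prove the easy direction ($\Leftarrow$). Suppose $\|\one - q(T)^2\|_{\calQ_B(X)} < 1$. Write $s = q(T)$, a self-adjoint element of $\calQ_B(X)$ with $\|s\| \leq 1$ (since $q$ is contractive and $\|T\| \leq 1$). The spectrum $\sigma(s) \subset [-1,1]$. The hypothesis says $\sigma(\one - s^2) \subset (-1,1)$, equivalently $\sigma(s^2) \subset (0, 2)$; combined with $s^2 \geq 0$ this gives $\sigma(s^2) \subset (0,1]$, so $0 \notin \sigma(s^2)$. By the spectral mapping theorem applied to the function $t \mapsto t^2$ on $\sigma(s)$, we get $0 \notin \sigma(s)$, so $s = q(T)$ is invertible and $T$ is Fredholm.

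For the converse ($\Rightarrow$), suppose $T$ is Fredholm, so $s = q(T)$ is invertible in $\calQ_B(X)$. Since $s$ is self-adjoint and invertible, $0 \notin \sigma(s)$, hence $\sigma(s)$ is a compact subset of $[-1, 0) \cup (0, 1]$. Therefore there exists $\delta \in (0,1]$ with $\sigma(s) \subset [-1, -\delta] \cup [\delta, 1]$, giving $\sigma(s^2) \subset [\delta^2, 1]$. Applying the spectral mapping theorem to $t \mapsto 1 - t$ on $\sigma(s^2)$, we obtain $\sigma(\one - s^2) \subset [0, 1 - \delta^2]$, so $\|\one - q(T)^2\|_{\calQ_B(X)} = \|\one - s^2\| \leq 1 - \delta^2 < 1$.

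I do not anticipate a genuine obstacle here: the whole argument is a routine application of continuous functional calculus in the $C^*$-algebra $\calQ_B(X)$, and the only points requiring a moment's care are that $\|T\|\le 1$ forces $\|q(T)\|\le 1$ (so that $\sigma(q(T)) \subset [-1,1]$ and the squaring map behaves as expected) and the use of $s^2 \geq 0$ to rule out the spurious possibility $0 \in \sigma(s^2)$ coming from $\sigma(s)$ hitting $0$. If anything, the mildly delicate step is making sure the strict inequality is preserved correctly in the $(\Leftarrow)$ direction, i.e.\ that $\|\one - s^2\| < 1$ together with $s^2 \ge 0$ genuinely forces $\sigma(s^2)$ into the \emph{open} set away from $0$; this is immediate since $\|\one - s^2\| < 1$ means $\sigma(s^2) \subset (0, 2)$ as a subset of $\R$, and positivity confines it to $(0,1]$.
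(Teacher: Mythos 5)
Your proof is correct. Note that the paper does not supply a proof of this lemma; it is cited verbatim from Wahl's paper. Your argument is the standard one via continuous functional calculus and the spectral mapping theorem in the $C^*$-algebra $\calQ_B(X)$, using that $q$ is contractive so $\|q(T)\|\le 1$, that invertibility of a self-adjoint element is equivalent to $0$ being outside its spectrum, and that the norm of a self-adjoint element equals its spectral radius. Both directions are handled cleanly, including the mildly delicate point that $\|\one - q(T)^2\|<1$ together with positivity of $\one - q(T)^2$ confines $\sigma(q(T)^2)$ to $(0,1]$ and hence excludes $0$ from $\sigma(q(T))$.
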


Self-adjoint Fredholm operators on $C^*$-modules can be used to give a description of $K$-theory of $C^*$-algebras as a special 
case of Kasparov's $KK$-theory. We give a very brief introduction sufficient for the contents of this paper and refer the reader 
to~\cite{Kasparov80, Blackadar} for the complete theory.

We consider $C^*$-modules $X_B$ with a $\Z_2$-grading, $X_B \cong (X^0 \oplus X^1)_B$. 
Occasionally we consider the case where $B$ is also $\Z_2$-graded, $B= B^0 \oplus B^1$,  where the right-action 
is such that $X^i \cdot B^j \subset X^{i+j \, \mathrm{mod} \, 2}$. The grading on $X_B$ also extends to the adjointable 
operators, where 
$T \in \End_B(X)$ is even (resp. odd) if $T\cdot X^j \subset X^j$ (resp. $T\cdot X^j \subset X^{j+1 \, \mathrm{mod} \, 2}$).

\begin{defn}
Let $X_B$ be a countably generated and $\Z_2$-graded Hilbert $C^*$-module and $F \in \End_A(E)$ a self-adjoint odd operator such that 
$\one - F^2 \in \mathbb{K}_B(X)$. We call the triple $(\C, X_B, F)$ a Kasparov module.
\end{defn}

\begin{example} \label{ex:BasicKasMod}
Let $X_B$ be an ungraded $C^*$-module and $F \in \End_B(X)$ such that $\one - FF^*, \one - F^*F \in \mathbb{K}_B(X)$. We consider the 
doubled $C^*$-module $X_B \otimes \C^2$ with a $\Z_2$-grading via the operator $\left( \begin{smallmatrix} \one & 0 \\ 0 & -\one \end{smallmatrix} \right)$. 
Then the triple 
\[
   \Big( \C, \, X_B \otimes \C^2 , \, \begin{pmatrix} 0 & F^* \\ F & 0 \end{pmatrix} \Big) 
\]
is a Kasparov module. 

For a generic Fredholm operator $F \in \End_B(X)$, we can build a Kasparov module by taking $X_B \otimes \C^2$ and  
$\tilde{F} = \chi \left( \begin{smallmatrix} 0 & F^* \\ F & 0 \end{smallmatrix} \right)$ with $\chi: \R \to \R$ a smooth odd function such that 
$\chi'(0)>0$, $\lim_{x\to \infty} \chi(x) = 1$ and $\one - \tilde{F}^2 \in \mathbb{K}_B( X_B \otimes \C^2)$. Such a function always 
exists by~\cite[Corollary 2.2]{Wahl07}. Then 
\[
   \big( \C, \, X_B \otimes \C^2, \, \tilde{F} \big)
\]
is a Kasparov module.
\end{example}

Two Kasparov modules $(\C, X^{(0)}_B, F_0)$ and $(\C, X^{(1)}_B, F_1)$ are unitarily equivalent if there is an 
even unitary  $U: X^{(1)}_B \to X_B^{(2)}$ 
such that $UF_1 U^* = F_2$. 
We say that  the Kasparov modules are homotopic if there is an   Kasparov module 
$(\C, \tilde{X}_{B \otimes C([0,1])}, F)$ such that the evaluating the fibre at $0$ and $1$ yields Kasparov modules 
that are unitarily equivalent to $(\C, X^{(0)}_B, F_0)$ and $(\C, X^{(1)}_B, F_1)$ respectively.

Homotopy equivalence classes of Kasparov modules yields an abelian group $KK(\C, B)$, 
where the  group operation is by direct sum~\cite[Section 4]{Kasparov80}.

\begin{remarks} \label{Remark_KKgroup_notes}
\begin{enumerate}
 \item If $F \in \End_B(X)$ is a self-adjoint Fredholm operator such that $F^2 = \one$, then the Kasparov module 
 $\big( \C, X_B, F \big)$ is called degenerate. Degenerate Kasparov modules represent the group identity in 
 $KK(\C, B)$.
   \item For a generic Fredholm operator $F \in \End_B(X)$, Example \ref{ex:BasicKasMod} shows that 
  we can build a Kasparov module via a normalising function 
  $\tilde{F} = \chi \left( \begin{smallmatrix} 0 & F^* \\ F & 0 \end{smallmatrix} \right)$ on $X_B \otimes \C^2$. 
  The equivalence class of this Kasparov module in $KK(\C, B)$ is independent of the choice of normalising funciton~\cite[Proposition 2.14]{vdDungen17}. 
  As such, we simply write $[F] \in KK(\C, B)$ to denote this equivalence class.
  \item  If $F_0, F_1 \in \End_B(X)$ are   Fredholm operators such that $F_0 - F_1 \in \mathbb{K}_B(X)$, then 
$[F_0] = [F_1] \in KK(\C, B)$, which can be shown by the straight-line homotopy 
$[0,1] \ni t  \mapsto F_t = F_0 + t(F_1 - F_0)$. 
\end{enumerate}
\end{remarks}

When $B$ is a trivially graded  $C^*$-algebra $(B^1 = \{0\}$), there is an isomorphism $KK(\C, B) \cong K_0(B)$, the operator 
algebraic $K$-theory of $B$ constructed from equivalence classes of projections in $\bigoplus_n M_n(B)$~\cite[Section 6]{Kasparov80}. 
When $B=\C$,  the equivalence $KK(\C, \C) \cong K_0(\C) \cong \Z$ is given by the Fredholm index 
\[
   KK(\C, \C) \ni \Big[ \Big( \C, \, \calH \otimes \C^2, \, \begin{pmatrix} 0 & F^* \\ F & 0 \end{pmatrix} \Big) \Big] 
   \xmapsto{\simeq} \Index(F) = \mathrm{dim} \Ker(F) - \mathrm{dim} \Ker(F^*) \in \Z.
\]

When $B$ is $\Z_2$-graded, one can consider the van Daele $K$-theory group $DK(B)$, which is constructed from 
equivalence classes of odd-self adjoint unitaries in $\bigoplus_n M_n(B)$~\cite{vanDaele1}. 
We also have an isomorphism $KK(\C, B) \cong DK( B \,\hat\otimes\, \Cl_{1,0} )$, with $\Cl_{1,0}$ the Clifford 
algebra with one self-adjoint (odd) generator of square $+1$~\cite{Roe04}.
By incorporating a real structure to $C^*$-algebras and $C^*$-modules, we can also describe all real $K$-theory groups. 
In particular, if $B$ is ungraded, there are isomorphisms 
\[
    KK(\C, B \otimes \Cl_{r,s} ) \cong DK( B \otimes \Cl_{r+1,s} ) \cong K_{s-r}(B) ,
\]
see~\cite{BKR2, Kubota16} for further details.

%%%%%%%%%%%%%%%%%%%%%%%%%%%%%%%%%%%%%%%%%%%%%%%%%%%%%%%%%%%%%%%%%%%%%%%%%%%%%%%%%%%%%%%%%%%%%%%
\section{Extension theory of $C^*$-algebras} \label{sec:extensions}
%%%%%%%%%%%%%%%%%%%%%%%%%%%%%%%%%%%%%%%%%%%%%%%%%%%%%%%%%%%%%%%%%%%%%%%%%%%%%%%%%%%%%%%%%%%%%%%

In this section, we provide an overview of some of the main results from extension theory of 
$C^*$-algebras and its connection to Fredholm operators and $K$-theory. Further details can be 
found in~\cite[Chapter VII]{Blackadar} and \cite[Chapter 3]{WO}. 

\subsection{Basic definitions and properties}

\begin{defn}
Let $A$ and $B$ be $C^*$-algebras. An \emph{extension of $A$ by $B$} is a 
short exact sequence of $C^*$-algebras
\begin{equation}\label{eq_ses}
 0 \to B \to E \xrightarrow{q} A \to 0.
\end{equation}
Two extensions of $A$ by $B$ are \emph{strongly isomorphic} if there is an 
isomorphism $\eta: E_1 \to E_2$ such that the following diagram commutes,
\[
\xymatrix{
    0 \ar[r] & B \ar[r] \ar@{=}[d] & E_1 \ar[r] \ar[d]_\eta & A \ar[r] \ar@{=}[d] & 0 \\
    0 \ar[r] & B \ar[r] & E_2 \ar[r] & A \ar[r] & 0.
}
\]
The extension from Equation \eqref{eq_ses} is \emph{trivial} if there is a $\ast$-homomorphism $s:A \to E$ such that 
$q \circ s = \mathrm{Id}_A$.
\end{defn}

Note that trivial extensions are those in which the exact sequence splits. 
In this case, one has $E \cong A \oplus B$ \cite[Proposition 3.1.3]{WO}.

Given $A$ and $B$, one can try and classify all extensions of $A$ 
by $B$ up to strong isomorphism (or another notion of equivalence). Such a question is closely 
tied to the  Busby invariant. Given the extension \eqref{eq_ses}, 
because $B$ is an ideal of $E$, there is a homomorphism $\nu: E \to \Mult(B)$, 
the multiplier algebra of $B$. Composing $\nu$ 
with the quotient map $q_B: \Mult(B) \to \Mult(B) / B =: \calQ(B)$ yields the Busby invariant.

\begin{defn}
For the extension 
$0 \to B \to E \to A \to 0$ of $A$ by $B$, 
the homomorphism $\tau: A \to \calQ(B)$ defined by $\tau(a) := q_B \circ \nu(e)$, 
for any lift $e\in E$ of $a \in A$, is called the \emph{Busby invariant} of the extension. 
\end{defn}

The Busby invariant $\tau:A \to Q(B)$ is injective if and only if 
$B$ is an essential ideal in $E$,  $e  B=0$ implies $e=0$ for $e \in E$. 
We will only be interested in extensions 
of  $A$ by  $B$ with $\tau$ injective.

\begin{example} \label{ex:Extension_by_compacts}
Suppose that $\calH$ is an infinite-dimensional Hilbert space and let 
$B = \calK(\calH)$. Then $\Mult(B) \cong \calB(\calH)$ and 
$\calQ(B) = \calB(\calH)/\calK(\calH) = \calQ(\calH)$ is the standard Calkin algebra. 
This means that the Busby invariant of an extension of $A$ by $\calK(\calH)$ 
is a $\ast$-homomorphism $\tau: A \to \calQ(\calH)$.
\end{example}

Given an extension, the algebra $E$ is isomorphic to the pullback $C^*$-algebra of 
$\big(A, \Mult(B)\big)$ along $(\tau, q_B)$, i.e. the algebra $\calP$ such that 
\[
 \xymatrix{
   \calP \ar[r] \ar[d]  & \Mult(B) \ar[d]^{q_B} \\ A \ar[r]^{\tau} & \calQ(B)
 }
\]
commutes. That is, $\calP = \big\{(a, m) \in A \oplus \Mult(B) \mid \tau(a) = q_B(m) \big\}$. 
Thus, we have that two extensions are strongly isomorphic if and only if their Busby invariants coincide. 
Similarly, an extension is trivial if and only if $\tau: A \to \calQ(B)$ lifts to 
a homomorphism $A \to \Mult(B)$. We can therefore 
describe an extension purely in terms of the homomorphism $\tau:A \to \calQ(B)$.

A reason for introducing the Busby invariant is that it allows us to compare different extensions of 
$A$ by $B$. In particular, it may be difficult to directly compare $E_1$ and 
$E_2$ such that 
$0 \to B \to E_1 \to A \to 0$ and $0 \to B \to E_2 \to A \to 0$ are exact. 
But their corresponding Busby invariants $\tau_1$ and $\tau_2$ are $\ast$-homomorphisms from $A$ to the same algebra $\calQ(B)$.
We say that two extensions of $A$ by $B$ with Busby invariants $\tau_1$ and $\tau_2$ are strongly equivalent 
if there is a unitary $u \in \Mult(B)$ such that $\tau_2(a) = q_B(u) \tau_1(a) q_B(u)^*$ for all $a \in A$.

To define further structure on extensions, we now replace $B$ with $B\otimes \calK$ 
with $\calK = \calK(\calH)$ the compact operators
for some separable and infinite dimensional Hilbert space $\calH$. 
In particular, fixing an isomorphism $\calK \cong M_2(\calK)$ will set isomorphisms
\begin{equation}\label{eq_stand_iso}
   M_2 \big( \Mult(B\otimes \calK) \big) \cong \Mult(B\otimes \calK), \qquad  
   M_2 \big( \calQ(B\otimes \calK) \big) \cong \calQ(B\otimes \calK), 
\end{equation}
see \cite[Definition 3.3.3]{WO} for  example.
We can then define the sum of two extensions of $A$ by $B\otimes \calK$ with Busby invariants 
$\tau_1$ and $\tau_2$ as the extension whose Busby invariant is 
\[
  \tau_1 \oplus \tau_2: A \to \calQ(B\otimes \calK) \oplus \calQ(B\otimes \calK) 
  \subset M_2 \big( \calQ(B\otimes \calK) \big) \cong \calQ(B\otimes \calK).
\]
Such an addition gives a direct sum operation on the set of strong equivalence classes of extensions. The direct sum 
is commutative and we obtain an abelian semigroup.

\begin{defn}
The extension semigroup $\Ext(A, B\otimes\calK)$ is the quotient of the set of strong equivalence classes 
of extensions of $A$ by $B\otimes\calK$ by the sub-semigroup of trivial extensions.
\end{defn}

\begin{example} \label{ex:Generalised_Toeplitz2}
Let $B$ be a $\sigma$-unital $C^*$-algebra,  $X_B$ be a countably generated $C^*$-module 
and $\pi: A \to \End_B(X)$ a representation of $A$ on $X_B$. 
We suppose there is a projection $P = P^* = P^2 \in \End_B(X) \setminus \mathbb{K}_B(X)$
such that 
\[
   [ P, \pi(a) ] \in \mathbb{K}_B(X) \ \text{for all } a \in A, \qquad 
   P \pi(a) P \in \mathbb{K}_B(X) \ \text{if and only if } a = 0.
\]
From the data $(\pi,A,  X_B, P)$, we can construct an extension via the subalgebra 
of $\End_B(X)$ generated by $P \pi(a) P$ and $\mathbb{K}_B(X)$. Namely, 
\[
   0 \to \mathbb{K}_B(X) \to C^*\big( P\pi(a) P,  K  \, \big| \,  a \in A, K\in \mathbb{K}_B(X) \big) \to A \to 0
\]
is a short exact sequence with Busby invariant $[\tau] \in \Ext( A, \mathbb{K}_B(X))$ such that 
$\tau: A \to \calQ( \mathbb{K}_B(X)) \cong \calQ_B(X)$ is given by 
$\tau(a) = q( P \pi(a) P)$. 

We can also define a Busby invariant with range in $\calQ(B \otimes \calK)$ via Kasparov's stabilisation 
theorem~\cite[Theorem 2]{KasparovStinespring}. 
Because $X_B$ is countably generated, there is an adjointable 
isometry $V: X_B \to \calH_B$ with $\calH_B = \calH \otimes B$ the standard $C^*$-module. We can then define 
$\tilde{\tau} : A \to \calQ(B \otimes \calK)$ such that $\tilde{\tau}(a) = q( VP  \pi(a) PV^* )$, which will give 
an equivalence class $[\tilde{\tau}] \in \Ext(A, B \otimes \calK)$.
\end{example}

\subsection{Semi-split extensions are invertible} \label{subsec:semi-split_inv_ext}

We say that an extension $\tau: A \to \calQ(B \otimes \calK)$ is invertible if there is another 
extension $\tau' : A \to \calQ(B \otimes \calK)$ such that 
$[\tau] \oplus [ \tau'] = [ \tau_\mathrm{triv.} ] \in \Ext(A, B\otimes \calK)$. We denote by  $\Ext^{-1}(A, B) \subset \Ext(A,  B\otimes \calK)$ 
the abelian subgroup of invertible 
extensions. One also generally writes $\Ext^{-1}(A)$ to denote 
$\Ext^{-1}(A, \calK(\calH))$. Much more is known about 
$\Ext^{-1}(A,B \otimes \calK)$ than $\Ext(A,  B\otimes \calK)$. 
For example, it was shown by Kasparov~\cite[Section 7]{Kasparov80} that equivalence classes in 
$\Ext^{-1}( A, B\otimes \calK)$ satisfy a notion of homotopy invariance 
in the case that $ A$ is separable
and $B$ is $\sigma$-unital. 

Here we review a sufficient condition for an extension to be invertible. Namely, the existence of a completely positive semi-splitting.

\begin{defn}
We say that an extension $0 \to B \to E \xrightarrow{q} A \to 0$ is semi-split if there is a completely positive and contractive map 
$\rho: A \to E$ such that $q \circ \rho = \mathrm{Id}_A$. 
\end{defn}

We note that the map $\rho: A \to E$ is a $\ast$-homomorphism modulo the ideal $B$, 
\[
    q( \rho(a_1a_2) )= a_1 a_2  = q(\rho(a_1)) q(\rho(a_2)) = q( \rho(a_1) \rho(a_2) )
\]
and so $\rho(a_1) \rho(a_2) = \rho(a_1a_2) + b$ for some $b \in B$. The extension will be trivial precisely when 
$\rho$ is a $\ast$-homomorphism.

Semi-split extensions provide an inverse to the construction in Example \ref{ex:Generalised_Toeplitz2}. We first replace 
the algebra $B$ with $B\otimes \calK$ and consider a semi-split extension 
\[
   0 \to B \otimes \calK \to E \to A \to 0
\]
such that $B \otimes \calK$ is an essential ideal in $E$. Then we can consider 
$E \subset \Mult(B \otimes \calK) \cong \End_B( \calH_B)$ and 
$\rho: A \to \Mult(B \otimes \calK)$.
We then apply 
Kasparov's generalised Stinespring theorem, which says that any completely positive map 
$\rho$ is the compression of a $\ast$-homomorphism.

\begin{thm}[{Kasparov's Generalised Stinespring Theorem, \cite[Theorem 3]{KasparovStinespring}}]
Let $A$ be a separable $C^*$-algebra and  $B$ a $\sigma$-unital $C^*$-algebra. 
Assume that $\rho: A \to \Mult( B\otimes\calK)$ is a completely positive map such that $\|\rho\| \leq 1$.  
Then there exists a $\ast$-homomorphism $\pi :A \to \Mult\big( M_2( B\otimes\calK) \big) \cong \End_B\big( \calH_B^{\oplus 2} \big)$ 
such that for all $a \in A$,  
$\rho(a) \oplus 0 = P \pi(a) P$, where $P: \End_B\big( \calH_B \oplus \calH_B \big) \to \End_B(\calH_B)$ is the 
projection onto the first summand.
If $A$ is unital and $\rho$ a unital map, then $\pi$ is unital.
\end{thm}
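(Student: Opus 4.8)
The plan is to run the classical Stinespring dilation construction in the Hilbert $C^*$-module category and then invoke Kasparov's stabilisation theorem to repackage the output in the stated form; throughout I would identify $\Mult(B\otimes\calK) \cong \End_B(\calH_B)$ with $\calH_B = \calH\otimes B$ the standard module. First I would build the Stinespring module: on the algebraic tensor product $A \odot \calH_B$ define the $B$-valued sesquilinear form
\[
  \Big\langle {\textstyle\sum_i a_i \otimes \xi_i},\ {\textstyle\sum_j a_j' \otimes \xi_j'} \Big\rangle_B
  := \sum_{i,j} \big( \xi_i \,\big|\, \rho(a_i^* a_j')\, \xi_j' \big)_B .
\]
Complete positivity of $\rho$, viewed as a map $A \to \End_B(\calH_B)$, says that $\big[\rho(a_i^* a_j)\big]$ is positive in $M_n\big(\End_B(\calH_B)\big)$, which makes this form positive semi-definite; quotienting by the submodule of null vectors (a submodule by Cauchy--Schwarz) and completing yields a Hilbert $B$-module $X_B$. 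Because $A$ is separable and $\calH_B$ is countably generated, $X_B$ is countably generated.

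Next I would produce a representation and an intertwiner. Left multiplication $\tilde\pi(a)\big(a'\otimes\xi\big) = (aa')\otimes\xi$ descends to a $\ast$-homomorphism $\tilde\pi\colon A \to \End_B(X)$, the required boundedness (and hence well-definedness on the completed quotient) coming from the matrix estimate $\big[a_i^*\,a^*a\,a_j\big] \leq \|a\|^2\,\big[a_i^* a_j\big]$ in $M_n(A)$ together with $\|\rho\|\leq 1$. For the intertwiner $V\colon \calH_B \to X_B$, pick a self-adjoint approximate unit $(u_\lambda)$ of $A$ and set $V\xi := \lim_\lambda [\,u_\lambda \otimes \xi\,]$; this net is Cauchy because $\big(V\xi \mid V\xi\big)_B$ is the norm-limit of $(\xi \mid \rho(u_\lambda^2)\xi)_B$ with $0 \leq \rho(u_\lambda^2) \leq \one$. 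I would then check that $V$ is adjointable with $V^*[\,a'\otimes\eta\,] = \rho(a')\,\eta$, that $\|V\|\leq 1$, and that $V^*\,\tilde\pi(a)\,V = \rho(a)$ for all $a \in A$; if $A$ and $\rho$ are unital, $V\xi = [\,\one_A\otimes\xi\,]$ is an isometry and $\tilde\pi$ is unital.

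To assemble the stated form, I would put $W := \big(\begin{smallmatrix} V \\ (\one - V^*V)^{1/2}\end{smallmatrix}\big)\colon \calH_B \to Y_B := X_B \oplus \calH_B$, so that $W^*W = \one$ (an adjointable isometry) and $W^*\,\big(\tilde\pi(a)\oplus 0\big)\,W = \rho(a)$. By Kasparov's stabilisation theorem $Y_B \cong \calH_B$, and moreover $Y_B \cong \calH_B \oplus (\one - WW^*)\,Y_B$ (splitting off the range of $W$); fixing a unitary $J$ of $\calH_B \oplus Y_B \cong \calH_B^{\oplus 2}$ that carries the first summand isometrically onto $W(\calH_B) \subset Y_B$, the map $\pi := J^*\,\big( 0 \oplus (\tilde\pi \oplus 0)(\,\cdot\,)\big)\,J$ on $\calH_B^{\oplus 2}$ is a $\ast$-homomorphism satisfying $P\,\pi(a)\,P = \rho(a)\oplus 0$ with $P$ the projection onto the first summand. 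In the unital case one takes $W = V$, so no defect term appears and $\pi$ stays unital.

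I expect the main obstacle to be the intertwiner step: unlike on Hilbert spaces, a bounded module map need not be adjointable, so one must verify directly that $V$ is adjointable and (in the non-unital case) that the limit defining $V$ even exists --- this is exactly the point where the hypothesis $\|\rho\|\leq 1$ is indispensable. Given the stabilisation theorem, the final assembly is then essentially formal bookkeeping with complemented submodules.
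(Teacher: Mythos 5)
The paper does not prove this theorem; it is quoted directly from Kasparov \cite[Theorem 3]{KasparovStinespring}. Your overall plan (KSGNS dilation, complementing the intertwiner, then invoking the stabilisation theorem to land on $\calH_B^{\oplus 2}$) is the right one and essentially follows the classical route. However, there is a genuine gap in the step you yourself flag as the main obstacle, and the hypothesis $\|\rho\|\le 1$ does not rescue the argument as written.

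The problem is the definition $V\xi := \lim_\lambda [u_\lambda\otimes\xi]$. You assert the net is Cauchy ``because $(V\xi\mid V\xi)_B$ is the norm-limit of $(\xi\mid\rho(u_\lambda^2)\xi)_B$ with $0\le\rho(u_\lambda^2)\le\one$,'' but this is circular: the existence of that norm limit is precisely what needs proving. A bounded increasing net of positive elements in a $C^*$-algebra $B$ need not converge in norm, and for general $B$ it does not here. Concretely, take $A=C_0((0,1])$, $B=C([0,1])$, and let $\rho$ be the inclusion $A\hookrightarrow B\subset\Mult(B\otimes\calK)$ (acting on a rank-one corner); with $u_n(t)=\min(nt,1)$ one has $(\xi\mid\rho(u_n^2)\xi)_B = u_n^2$, which increases pointwise to $\chi_{(0,1]}\notin C([0,1])$ and so has no norm limit. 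Worse, in this example no adjointable $V\colon\calH_B\to X_B$ with $V^*\tilde\pi(a)V=\rho(a)$ exists at all on the module $X_B$ built from $A\odot\calH_B$: any candidate would have to be multiplication by a $g\in C([0,1])$ with $g(0)=0$ and $|g|^2=1$ on $(0,1]$, which is impossible. So the non-unital construction fails, not merely its justification.

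The correct repair --- and this is where $\|\rho\|\le 1$ is genuinely used --- is Kasparov's: first unitise. Because $\rho$ is completely positive and contractive, the map $\tilde\rho\colon A^\sim\to\Mult(B\otimes\calK)$, $\tilde\rho(a+\lambda\one)=\rho(a)+\lambda\one$, is completely positive and unital; contractivity is exactly what makes $\tilde\rho$ positive. One then runs your KSGNS construction on $A^\sim\odot\calH_B$ rather than $A\odot\calH_B$. The intertwiner is simply $V\xi=[\one_{A^\sim}\otimes\xi]$ --- no approximate-unit limit, no Cauchy question --- and it is manifestly adjointable with $V^*[a'\otimes\eta]=\tilde\rho(a')\eta$. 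Restricting the resulting $\ast$-homomorphism from $A^\sim$ to $A$ gives $V^*\tilde\pi(a)V=\rho(a)$ for $a\in A$, and the rest of your assembly (the isometry $W$, stabilisation, the unitary $J$) goes through verbatim. In the unital case $A=A^\sim$ and $V$ is an isometry from the start, matching your remark. The remaining ingredients of your write-up --- positivity of the form from complete positivity, boundedness of $\tilde\pi(a)$ via the matrix inequality $[a_i^*a^*aa_j]\le\|a\|^2[a_i^*a_j]$, the formula for $V^*$, the isometry $W$ with $W^*W=\one$, and the final complementation using Kasparov stabilisation --- are all correct.
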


\begin{cor} \label{cor:compact_commutators_with_Kasparov_proj}
The representation $\pi: A \to \End_B\big( \calH_B^{\oplus 2} \big)$ is such that for all $a \in A$, 
$[P, \pi(a)] \in \mathbb{K}_B\big( \calH_B^{\oplus 2} \big) \cong M_2( B\otimes\calK)$. 
\end{cor}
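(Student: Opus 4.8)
The plan is to use exactly two inputs that are already in hand: the Stinespring identity $P\pi(a)P = \rho(a)\oplus 0$ from the theorem above, and the fact, recorded just after the definition of a semi-split extension and applied here with the ideal $B\otimes\calK \cong \mathbb{K}_B(\calH_B)$, that the completely positive section $\rho$ is multiplicative modulo that ideal, i.e.\ $\rho(a_1a_2) - \rho(a_1)\rho(a_2) \in \mathbb{K}_B(\calH_B)$ for all $a_1,a_2 \in A$. First I would decompose the commutator along $\calH_B^{\oplus 2} = \calH_B \oplus \calH_B$:
\[
  [P,\pi(a)] = P\pi(a)(\one-P) - (\one-P)\pi(a)P .
\]
Since $\pi$ is a $\ast$-homomorphism, $\big((\one-P)\pi(a)P\big)^{*} = P\pi(a^{*})(\one-P)$, and $\mathbb{K}_B(\calH_B^{\oplus 2})$ is self-adjoint, so it suffices to show that the single off-diagonal block $x(a) := P\pi(a)(\one-P)$ lies in $\mathbb{K}_B(\calH_B^{\oplus 2})$ for every $a$; applying this to $a^{*}$ and taking adjoints then disposes of the other term.

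Next I would compute $x(a)x(a)^{*}$ using only that $\pi$ is a $\ast$-homomorphism (so $\pi(a)\pi(a)^{*} = \pi(aa^{*})$ and $\pi(a)^{*} = \pi(a^{*})$) and that $\rho$, being completely positive, is $\ast$-preserving:
\[
  x(a)x(a)^{*} = P\pi(a)(\one-P)\pi(a)^{*}P = P\pi(aa^{*})P - \big(P\pi(a)P\big)\big(P\pi(a)^{*}P\big) = \big(\rho(aa^{*}) - \rho(a)\rho(a^{*})\big)\oplus 0 .
\]
By multiplicativity of $\rho$ modulo $\mathbb{K}_B(\calH_B)$, the right-hand side lies in $\mathbb{K}_B(\calH_B)\oplus 0 \subset \mathbb{K}_B(\calH_B^{\oplus 2})$. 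Finally, writing $q$ for the quotient onto the generalised Calkin algebra $\calQ_B(\calH_B^{\oplus 2})$, the $C^{*}$-identity gives $\|q(x(a))\|^{2} = \|q(x(a))q(x(a))^{*}\| = \|q\big(x(a)x(a)^{*}\big)\| = 0$, so $q(x(a)) = 0$, i.e.\ $x(a) \in \mathbb{K}_B(\calH_B^{\oplus 2})$, as required.

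There is no substantial obstacle here: the argument is short and is essentially the standard one showing that a completely positive splitting of a $\ast$-homomorphism dilates to an essential Toeplitz-type extension (cf.\ the generalised Toeplitz construction earlier in the appendix). The only points that need care are keeping the ideal consistently identified as $B\otimes\calK \cong \mathbb{K}_B(\calH_B)$ throughout, and making sure the block computation for $x(a)x(a)^{*}$ invokes the exact multiplicativity of $\pi$ and the only-approximate multiplicativity of $\rho$ in the right places — it is the interplay of the two, not either one alone, that yields compactness of the commutator.
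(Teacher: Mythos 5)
Your proof is correct, and it is built from the same two ingredients the paper uses: the Stinespring identity $P\pi(a)P = \rho(a)\oplus 0$ and the fact that the completely positive section $\rho$ is multiplicative modulo the ideal $B\otimes\calK$. Indeed, your key computation $x(a)x(a)^{*} = (\rho(aa^{*}) - \rho(a)\rho(a^{*}))\oplus 0$ is exactly what the paper's first display becomes when one sets $a_2 = a_1^{*}$ (the paper writes $P[\pi(a_1),P]\pi(a_2)P = P\pi(a_1)P\pi(a_2)P - P\pi(a_1a_2)P$, and with $a_2 = a_1^{*}$ this is $-x(a_1)x(a_1)^{*}$ up to sign). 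Where you diverge — for the better — is in the finish. You pass to the generalised Calkin algebra and invoke the $C^{*}$-identity $\|q(x(a))\|^{2} = \|q(x(a)x(a)^{*})\| = 0$ to conclude that the off-diagonal block is compact, and then recover the commutator by an adjoint. The paper instead closes the argument by observing that $\pi(a)\notin M_2(B\otimes\calK)$ for $a\neq 0$ and that $P\notin M_2(B\otimes\calK)$; but those two facts, while true, do not by themselves yield compactness of $[P,\pi(a)]$ — they are rather the hypotheses one needs to verify injectivity of the Busby map $\tau_P$ in the sense of Example~\ref{ex:Generalised_Toeplitz2}, and it reads as though the two verifications were conflated. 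So your route is not just valid, it is the cleaner and more self-contained way to reach the stated conclusion. One minor stylistic remark: you do not need to invoke adjoints twice; once you know $x(a)\in\mathbb{K}_B$ for every $a\in A$, then $(\one - P)\pi(a)P = x(a^{*})^{*}$ is automatically compact, and $[P,\pi(a)] = x(a) - x(a^{*})^{*}$ follows immediately, which is what you did — just phrase it as a direct observation rather than a reduction.
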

\begin{proof}
For any $a_1, a_2 \in A$, we have that $\rho(a_1) \rho(a_2) = \rho(a_1a_2) + b$ with $b \in B\otimes \calK$, which implies 
\[
    P\pi(a_1) P \pi(a_1) P - P \pi(a_1 a_2)   P  =  P [\pi(a_1), P] \pi(a_2) P \in M_2( B \otimes \calK)
\]
If $\pi(a) \in M_2( B \otimes \calK)$, then $P\pi(a) P = \rho(a) \oplus 0 \in M_2(B \otimes \calK)$ and so 
$q \circ \rho( a) = 0 = a$. Hence $\pi(a) \in M_2( B \otimes \calK)$ if and only if $a = 0$. 
Similarly, $P$ can not be  in $M_2( B \otimes \calK)$ or else $P \pi(a) P \in M_2( B \otimes \calK)$ for any 
$a \in A$ and so $q( P \pi(a) P) = 0 = q( \rho(a) \oplus 0) = a$ for all $a \in A$, a contradiction. 
So it follows that $[P, \pi(a) ] \in M_2( B \otimes \calK)$ for all $a \in A$.
\end{proof}

\begin{cor}
The map 
\[
   \tau_P: A \to \calQ(B \otimes \calK), \qquad \tau_P(a) = q\big( P \pi(a) P\big)
\]
is an injective $\ast$-homomorphism that defines an invertible element in $\Ext^{-1} \big( A, \, B \otimes \calK \big)$.
\end{cor}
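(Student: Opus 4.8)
The plan is to verify, in order, that $\tau_P$ is multiplicative, that it is injective, and that $[\tau_P]$ admits an inverse in $\Ext(A,B\otimes\calK)$, all from the identity $P\pi(a)P = \rho(a)\oplus 0$ supplied by Kasparov's generalised Stinespring theorem together with Corollary \ref{cor:compact_commutators_with_Kasparov_proj}. For multiplicativity, recall that $\rho$ is a $\ast$-homomorphism modulo the ideal $B\otimes\calK$, so $\rho(a_1)\rho(a_2) = \rho(a_1a_2) + b$ with $b\in B\otimes\calK$; then
\[
   P\pi(a_1)P\;P\pi(a_2)P = \big(\rho(a_1)\oplus 0\big)\big(\rho(a_2)\oplus 0\big) = \big(\rho(a_1a_2)+b\big)\oplus 0 = P\pi(a_1a_2)P + (b\oplus 0),
\]
and since $b\oplus 0 \in M_2(B\otimes\calK)$, applying $q$ gives $\tau_P(a_1)\tau_P(a_2) = \tau_P(a_1a_2)$; linearity and $\ast$-preservation are immediate, so $\tau_P$ is a $\ast$-homomorphism. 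For injectivity, $\tau_P(a) = 0$ forces $\rho(a)\oplus 0 \in M_2(B\otimes\calK)$, i.e.\ $\rho(a) \in B\otimes\calK = \ker(q_E\colon E\to A)$; but $\rho(a)\in E$ and $q_E\circ\rho = \mathrm{Id}_A$, so $a = q_E(\rho(a)) = 0$. This computation simultaneously identifies $\tau_P$, up to strong equivalence under the fixed isomorphism $M_2(\calK)\cong\calK$ of Equation \eqref{eq_stand_iso}, with the Busby invariant of the original semi-split extension, so $[\tau_P]$ is exactly the class one wants to invert.

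For invertibility I would produce the inverse from the complementary projection $\one - P$. Set $\tau'(a) = q\big((\one-P)\pi(a)(\one-P)\big)$. Since $[P,\pi(a)]\in\mathbb{K}_B(\calH_B^{\oplus 2})$ by Corollary \ref{cor:compact_commutators_with_Kasparov_proj}, the off-diagonal blocks $P\pi(a)(\one-P) = [P,\pi(a)](\one-P)$ and $(\one-P)\pi(a)P = -[P,\pi(a)]P$ are compact, so $q\circ\pi$, viewed in $M_2\big(\calQ(B\otimes\calK)\big)\cong\calQ(M_2(B\otimes\calK))$, is block-diagonal with corners representing $\tau_P$ and $\tau'$; via the standard identification $M_2\big(\calQ(B\otimes\calK)\big)\cong\calQ(B\otimes\calK)$ this says that $q\circ\pi$ is strongly equivalent to the Busby invariant $\tau_P\oplus\tau'$ of the sum extension. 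But $q\circ\pi$ lifts to the genuine $\ast$-homomorphism $\pi\colon A\to\Mult(M_2(B\otimes\calK))$, hence represents the trivial extension; therefore $[\tau_P]\oplus[\tau'] = 0$ in $\Ext(A,B\otimes\calK)$, i.e.\ $[\tau_P]\in\Ext^{-1}(A,B\otimes\calK)$ with inverse $[\tau']$.

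The main point needing care is the status of $\tau'$ inside the paper's semigroup formalism rather than anything conceptual: $\tau_P$ is injective, but $\tau'$ need not be, since $(\one-P)\pi(a)(\one-P)$ can fail to be compact even when $a\neq 0$ does not force $P\pi(a)P$ to be compact. I would deal with this either by observing that nothing in the computation of $[\tau_P]\oplus[\tau']$ uses injectivity of the second summand, or — to remain within the stated convention of injective Busby invariants — by replacing $\tau'$ with $\tau'\oplus\sigma$ for an injective trivial extension $\sigma\colon A\to\calQ(B\otimes\calK)$, which alters neither the class ($[\sigma]=0$) nor the conclusion. The remaining care is bookkeeping: keeping the identifications $M_2(\calK)\cong\calK$, $M_2\big(\Mult(B\otimes\calK)\big)\cong\Mult(B\otimes\calK)$ and $M_2\big(\calQ(B\otimes\calK)\big)\cong\calQ(B\otimes\calK)$ from Equation \eqref{eq_stand_iso} mutually consistent, so that ``$\oplus$'' on Busby invariants agrees with the semigroup addition and $\rho(a)\oplus 0$ is read correctly as a multiplier of $M_2(B\otimes\calK)$.
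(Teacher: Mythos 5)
Your proof is correct and follows essentially the same route as the paper: the inverse is produced by $\tau_{\one-P}(a) = q\big((\one-P)\pi(a)(\one-P)\big)$, the compactness of the off-diagonal blocks (Corollary \ref{cor:compact_commutators_with_Kasparov_proj}) shows $\tau_P\oplus\tau_{\one-P}$ agrees with $q\circ\pi$, and $q\circ\pi$ lifts to the genuine $\ast$-homomorphism $\pi$, hence represents the trivial class. The main difference is that you explicitly verify that $\tau_P$ is multiplicative and injective from $P\pi(a)P=\rho(a)\oplus 0$, whereas the paper leaves these points implicit; your argument there is correct and a worthwhile addition. Your worry about injectivity of $\tau_{\one-P}$ is not actually an obstacle under the paper's definition of $\Ext(A,B\otimes\calK)$ (which does not restrict to injective Busby invariants; the remark about injectivity earlier is a statement of intent rather than a constraint in the semigroup definition), but the fix you offer --- adding an injective degenerate summand --- is harmless and safe either way.
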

\begin{proof}
Kasparov's Stinespring Theorem also lets us define another $\ast$-homomorphism 
\[
  \tau_{\one- P}: A \to \calQ(B \otimes \calK), \qquad \tau_{\one -P}(a) = q\big( (\one -P) \pi(a) (\one - P)\big).
\]
For any $a \in A$, the off-diagonal terms $P\pi(a)(\one -P), (\one-P) \pi(a)P \in \mathbb{K}_B(\calH_B\oplus \calH_B)$ 
by Corollary \ref{cor:compact_commutators_with_Kasparov_proj}. In particular, we see that 
\[
    P \pi(a) P + (\one - P) \pi(a) (\one - P) - \pi(a) \in  \mathbb{K}_B(\calH_B\oplus \calH_B)
\]
and so the sum $[\tau_P] \oplus [\tau_{\one- P}]$ can be described by the completely positive map 
$a \mapsto \pi(a)$, which is a $\ast$-homomorphism. Hence the extension splits and is trivial in $\Ext(A, B\otimes \calK)$. 
\end{proof}

Not every extension of $C^*$-algebras is semi-split. A sufficient condition is that the quotient algebra $A$ is separable and 
nuclear~\cite{ChoiEffros}, where $A$ is nuclear if the identity map $\mathrm{Id}: A\to A$ can be 
approximated pointwise in norm by completely positive finite-rank contractions. 
Many important $C^*$-algebras are nuclear, such as commutative $C^*$-algebras, finite-dimensional 
$C^*$-algebras and $\calK(\calH)$. Furthermore, the class of nuclear $C^*$-algebras is closed 
under quotients, extensions, inductive limits and crossed products by amenable groups, 
see for example \cite[Theorem 15.8.2]{Blackadar}. A more thorough introduction 
to nuclear $C^*$-algebras can be found in the textbook~\cite{BrownOzawa}.

\subsection{Fredholm operators from semi-split extensions} \label{subsec:ext_boundary_map}

As discussed in the previous section, if the short exact sequence 
\[
   0 \to B \otimes \calK \to E \to A \to 0
\]
is semi-split, then Kasparov's Stinespring Theorem gives us a representation 
$\pi: A\to \End_B(\calH_B^{\oplus 2} )$ and a projection $P \in \End_B(\calH_B^{\oplus 2} )$ such that 
$[P, \pi(a)] \in \mathbb{K}_B(\calH_B)$ for all $a \in A$. 
We also remark that if $A$ is non-unital, $\pi$ has a unique extension to a unital $\ast$-homomorphism 
$\pi^\sim: A^\sim \to \End_B(\calH_B^{\oplus 2} )$ with $[ P, \pi^\sim( a + \lambda \one)] = [P, \pi(a) + \lambda \one] \in \mathbb{K}_B(\calH_B^{\oplus 2})$ for 
all $a + \lambda \one \in A^\sim$.
This information can be used to extract a Fredholm 
operator on $\calH_B$ and $K$-theoretic information on the algebra $B$.

\begin{lemma}
Let $A$ be a separable $C^*$-algebra, $B$ a $\sigma$-unital $C^*$-algebra  and $0 \to B \otimes \calK \to E \to A \to 0$ a semi-split extension with $B \otimes \calK$ an 
essential ideal in $E$. 
Then for any invertible element $w \in A^\sim$, $P \pi^\sim(w) P$ is a Fredholm operator on $P\cdot \calH_B^{\oplus 2} \cong \calH_B$.
\end{lemma}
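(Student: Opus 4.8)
The plan is to produce an explicit parametrix for $P\pi^\sim(w)P$ out of $w^{-1}$, using only that $\pi^\sim$ is a unital $\ast$-homomorphism and that $[P,\pi^\sim(a)] \in \mathbb{K}_B(\calH_B^{\oplus 2})$ for all $a \in A^\sim$ (the extension to $A^\sim$ of Corollary \ref{cor:compact_commutators_with_Kasparov_proj} recorded just above). Since $w \in A^\sim$ is invertible, $w^{-1} \in A^\sim$ as well, and I would set $Q := P\pi^\sim(w^{-1})P$. As $Q = PQP$, the operator $Q$ leaves the submodule $P\cdot\calH_B^{\oplus 2}$ invariant and restricts to an adjointable operator there; it is the candidate inverse of $P\pi^\sim(w)P$ modulo compacts.

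Next I would compute the two relevant products in $\End_B(\calH_B^{\oplus 2})$, inserting a commutator and using that $\mathbb{K}_B(\calH_B^{\oplus 2})$ is an ideal and $\pi^\sim(\one) = \one$:
\begin{align*}
   Q\cdot P\pi^\sim(w)P
   &= P\pi^\sim(w^{-1})\big(\pi^\sim(w)P + [P,\pi^\sim(w)]\big)P \\
   &= P\pi^\sim(w^{-1}w)P + P\pi^\sim(w^{-1})[P,\pi^\sim(w)]P
   = P + K_1,
\end{align*}
with $K_1 = PK_1P \in \mathbb{K}_B(\calH_B^{\oplus 2})$. The symmetric computation with the roles of $w$ and $w^{-1}$ exchanged gives $P\pi^\sim(w)P\cdot Q = P + K_2$ with $K_2 = PK_2P \in \mathbb{K}_B(\calH_B^{\oplus 2})$.

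Finally I would pass to the corner $C^*$-algebra. Since $P$ is the projection onto the first summand, $P\cdot\calH_B^{\oplus 2} = \calH_B\oplus 0 \cong \calH_B$, and under this identification $P\End_B(\calH_B^{\oplus 2})P \cong \End_B(\calH_B)$, carrying the ideal $P\mathbb{K}_B(\calH_B^{\oplus 2})P$ onto $\mathbb{K}_B(\calH_B)$, so that $\calQ_B(P\cdot\calH_B^{\oplus 2}) \cong \calQ_B(\calH_B)$. Because $K_1$ and $K_2$ lie in $\mathbb{K}_B(P\cdot\calH_B^{\oplus 2})$, the two displayed identities say precisely that $q(P\pi^\sim(w)P)$ is invertible in $\calQ_B(P\cdot\calH_B^{\oplus 2})$ with two-sided inverse $q(Q)$, i.e.\ $P\pi^\sim(w)P$ is Fredholm on $P\cdot\calH_B^{\oplus 2}\cong\calH_B$. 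The argument is elementary once the commutators are known to be compact; the only point that deserves attention is the passage to the corner algebra --- verifying that the compact operators on the submodule $P\cdot\calH_B^{\oplus 2}$ are exactly the corner $P\mathbb{K}_B(\calH_B^{\oplus 2})P$ --- which is what makes ``invertible modulo the corner compacts'' coincide with the stated notion of Fredholmness.
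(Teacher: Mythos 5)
Your proof is correct and follows essentially the same route as the paper's: both use the compactness of $[P,\pi^\sim(w)]$ to show that $P\pi^\sim(w^{-1})P$ is a two-sided inverse of $P\pi^\sim(w)P$ modulo compacts, hence the latter is Fredholm. The only difference is that you spell out the corner-algebra identification $P\mathbb{K}_B(\calH_B^{\oplus 2})P \cong \mathbb{K}_B(\calH_B)$ explicitly, which the paper leaves implicit.
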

\begin{proof}
Because $P$ has compact commutators with the representation of $A^\sim$, we can easily find an inverse modulo compact operators, 
\[
  P \pi(w)P P \pi(w^{-1}) P = P + P [\pi(w), P] \pi(w^{-1}) P = P + \mathbb{K}_B(\calH_B^{\oplus 2}) = \one_{P \calH_B} + \mathbb{K}_B(\calH_B^{\oplus 2}).
\]
Hence $q( P \pi(w) P) = \tau_P(w)$ is invertible and $P\pi(w) P$ is Fredholm.
\end{proof}

\begin{prop}[{\cite[Section 6--7]{Kasparov80}}] \label{prop:K1_to_Fredholm}
Let $A$ be a separable $C^*$-algebra, $B$ a $\sigma$-unital $C^*$-algebra  and $0 \to B \otimes \calK \to E \to A \to 0$ a semi-split 
extension with $B \otimes \calK$ an 
essential ideal in $E$. Then for any invertible $w \in A^\sim$, the Fredholm operator $P \pi(w) P \in \End_B(\calH_B)$ defines 
a class $[P \pi(w) P] \in KK(\C, B)$ that represents 
the image of $[w] \in K_1(A)$ under the boundary map $K_1(A) \xrightarrow{\delta} K_0(B)\cong KK(\C, B)$.
\end{prop}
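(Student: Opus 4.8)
The plan is to connect the abstract Fredholm operator $P\pi(w)P$ to the standard picture of the $K$-theory boundary map $\delta: K_1(A) \to K_0(B)$, which is defined via lifts of invertibles under the quotient map $q: E \to A$. The first step is to recall that the boundary map $\delta$ can be computed from the semi-split extension as follows: given an invertible $w \in A^\sim$, one uses the completely positive contractive lift $\rho: A^\sim \to \Mult(B\otimes\calK)$ (extended unitally) to obtain a lift $\rho(w)$ of $w$, which need not be invertible but satisfies $q(\rho(w)) = w$. A standard formula (see e.g. the treatment of the mapping-cone/Busby picture of $\delta$) then expresses $\delta[w]$ in terms of a projection built from $\rho(w)$ and $\rho(w^{-1})$ minus a trivial projection, landing in $K_0(B \otimes \calK) \cong K_0(B)$.

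The second step is to identify this with the Kasparov-theoretic class $[P\pi(w)P] \in KK(\C,B)$. By Kasparov's generalised Stinespring theorem we have $\rho(a) \oplus 0 = P\pi(a)P$ for all $a \in A^\sim$, so the lift $\rho(w)$ is literally the compression $P\pi(w)P$ viewed inside $\End_B(\calH_B^{\oplus 2})$. The preceding lemma shows $P\pi(w)P$ is Fredholm on $P\cdot\calH_B^{\oplus 2} \cong \calH_B$, with parametrix $P\pi(w^{-1})P$. The Kasparov module associated to this Fredholm operator, via Example~\ref{ex:BasicKasMod}, is exactly the cycle whose class one computes; the point is that the difference $\one - (P\pi(w)P)(P\pi(w^{-1})P)$ and its adjoint analogue are compact, so the bounded-transform/doubling construction produces a genuine Kasparov cycle representing $[P\pi(w)P]$. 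I would then match the idempotent $\left(\begin{smallmatrix} \rho(w)\rho(w^{-1}) & * \\ * & * \end{smallmatrix}\right)$ appearing in the $\delta$-formula (after the usual Whitehead-lemma manipulation with $2\times 2$ matrices) with the projection extracted from the Kasparov cycle, using that both are built from the same data $\rho(w)$, $\rho(w^{-1})$ up to homotopy through compact perturbations (invoking Remarks~\ref{Remark_KKgroup_notes}, in particular that compact perturbations do not change the $KK$-class).

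The third step is to handle the bookkeeping between $K$-theory of $B$ and of $B\otimes\calK$: the stabilisation isomorphism $K_0(B\otimes\calK)\cong K_0(B)$ and the identification $\mathbb{K}_B(\calH_B^{\oplus 2}) \cong M_2(B\otimes\calK)$ are used throughout and should be fixed compatibly with the isomorphisms in Equation~\eqref{eq_stand_iso}, so that the "$\oplus 0$" in Stinespring's theorem corresponds to the corner embedding used to define stabilisation. Once these identifications are pinned down, the two formulas for $\delta[w]$ — the classical six-term-sequence one and the Kasparov one via $[P\pi(w)P]$ — are the same element.

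The main obstacle I expect is the second step: carefully verifying that the projection-difference formula for $\delta$ coming from the completely positive lift genuinely agrees, as a $KK$-class, with the bounded-transform class of the Fredholm operator $P\pi(w)P$. This is essentially Kasparov's identification of the connecting map in $KK$ with the $C^*$-algebraic boundary map, proved in \cite[Sections~6--7]{Kasparov80}; the cleanest route is simply to cite that this is precisely the content of Kasparov's theorem, observing that our extension satisfies its hypotheses ($A$ separable, $B$ $\sigma$-unital, the extension semi-split with $B\otimes\calK$ essential), rather than reproving the correspondence. The remaining work is then only to check that our specific representatives $P$, $\pi$, and the Fredholm operator $P\pi(w)P$ are the ones that Kasparov's general construction produces from the semi-splitting, which is immediate from the generalised Stinespring theorem.
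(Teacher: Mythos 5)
Your proposal is correct and in the end takes the same route as the paper: the paper supplies no proof of this proposition, attributing it directly to \cite[Sections~6--7]{Kasparov80}, and your argument likewise concludes that the cleanest path is to cite Kasparov's identification of the $KK$-theoretic connecting map with the $C^*$-algebraic index map, after checking that the hypotheses hold and that $P$, $\pi$, and $P\pi(w)P$ are exactly the representatives his construction produces from the semi-splitting (which is immediate from the generalised Stinespring theorem). Your intermediate sketch matching the projection-difference formula for $\delta$ with the bounded-transform Kasparov class is an accurate and useful gloss, but as you observe it is precisely the content of Kasparov's theorem rather than a shortcut around it.
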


We can generalise the above result using  van Daele $K$-theory for $\Z_2$-graded algebras~\cite{vanDaele1, BKR2}.

\begin{prop}[{\cite[Proposition 5.9]{BKR2}}] \label{prop:DK_boundary_to_Fredholm}
Let $A$ and $B$ be $\Z_2$-graded algebras with $A$ separable and $B$ $\sigma$-unital. Suppose that 
$0 \to B \otimes \calK \to E \to A \to 0$ a $\Z_2$-graded semi-split extension with $B \otimes \calK$ an 
essential ideal in $E$. Then for any odd self-adjoint unitary $w \in A^{\sim}$, the Fredholm operator 
$P \pi(w) P \in \End_B(\calH_B)$ defines a class $[P \pi(w) P] \in KK(\C, B)$ that represents 
the image of $[w] \in DK(A)$ under the boundary map $DK(A) \xrightarrow{\delta} DK(B\, \hat\otimes \, \Cl_{1,0})\cong KK(\C, B)$.
\end{prop}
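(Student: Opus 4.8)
The plan is to realise the van Daele boundary map $\delta\colon DK(A)\to DK(B\hat\otimes\Cl_{1,0})$ concretely through Kasparov's generalised Stinespring dilation, and to recognise its output as the Fredholm class $[P\pi(w)P]$; this is the $\Z_2$-graded analogue of the ungraded statement for $K_1(A)\xrightarrow{\delta}K_0(B)$ recorded just above, and the proof runs in parallel, with the Clifford bookkeeping as the genuinely new ingredient. First I would recall the boundary construction in van Daele $K$-theory for the semi-split graded extension $0\to B\otimes\calK\to E\to A\to 0$. After a matrix amplification (which the stabilisation absorbs) we take $w\in A^\sim$ an odd self-adjoint unitary. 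Given a completely positive contractive splitting of $q$, symmetrising it with respect to the gradings of $E$ and $A$ produces a \emph{grading-preserving} completely positive contraction $\rho\colon A\to E$ with $q\circ\rho=\mathrm{Id}$; extending unitally, $x:=\rho^\sim(w)\in E^\sim$ is an odd self-adjoint contraction lifting $w$, and since $q^\sim(x^2)=w^2=\one$ we get $\one-x^2\in B\otimes\calK$. The class $\delta[w]$ is then represented by the odd self-adjoint unitary obtained from $x$ by the standard doubling: on the doubled module (carrying the tensor grading with $\Cl_{1,0}$) one forms
\[
   \hat x \;=\; \begin{pmatrix} x & (\one-x^2)^{1/2} \\ (\one-x^2)^{1/2} & -x \end{pmatrix},
\]
which satisfies $\hat x^2=\one$, is odd and self-adjoint, and whose deviation from the basepoint lies in the relevant matrix algebra over $(B\otimes\calK)\hat\otimes\Cl_{1,0}$.

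Next I would feed Kasparov's Stinespring Theorem into this. Applied to $\rho\colon A\to\Mult(B\otimes\calK)\cong\End_B(\calH_B)$ it yields an even $\ast$-homomorphism $\pi^\sim\colon A^\sim\to\End_B(\calH_B^{\oplus 2})$ and an even projection $P$ with $[P,\pi^\sim(a)]\in\mathbb{K}_B(\calH_B^{\oplus 2})$ and $\rho^\sim(a)\oplus 0=P\pi^\sim(a)P$. Hence $x\oplus 0=P\pi^\sim(w)P$ literally, so after deleting the degenerate zero summand the odd self-adjoint Fredholm operator entering $\delta[w]$ is exactly $F:=P\pi^\sim(w)P$ on $P\cdot\calH_B^{\oplus 2}\cong\calH_B$. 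It then remains to reconcile the two ways such an operator gives a class in $KK(\C,B)$: the isomorphism $DK\big(B\hat\otimes\Cl_{1,0}\big)\cong KK(\C,B)$ of~\cite{Roe04} is, essentially by construction, the identification of an odd self-adjoint unitary over $B\hat\otimes\Cl_{1,0}$ with a Kasparov $(\C,B)$-module, and it sends the van Daele element represented by $\hat x$ to the class of the Kasparov module $\big(\C,\,\calH_B\otimes\C^2,\,\tilde F\big)$ with $\tilde F=\chi\!\left(\begin{smallmatrix}0&F^*\\F&0\end{smallmatrix}\right)$, i.e. to $[P\pi(w)P]$ in the sense of the text. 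Finally one checks independence of choices: a different c.p. splitting or Stinespring dilation alters the extension only within its class in $\Ext^{-1}(A,B\otimes\calK)$, hence alters $\hat x$ only up to homotopy and addition of degenerate (trivial) van Daele elements; both homotopy-invariance statements apply since $A$ is separable and $B$ is $\sigma$-unital.

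The step I expect to be the main obstacle is the reconciliation just described: carrying the $\Z_2$-grading and the $\Cl_{1,0}$-action through coherently, so that the specific odd self-adjoint unitary $\hat x$ produced by the van Daele boundary formula is matched on the nose, and not merely abstractly, with $\tilde F$ under $DK(B\hat\otimes\Cl_{1,0})\cong KK(\C,B)$ — this is where the sign and grading conventions, and the precise choice of doubling formula, must all be pinned down consistently. A secondary technicality is the non-unital case, where one must verify that extending $\rho$ and $\pi$ unitally is compatible with $[P,\pi^\sim(a+\lambda\one)]=[P,\pi(a)]\in\mathbb{K}_B$ and that the van Daele basepoint is respected; but this is routine given the care already taken for the ungraded boundary map above.
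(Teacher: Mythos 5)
This proposition is not proved in the paper at all --- it is cited verbatim from~\cite[Proposition 5.9]{BKR2}. So there is no paper-internal argument to compare against, and your proposal has to stand or fall on its own.

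Your overall strategy is the right one, and it is essentially what~\cite{BKR2} does: (a) represent the van Daele boundary element of $w$ by a lift obtained from a \emph{grading-preserving} c.p.\ contractive semi-splitting $\rho$; (b) invoke Kasparov's Stinespring dilation so that the lift $x=\rho^\sim(w)$ is literally $P\pi^\sim(w)P$ on $P\cdot\calH_B^{\oplus 2}$; (c) identify the image under $DK(B\hat\otimes\Cl_{1,0})\cong KK(\C,B)$ with the class of the resulting Fredholm operator. You correctly single out step (c) as the place where the argument has to be made precise, but the sketch as written does leave a real gap there that is worth naming more concretely than ``conventions must all be pinned down.'' Two issues. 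First, you produce $\hat x$ as a $2\times 2$ matrix over $(B\otimes\calK)^\sim$ with entries $x,\ (\one-x^2)^{1/2},\ -x$, and declare it lives in $\big((B\otimes\calK)\hat\otimes\Cl_{1,0}\big)^\sim$ ``on the doubled module carrying the tensor grading.'' But $M_2$ with the checkerboard grading is $\Cl_{1,1}$, not $\Cl_{1,0}$; to land in $\Cl_{1,0}$ one should instead write the boundary element intrinsically, as $x\hat\otimes\one + (\one-x^2)^{1/2}\hat\otimes e$ with $e$ the odd self-adjoint generator, and only then fix a representation. The matrix you wrote is the image of this element under \emph{one particular} faithful representation of $\Cl_{1,0}$, so it is not wrong, but the justification has to pass through this identification explicitly; otherwise it is not clear you have an element of the right algebra. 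Second, and more importantly, the paper's symbol $[F]$ for a Fredholm $F$ means, by Remark~\ref{Remark_KKgroup_notes}(2), the class of the \emph{ungraded} doubling $\big(\C,\ X_B\otimes\C^2,\ \chi\!\left(\begin{smallmatrix}0&F^*\\F&0\end{smallmatrix}\right)\big)$, whereas your $F=P\pi^\sim(w)P$ is already an odd self-adjoint Fredholm operator on a $\Z_2$-graded module, whose natural Kasparov class is that of $(\C,\,P\calH_B^{\oplus 2},\,F)$ (no doubling). These two classes agree, but this is a nontrivial lemma (a degenerate-summand / rotation homotopy argument), and your sketch collapses it into the single phrase ``i.e.\ to $[P\pi(w)P]$ in the sense of the text.'' That is exactly the step that has to be spelled out; without it the proof is not complete. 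The remainder --- symmetrising $\rho$ to be grading-preserving by averaging over the $\Z_2$-action, the equality $x\oplus 0=P\pi^\sim(w)P$, the well-definedness modulo the choice of splitting and dilation by homotopy invariance of $\Ext^{-1}$ --- is correct and stated at the right level of detail.
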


\end{document}